\def\ad{\mathop{\rm ad}\nolimits}
\newtheorem{theorem}{Theorem}
\newtheorem{lemma}[theorem]{Lemma}
\newtheorem{proposition}[theorem]{Proposition}
\newtheorem{remark}[theorem]{Remark}
\newenvironment{proof}[1][Proof]{\noindent\textbf{#1.} }{\ \rule{0.5em}{0.5em}}
\newcommand{\G}[1]{\mathfrak{#1}}
\def\ot{\otimes}
\newcommand{\C}[1]{\mathcal{#1}}
\def\rrbiprod{{\cdot\kern-.33em\triangleright}}
\def\lrbiprod{{\cdot\kern-.44em\triangleleft}}
\begin{document}

\title{On Extensions, Lie-Poisson Systems, and Dissipations}
\maketitle
\begin{center}
\begin{large}
\author{Oğul Esen*\footnote{oesen@gtu.edu.tr}, Gökhan Özcan*\footnote{gokhanozcan@gtu.edu.tr} and Serkan Sütlü**\footnote{serkan.sutlu@isikun.edu.tr}
\bigskip \\
*Department of Mathematics, Gebze Technical University, \\ 41400 Gebze-Kocaeli, Turkey \bigskip \\ 
** Department of Mathematics, I\c{s}ık University, \\ 34980 \c{S}ile-\.{I}stanbul, Turkey\\ 
}
\end{large}
\end{center}

\begin{abstract} On the dual space of \textit{extended structure}, equations governing the collective motion of two mutually interacting Lie-Poisson systems are derived. By including a twisted 2-cocycle term, this novel construction is providing the most general realization of (de)coupling of Lie-Poisson systems. A double cross sum (matched pair) of 2-cocycle extensions are constructed. The conditions are determined for this double cross sum to be a 2-cocycle extension by itself. On the dual spaces, Lie-Poisson equations are computed. We complement the discussion by presenting a double cross sum of some symmetric brackets, such as double bracket, Cartan-Killing bracket, Casimir dissipation bracket, and Hamilton dissipation bracket. Accordingly, the collective motion of two mutually interacting irreversible dynamics, as well as mutually interacting metriplectic flows, are obtained. The theoretical results are illustrated in three examples. As an infinite-dimensional physical model, decompositions of the BBGKY hierarchy are presented. As finite-dimensional examples, the coupling of two Heisenberg algebras and coupling of two copies of $3D$ dynamics are studied. 

\textbf{MSC2020:} 53D17; 37J37.

\textbf{Key words:} Lie-Poisson Equation; Metriplectic System; Extended Structure.
\end{abstract}

\tableofcontents

\setlength{\parskip}{4mm}
\onehalfspacing

\section{Introduction}

It is a well-known fact that two mutually interacting dynamical/mechanical systems, when coupled, cannot preserve their individual motions in the collective system. This is manifested in the equation of motion of the collective system as the existence of additional terms, to those belonging to the individual subsystems. It is the Hamiltonian realization of this collective motion, of two mutually interacting physical systems, that we study in the present paper. In order to be able to present the mutual interactions as Lie group / Lie algebra actions, we shall consider the systems whose configuration spaces are Lie groups, \cite{AzIz98,baurle1997,OLV,SaWe13}.  

If the configuration space of a physical system admits a Lie group structure, then the reduced Hamiltonian dynamics can be achieved on the dual space of the Lie algebra \cite{AbMa78,Av, LiMa12, MaRa13} since it carries a natural Poisson structure, called the Lie-Poisson bracket. Many physical systems fit into this geometry, such as the rigid body, fluid and plasma theories \cite{Ho08,HoScSt09}. Coupling two different characters of a physical system, such as the fluid motion under the EM field or the rigid body motion under the gravity \cite{Rati80}, are particular instances of coupled systems where only one-sided interaction is allowed. Such systems have been studied, in literature, by the semidirect product theory, which has been successfully established both in Lagrangian dynamics, see for example \cite{Cendra98,marsden1998symplectic}, and Hamiltonian dynamics, see for example \cite{marsden1984semidirect,marsden1984reduction}, see also \cite{VaPaEs20}.

\textbf{Double cross sum (matched pair) Lie groups.} Semidirect product Lie group constructions allow only one-sided action. As such, mutual actions are beyond the scope of the semidirect product theory. On the other hand, the double cross sum (matched pair) theory of \cite{Ma90, Ma902, Ma20} rises over two Lie groups with mutual actions, subject to certain compatibility conditions. As a result, the Cartesian product of two such Lie groups becomes a Lie group by itself, through a suitable multiplication built over the mutual actions, called the ``double cross sum'' of these Lie groups \cite{Ma90, Ma902, Ma20}. A double cross product Lie group, then, contains the constitutive Lie groups as trivially intersecting Lie subgroups. Historically, generalizations of the semidirect product theory may be traced back to \cite{Sz50,Sz58,szep1962sulle,Za42} under the name of the Zappa-Sz\'ep product, see also \cite{Br05}. 
The idea was revived later in \cite{Mack72}, as the ``product of subgroups'' (and as the ``double Lie group'' in \cite{LuWe90}), from the representation theory point of view. However, the name ``matched pair'' was first used in \cite{Sing70,Sing72} for the Hopf algebra extensions. A pair of groups were referred to as ``matched pairs'', for the first time, in \cite{Ta81} as yet another incarnation of the intimate relation between Hopf algebras and groups. We add that double cross sum (matched pair) Lie algebras are studied in \cite{KoMa88} under the name of ``twilled extension''. See \cite{OS21} for more details and a list of applications of the theory both in mathematics and physics. 

\textbf{Dynamics on double cross sums (matched pairs).} Double cross sum (matched pair) of Lie groups (algebras) provides a promising geometrical framework for studying the collective motion of two mutually interacting physical systems. However, the Hamiltonian (Lie-Poisson) theory over double cross sum groups (and their double cross sum Lie algebras) is developed only recently in \cite{OS17}. This Hamiltonian matched pair theory has immediately found applications in kinetic theory \cite{EsPaGr17}, and fluid theories \cite{EsGRGuPa19}. The Lagrangian counterpart of the double cross sums has been developed for the first order Euler-Poincar\'{e} theories in \cite{OS16}, and for the higher order theories in \cite{EsKuSu20}. For discrete Lagrangian dynamics in the framework of Lie groupoids, the matched pair approach has found applications in \cite{OS21}. 

Evidently, one may use the matched pair strategy to decouple a physical system into two of its interacting subsystems as well. So, one may study the constitutive subsystems in order to examine the whole system. In \cite{OS20}, it has been shown that non-relativistic collisionless Vlasov's plasma can be written as a non-trivial double cross sum (matched pair) of its two subdynamics. One of the constitutive subdynamics is the compressible Euler fluid, while the other is the kinetic moments of order $>2$ of the plasma density function. In this exhibition, the double cross sum (matched pair) decomposition is in harmony with the nature of physics as well. 

One of the motivations of this study is to find a proper algebraic/geometric framework in order to decompose the dynamics of Vlasov's kinetic moments at any order (see \cite{GiHoTr,GiHoTr08} for the geometry of kinetic moments). Such a trial is hopeless even in the realm of matched pair Hamiltonian dynamics since the graded character of the underlying Lie algebra manifests only two Lie subalgebra decompositions. One is to cut after the zeroth moment, and the other is to cut after the first moment. Any other cut yields a Lie subalgebra with a complementary space, which is merely a subspace. 
Even though the higher order cuts attract deep attention, the literature seems to contain a gap in this issue. Regarding the cut from the second moment, we refer the reader to \cite{EsGRGuPa19} for the $10$-moment kinetic theory which paves the way towards the whole Grad hierarchy \cite{grad1965boltzmann} including the entropic moments \cite{grmela2017hamiltonian}. 
We also refer to an incomplete list \cite{Le96,PeChMoTa15,Pe90} for the related works on the kinetic moments. We shall address this issue as the first goal of this paper. Since our approach is pure algebraic/geometric, we shall need to determine a proper Lie algebra extension before we dive into the details of this goal.

\textbf{Extended structures.} Lie algebra actions are not the only sources to couple two Lie algebras. More generally, Lie algebra extensions may also be formed by the weak-actions and (twisted) 2-cocycles as well. An ``extended structure'', introduced in  \cite{Agormili14, AgoreMili-book}, provides a method to couple a Lie algebra and a vector space in a universal way. More precisely, a Lie algebra and a vector space are coupled using the action of the Lie algebra on the vector space, and the corresponding weak \textit{action}  (encoded by a twisted 2-cocycle) of the linear space on the Lie algebra. 
From the decomposition point of view, this corresponds to the decomposition of a Lie algebra into a Lie subalgebra and its vector space complement. In particular, if the complementary space happens to be a Lie subalgebra, then the extended structure reduces to a double cross sum. Postponing the details into Section \ref{Sec-MP}, we shall for now be content with the following diagram.
\begin{equation}
\xymatrix{
&\text{Extended Structure}  \ar[ddl]|{\text{twisted cocycle is zero}}
\ar[ddr]|{\text{one of the actions is zero}}
\\ \\
\underset{\text{(Matched Pairs)}}{\text{Double Cross Sums}}  && \text{2-cocycle Extensions}}
\end{equation}
That is, the theory of extended structures accommodates both the matched pair theory and the theory of 2-cocycle extensions as particular instances.

\textbf{The first goal of the present work.}  
For the semidirect product Hamiltonian theory, introductions of the central extensions have already been studied, see for example, in \cite{MaMiPe98}. But such an extension is missing for the matched pair Hamiltonian theory. In accordance with this observation, the first goal of the present work is to fill this gap with a theory of Hamiltonian (Lie-Poisson) dynamics on the dual of \textit{extended structures}. This will be achieved in Section \ref{LP-Ext-Sec}. The rich geometry of extended structures enables us to couple a Lie-Poisson bracket with some external variables interacting in all possible ways. From the decomposition point of view, this corresponds to the decomposition of Lie-Poisson dynamics to one of its Lie-Poisson subdynamics, and a complementary subsystem, which is not necessarily Lie-Poisson. So, the Lie-Poisson bracket and the Lie-Poisson equations provided in Subsection \ref{2-LP-Coec} evidently applicable to any Lie-Poisson model. In other words, they determine the most general form of the decomposition. 
We remark that, in the matched pair Hamiltonian dynamics, the constitutive systems must be Lie-Poisson by itself. In other words, the matched pair Hamiltonian dynamics turns out to be now a particular instance of extended  Hamiltonian dynamics in Subsection \ref{2-LP-Coec}. 

Extended Hamiltonian framework in Subsection \ref{2-LP-Coec} fits very well with the cuts of order $\geq 2$ kinetic moments of the density function of the Vlasov's plasma. So it provides an algebraic/geometric answer to the problem motivating the first goal. This decomposition is now more or less direct in the realm of the work \cite{OS20}. We shall, on the other hand, postpone it to a future work where we plan to discuss the geometry with further physical intuitions. 

As an illustration of 
the model in Subsection \ref{2-LP-Coec}, in the present work, we address another phenomenon in plasma dynamics, namely; all possible decompositions of (3 particles) BBGKY (Bogoliubov-Born-Green-Kirkwood-Yvon) hierarchy \cite{Ha}. Accordingly, in Section \ref{examples}, we shall examine this concrete model in full detail in order to make the novel Lie-Poisson structures introduced in this paper more clear. In \cite{marsden1984hamiltonian}, it was proved that BBGKY hierarchy can be recasted as a Lie-Poisson equation for $n>3$ particles. Focusing on the case $n=3$,  which is missing in \cite{marsden1984hamiltonian}, we shall present two decompositions of the BBGKY dynamics; the matched pair (double cross sum) decomposition, and the extended structure decomposition. This way, we shall also be able to compare these two approaches. These decompositions will precisely determine the relationship between the moments of the $3$-particle plasma density function. Without the extended Hamiltonian dynamics, on the other hand, the latter decomposition wouldn't be possible.  

\textbf{The second goal of the present work.} 
Being Lie algebras, 2-cocycle extensions of Lie algebras may form matched pairs along with the properly coupled 2-cocycle terms. We shall derive the conditions for the matched pair of 2-cocycle extensions to be a 2-cocycle in Section \ref{Sec-Cop-co} as the second goal of this note. We record the result in Proposition \ref{Prop-Gokhan}. In addition, Lie-Poisson dynamics on the coupled system will be presented. 
The geometric framework, and the dynamical equations, will be illustrated in Section \ref{Sec-3D} via two copies of the Heisenberg algebra. It is well-known that the symplectic two-form on the two-dimensional Euclidean space can be used to determine an extended Lie algebra structure on the three-dimensional Euclidean space. These results with the Heisenberg algebra in $3D$ \cite{Ha15}. Being a nilpotent Lie algebra of class two, the Heisenberg algebra may be matched by itself, \cite{Ma20}. This provides an application of the extended Hamiltonian theory as well as Proposition \ref{Prop-Gokhan} and results with the physically interesting equations on the dual spaces. The coadjoint orbits of the Heisenberg Lie algebras read the canonical Hamiltonian formalism. So that coupling of two Heisenberg algebras provides a mutual coupling of two canonical Hamiltonian dynamics under mutual interaction, which was also missing in the literature.
 
\textbf{Coupling of dissipative systems.}
If a dynamical system is in the Hamiltonian form, then as a result of the skew-symmetry of the Poisson bracket, the Hamiltonian function is a conserved quantity \cite{MaRa13}. This geometric fact corresponds to the conservation of energy when applied to some physical problems. The time-reversal character of the Hamiltonian dynamics depends basically on this observation. Those systems violating the time-reversal property, therefore, could not be put into the Hamiltonian formulation. Nevertheless, one may achieve to add a (Rayleigh type) dissipative term to the Lie-Poisson dynamics by means of a linear operator from the dual space to the Lie algebra \cite{bloch96}. This naive strategy works very well for many physical problems. More generally, and perhaps a more geometric approach is to add an additional feature to the manifold. There are methods, in the literature, to achieve this. 

\textbf{GENERIC - Metriplectic systems.} In the early '80s, some extensions of the Poisson geometry are introduced independently in order to add dissipative terms into Hamiltonian formulations (see Subsection \ref{Sec-Generic}). These geometries are known today under the name as metriplectic dynamics \cite{Ka84,Ka85,Mo84,Mo86} or as GENERIC (General Equation for Non-Equilibrium Reversible Irreversible Coupling) \cite{Gr84,GrOt97,Ot05}. In metriplectic systems, the geometry is determined by two compatible brackets, namely a Poisson bracket and a (possibly semi-Riemannian) symmetric bracket. In GENERIC, which is more general  \cite{Mielke2011}, a dissipation potential is employed in order to arrive at the irreversible part of the dynamics. Accordingly, the Legendre transformation of dissipation potential determines the time irreversible part of the dynamics  \cite{PaGr18,Gr2018}. If the potential is quadratic, then one arrives at a symmetry bracket as a particular case. 

One of the problems in this coupling is to determine a proper symmetric bracket, or a dissipation potential compatible with the Poisson geometry (see Subsection \ref{Sec-Sym-Bra}). In the present work, we shall refer to geometric ways to obtain symmetric brackets; such as the double bracket \cite{Br88,Br93,Br94}, Cartan-Killing bracket \cite{Mo09}, and Casimir dissipation bracket \cite{GB13}. We are interested in these geometries since, in the Lie-Poisson framework, they may be defined by the Lie algebra bracket directly in an algorithmic way. In the present work, we shall study extensions/couplings of the symmetric brackets as well as the dissipative systems. The latter will be the third goal here.

\textbf{The third goal of the present work.} 
In order to present a complete picture, we shall study the couplings of the dissipative terms which are added to the Lie-Poisson dynamics as the third goal of this present work in Section \ref{Sec-C-DS}. In other words, our third goal is to couple two metriplectic systems under mutual actions. First, we aim to provide a way to couple two mutually interacting systems involving Rayleigh type dissipative terms. Later, we present couplings of the Double brackets, the Cartan-Killing brackets, and the Casimir dissipation brackets. 

As for the coupling problem, our emphasis shall be on $3D$ systems. Accordingly, we present two illustrations. One, given in Section \ref{Diss-Gen-Exp}, will be on the rigid body dynamics. We shall provide couplings of both reversible and irreversible rigid body dynamics under mutual interactions. From the decomposition point of view, this corresponds to the Iwasava decomposition of $SL(2,\mathbb{C)}$. The other, in Section \ref{Sec-3D}, will be to continue with the Heisenberg algebra, endowing the geometry with dissipations.

\textbf{Contents.} In the following section, we shall be exhibiting a brief summary of the preliminaries for the sake of the completeness of the work. These are including Hamiltonian dynamics and metriplectic dynamics. Section \ref{Sec-MP} is reserved for
the presentation of the extended structure, and its particular instances such as a matched pair of Lie algebras and 2-cocycle extensions. In Section \ref{LP-Ext-Sec}, Lie-Poisson dynamics are studied in the dual space of the extended structure. In Section \ref{examples}, we shall decompose BBGKY dynamics as an illustration of the theoretical results obtained in the previous sections. In Section \ref{Sec-Cop-co}, 
we shall determine the conditions for a matched pair 2-cocycle to be a 2-cocycle of a matched pair. Couplings of the symmetric brackets are given in Section \ref{Sec-C-DS}. In Sections \ref{Sec-3D} and \ref{Diss-Gen-Exp}, $3D$ examples are provided. 

\section{Fundamentals: Lie-Poisson Dynamics, and Dissipations}

\subsection{Hamiltonian Dynamics}

Consider a Poisson manifold $\left( P,\{\bullet , \bullet \}\right)$ \cite{LaPi12,Va12}. On this geometry, Hamilton's equation generated by a Hamiltonian function(al) $\mathcal{H}$ is defined to be
\begin{equation}
\dot{\textbf{z}}=\{\textbf{z},\mathcal{H}\},
\end{equation}
where $\mathbf{z}$ is in $P$. Define Hamiltonian vector field $X_{\mathcal{H}}$ for a Hamiltonian function(al) $\mathcal{H}$, as follows 
\begin{equation}
X_{\mathcal{H}}(\mathcal{F})=\{\mathcal{F},\mathcal{H}\}. 
\end{equation}
A function(al) $\mathcal{C}$ is called a Casimir function(al) if it commutes with all other function(al)s that is, $\{\mathcal{F},\mathcal{C}\}=0$  for all $\mathcal{F}$. If there does not exist any non-constant Casimir function(al) for a Poisson bracket, then we say that the Poisson bracket is non-degenerate. It should be noted that the Hamiltonian vector field generated by a Casimir function(al) $\mathcal{C}$ is identically zero.  
The characteristic distribution, that is the image space of all Hamiltonian vector fields is integrable. This reads a foliation of $P$ into a collection of symplectic leaves \cite{wei83}. That is, on each leaf, the Poisson bracket turns out to be non-degenerate. If the bracket is already non-degenerate on $P$ then there exists only one leaf, and $P$ turns out to be a symplectic manifold.

Skew-symmetry of Poisson bracket verifies that Hamiltonian function(al) is preserved throughout the motion. Since Hamiltonian function(al) is taken as the total energy in classical systems, we may call this property as the conservation of energy. This manifests  the reversible character of Hamiltonian dynamics.

Referring to Poisson bracket, we define a bivector field $\Lambda$  as follows
\begin{equation} \label{bivec-PoissonBra}
\Lambda(d\mathcal{F},d\mathcal{H}):=\{\mathcal{F},\mathcal{H}\}
\end{equation}
for all $\mathcal{F}$ and $\mathcal{H}$, \cite{DuZu06}. Here, $d\mathcal{F}$ and $d\mathcal{H}$ denote the de-Rham exterior derivatives. So that, we may alternatively introduce a Poisson manifold by a tuple $(P,\Lambda)$ consisting of a manifold and a bivector field. There is Schouten-Nijenhuis algebra on bivector fields \cite{deAzPeBu96}. In this picture, the Jacobi identity turns out to be the commutation of $\Lambda$ with itself under the Schouten-Nijenhuis bracket that is,
\begin{equation} \label{Poisson-cond}
[\Lambda,\Lambda]=0.
\end{equation} 

\textbf{Lie-Poisson systems.} Consider a Lie algebra $\mathfrak{K}$ equipped with a Lie bracket $[\bullet ,\bullet ]$, \cite{Ja79}. Dual $\mathfrak{K}^{\ast }$ admits a Poisson bracket, called Lie-Poisson bracket \cite{Ho08,HoScSt09,demethods2011,LiMa12,MaRa13}. 
For two function(al)s $\mathcal{F}$ and $\mathcal{H}$, the (plus/minus) Lie-Poisson bracket is defined to be 
\begin{equation} \label{LP-Bra}
\{\mathcal{F},\mathcal{H}\} ( \textbf{z} ) = \pm \Big\langle \textbf{z} ,\left[ \frac{\delta \mathcal{F}}{
\delta \textbf{z}},\frac{\delta \mathcal{H}}{\delta \textbf{z} }\right]\Big\rangle
\end{equation} 
where $\delta \mathcal{F}/\delta \textbf{z} $ is the partial derivative (for infinite dimensional cases, the Fr\'{e}chet derivative) of the  function(al) $\mathcal{F}$. Here, the pairing on the right hand side  is the duality between $\mathfrak{K}^*$ and $\mathfrak{K}$ whereas the bracket is the Lie algebra bracket on $\mathfrak{K}$. Note that, we assume the reflexivity condition on $\mathfrak{K}$, that is the double dual $\mathfrak{K}^{**}=\mathfrak{K}$. 
The dynamics of an observable $\mathcal{F}$, governed by a Hamiltonian function(al) $\mathcal{H}$, is then computed to be
\begin{equation}\label{aaa}
\dot{\mathcal{F}}=\{\mathcal{F},\mathcal{H}\} ( \textbf{z}  )
= \pm \Big\langle \textbf{z} ,\left[ \frac{\delta \mathcal{F}}{\delta \textbf{z} },\frac{\delta \mathcal{H}}{
\delta \textbf{z}}\right] \Big\rangle =\pm \Big\langle \textbf{z} ,-\ad_{\delta \mathcal{H}/
\delta \textbf{z}}\frac{\delta \mathcal{F}}{\delta \textbf{z} }\Big\rangle = \pm \Big\langle
\ad_{\delta \mathcal{H}/
\delta \textbf{z}}^{\ast }\textbf{z} ,\frac{\delta \mathcal{F}}{\delta \textbf{z} }
\Big\rangle.
\end{equation}
Here, $\ad_\textbf{x} \textbf{x}':=[\textbf{x},\textbf{x}']$ for all $\textbf{x}$ and $\textbf{x}'$ in $\mathfrak{K}$ is the (left) adjoint action of the Lie algebra $\mathfrak{K}$ on itself whereas $\ad^*$ is the (left) coadjoint action of the Lie algebra $\mathfrak{K}$ on the dual space $\mathfrak{K}^*$.
 Notice that $\ad^*_\textbf{x}$ is defined to be minus of the linear algebraic dual of $\ad_\textbf{x}$. Then, 
 we obtain the equation of motion governed by a Hamiltonian function(al) $\mathcal{H}$ as
\begin{equation} \label{eqnofmotion}
\dot{\textbf{z}} \mp \ad_{{\delta \mathcal{H}}/{\delta \textbf{z} }}^{\ast }\textbf{z}=0.
\end{equation}

\begin{remark}
There are a plus/minus notations in \eqref{LP-Bra} and \eqref{aaa}. A plus sign appears if the reduction (Lie-Poisson reduction) is performed referring to a right symmetry whereas a minus sign appears if the reduction is performed referring to a left symmetry. 
For the plasma dynamics (see Section \ref{examples}), we shall refer plus Lie-Poisson bracket since Vlasov's plasma has a right (called relabelling) symmetry. For finite-dimensional rigid body motion (see Section  \ref{Sec-3D}), we 
shall employ minus Lie-Poisson bracket.
\end{remark}

\textbf{Coordinate realizations.} Assume a (local) coordinate chart $(z_i)$ (we prefer subscripts since we only focus on the dual spaces) around a point $\textbf{z}$ in $P$. Then the Poisson bivector can be represented by a set of coefficient functions $\Lambda_{ij}$ determining a Poisson bracket as
\begin{equation} \label{PB-local}
\{\mathcal{F},\mathcal{H}\}=\Lambda_{ij} \frac{\partial \mathcal{F}}{\partial z_i}\frac{\partial \mathcal{H}}{\partial z_i}. 
\end{equation}
Then the equation of motion generated by a Hamiltonian function $\mathcal{H}$ becomes
\begin{equation}
\dot{z}_i=\Lambda_{ij} \frac{\partial \mathcal{H}}{\partial z_j}.
\end{equation}
Let us examine the Lie-Poisson structure which is defined on the dual of a finite dimensional Lie algebra. For this, assume that an $K$ dimensional Lie algebra $\mathfrak{K}$ admitting a basis $\{\textbf{k}_{i}\}=\{\textbf{k}_{1},\dots, \textbf{k}_{K}\}$. The Lie algebra bracket on $\mathfrak{K}$ determine a set of scalars $C_{ij}^{l}$, called structure constants, satisfying
\begin{equation}\label{cerceve}
   [\textbf{k}_{i},\textbf{k}_{j}]=C_{ij}^{m}\textbf{k}_{m},
\end{equation}
where the summation convention is assumed over the repeated indices. Note that, after fixing a basis, the structure constants define a Lie bracket in a unique way. One has the dual basis $\{\textbf{k}^i\}=\{\textbf{k}^{1},\dots, \textbf{k}^{K}\}$ on the dual space $\mathfrak{K}^*$. We denote an element of $\mathfrak{K}^*$ by $\textbf{z}=z_i\textbf{k}^i$ where the coordinates $\{z_1, \dots,z_N\}$ are being real numbers. The (plus/minus) Lie-Poisson bracket \eqref{LP-Bra} can be computed in this picture as 
\begin{equation} \label{LP-loc}
\{\mathcal{F},\mathcal{H}\}= \pm C^{n}_{ij} z_{n}  \frac {\partial \mathcal{F}}{\delta z_{i}} \frac {\partial \mathcal{H}}{\delta z_{j}}.
\end{equation}
The calculation in (\ref{LP-loc}) reads that the coefficients $\Lambda_{ij}$  of the (plus/minus) Poisson bivector (\ref{PB-local}) are determined through the linear relations
\begin{equation} \label{SC-LP}
\Lambda_{ij}=\pm C_{ij}^mz_m.
 \end{equation}
In this case, Lie-Poisson equations \eqref{eqnofmotion} are computed to be 
\begin{equation} \label{loc-LP-Eqn}
\dot{z}_j\mp C^{n}_{ij} z_{n}\frac{\partial \mathcal{H}}{\delta z_i }=0.
\end{equation}

\textbf{Rayleigh dissipation.} Let us present a simply way to add dissipation to Lie-Poisson dynamics. Define a linear transformation $\Upsilon $ from $\mathfrak{K}^*$ to $\mathfrak{K}$. We equip a dissipative term to the right hand side of the Lie-Poisson system \eqref{eqnofmotion} by simply adding $\mp ad^*_{\Upsilon(\textbf{z})} \textbf{z}$ that is,
\begin{equation}\label{RD-Eqn}
\dot{z}\mp ad^*_{\delta \mathcal{H} / \delta \textbf{z}} \textbf{z} = \mp ad^*_{\Upsilon(\textbf{z})} \textbf{z}
\end{equation}
see \cite{bloch96}. We ask that $\Upsilon$ be a gradient relative to a
certain metric at least on adjoint orbits. 
In the upcoming subsection, we introduce a geometric framework for obtaining dissipation in the Lie-Poisson setting.

\subsection{GENERIC (Metriplectic) Systems}
\label{Sec-Generic}

In order to add dissipative terms to Hamiltonian dynamics, two geometric models are addressed in the literature, namely metriplectic systems \cite{Ka84,Ka85,Mo84,Mo86} and GENERIC (an acronym for General Equation for Non-Equilibrium Reversible-Irreversible Coupling), \cite{GrOt97,Gr2018}. 
In metriplectic systems, Poisson geometry is equipped with a proper symmetric bracket (possibly semi-Riemannian). In GENERIC, which is more general  \cite{Mielke2011}, a dissipation potential is employed. For convex potential functions, after the Legendre transformation, one arrives at the irreversible part of the dynamics  \cite{PaGr18,Gr2018}. If the potential is quadratic, then one arrives at a symmetric bracket as a particular case. In this work, we are interested in dissipative dynamics defined through symmetric brackets \cite{Otto2001}. Let us depict this geometry in detail. 

Consider a Poisson manifold $(P,\{\bullet,\bullet\})$ and assume, additionally, a symmetric bracket $(\bullet,\bullet)$ on the space of smooth functions on $P$. The metriplectic bracket $[\vert\bullet,\bullet\vert]$ on the manifold $P$ is defined by the addition of the Poisson bracket and the symmetric bracket that is, for two function(al)s $\mathcal{H}$ and $\mathcal{F}$,
\begin{equation}\label{MB}
\left[\vert \mathcal{H},\mathcal{F} \vert\right] = \{\mathcal{H},\mathcal{F}\} + a (\mathcal{H},\mathcal{F}),
\end{equation}
where $a$ being a scalar. Note that, a metriplectic bracket is an example of a Leibniz bracket
\cite{OrPl04}. There is no unique way to define a symmetric bracket. One way is to introduce a (maybe semi-)Riemannian metric $\mathcal{G}$ on $M$. After a bracket is established, the next task is to determine the generating function(al)s. In accordance with this, we determine two different kinds of the GENERIC (metriplectic) systems \cite{Gu07}. 

In the first kind of GENERIC (metriplectic) systems, one refers  a single function(al) $\mathcal{F}$ to generate the equations of motion. So that, the dynamics is given as
\begin{equation}\label{MD1}
\dot{\textbf{z}}=\left[\vert \textbf{z},\mathcal{F}\vert\right]=\{\textbf{z},\mathcal{F} \}+ a (\textbf{z},\mathcal{F}) 
\end{equation}
for $\textbf{z}\in P$.
By, particularly, choosing the metric $\mathcal{G}$ positive definite, and by letting $a$ be equal to $-1$, one arrives at the dissipation of the generating function $\mathcal{F}$ in time. In the second kind of GENERIC (metriplectic) systems, there exist two function(al)s, namely a Hamiltonian function(al) $\mathcal{H}$ and an entropy-type function(al) $\mathcal{S}$. In this time, the dynamics is written as
\begin{equation}\label{MD2}
\dot{\textbf{z}}=\{\textbf{z},\mathcal{H} \} + a (\textbf{z},\mathcal{S}).
\end{equation}
If the following identities
\begin{equation}\label{C1}
\{\mathcal{S},\mathcal{H}\}= 0, \qquad (\mathcal{H},\mathcal{S})= 0
\end{equation}
hold, the  metriplectic dynamics (\ref{MD2}) can be generated by a single function (free energy function) $\mathcal{F}=\mathcal{H}-\mathcal{S}$ defined as the difference of the Hamiltonian and entropy-type function(al)s. For such kind of systems, Hamiltonian function(al) $\mathcal{H}$ is a conserved quantity whereas the dissipative behavior
of the system is interpreted as the increase
of entropy along trajectories assuming that $a$ is positive. This case is possible if the Poisson structure is degenerate and the symmetric tensor $G$ is at most semi-definite. In the following subsection, we introduce some examples of symmetric brackets that can be attached to the Lie-Poisson bracket.

\subsection{Some Symmetric Brackets on the Duals of Lie algebras}
\label{Sec-Sym-Bra}
In this subsection, we list some symmetric brackets available on the dual $\mathfrak{K}^*$ of a Lie algebra $\mathfrak{K}$. After a symmetric bracket is determined, say $(\bullet,\bullet)$ the irreversible dynamics governed by a generating function, say $\mathcal{S}$, is computed to be 
\begin{equation} \label{diss-dyn-eq}
\dot{\textbf{z}}=a(\textbf{z},\mathcal{S}),
\end{equation}
where $a$ is a real number. 
Assuming a basis $\{\mathbf{k}_i\}$, and the according local coordinates $\{z^i\}$ on $\mathfrak{K}$, the primary goal in this subsection is to define a symmetric tensor field 
\begin{equation}
\mathcal{G}=\mathcal{G}_{ij} dz^i\otimes dz^j
\end{equation}
on $\mathfrak{K}$. In the dual space $\mathfrak{K}^*$, we employ the dual basis $\{\textbf{k}^{i}\}$ and the coordinates $\{z_i\}$. There are two distinguished functions on the Lie-Poisson picture. Here  is a list: 

\textbf{Double bracket.} Recall the structure of a Lie algebra $\mathfrak{K}$ given in \eqref{cerceve}. In the Lie-Poisson setting, the coefficients of the Poisson bivector are determined by the structure constants of the Lie algebra as \eqref{SC-LP} that is, $\Lambda_{ij}=C_{ij}^lz_l$ \cite{Mo09}. For two functions $\mathcal{F}$ and $\mathcal{H}$, we define a symmetric bracket, literarily called double bracket, 
\begin{equation}\label{doubledissi}
	(\mathcal{F},\mathcal{H})^{(D)}=
\sum_{j}	\Lambda_{ij} \Lambda_{lj}\frac{\partial \mathcal{F}}{\partial z_i} \frac{\partial \mathcal{H}}{\partial z_l} =
	\sum_{j}C_{ij}^rC_{lj}^s z_r  z_s \frac{\partial \mathcal{F}}{\partial z_i} \frac{\partial \mathcal{H}}{\partial z_l},
\end{equation} 
see \cite{Br93}. So that we can write the coefficients of the symmetric bracket as in terms of the structure constants of the Lie algebra as follows
\begin{equation}
\mathcal{G}_{ij}=\sum_{l}C_{il}^rC_{jl}^s z_r  z_s.
\end{equation}
Now we define a metriplectic bracket on $\mathfrak{K}^*$ by adding Lie-Poisson bracket \eqref{LP-loc} and double bracket \eqref{doubledissi} that is
\begin{equation}
\dot{\mathcal{F}}=[\vert\mathcal{F},\mathcal{H}\vert]^{(D)}=\{\mathcal{F},\mathcal{H}\}+a
(\mathcal{F},\mathcal{H})^{(D)}.
\end{equation}
So that according to the definition \eqref{MD1} we compute the equation of motion as
\begin{equation} \label{aaaa}
\dot{z}_j \mp C^{m}_{ij} z_{m}\frac{\partial \mathcal{H}}{\partial z_i }= a\sum_{i} C_{ji}^r C_{mi}^n z_r  z_n \frac{\partial \mathcal{H}}{\partial z_m} 
\end{equation}
where on the left hand side we have the reversible Hamiltonian dynamics whereas the dissipative term is located at the right hand side. 

\textbf{Cartan-Killing bracket.}
Consider a Lie algebra given in coordinates as \eqref{cerceve}. Referring to skew-symmetry of the structure constants the Cartan-Killing metric is defined as 
\begin{equation} \label{CK}
	\mathcal{G}_{i j}=C_{i m}^{n}C_{j n}^{m}.
\end{equation} 
It can be easily shown that the scalars  $\mathcal{G}_{i j}$ define a symmetric and bilinear covariant tensor \cite{Mo09}. We define a symmetric bracket for functions $\mathcal{F}$ and $\mathcal{H}$ in terms of the metric as follows
 \begin{equation} \label{loc-car}
(\mathcal{F},\mathcal{H})^{(CK)}= \frac{\partial \mathcal{F}}{\partial z_{i}}\mathcal{G}_{ij} \frac{\partial \mathcal{H}}{\partial z_{j}}=C_{i j}^{n}C_{m n}^{j}\frac{\partial \mathcal{F}}{\partial z_{i}}\frac{\partial \mathcal{H}}{\partial z_{m}}.
 \end{equation}
To arrive at a metriplectic bracket on $\mathfrak{K}^*$, we add the Lie-Poisson bracket \eqref{LP-loc} and the symmetric bracket (\ref{loc-car}) that is,
\begin{equation}
\dot{\mathcal{F}}=[\vert\mathcal{F},\mathcal{H}\vert]^{(CK)}=\{\mathcal{F},\mathcal{H}\}+a
(\mathcal{F},\mathcal{H})^{(CK)}.
\end{equation}
Accordingly, the metriplectic dynamics is computed to be
\begin{equation}
\begin{split}
\dot{z}_j \mp  C^{m}_{ij} z_{m}\frac{\partial \mathcal{H}}{\partial z_i }=aC_{j i}^{n}C_{m n}^{i}  \frac{\partial \mathcal{H}}{\partial z_{m}}
\end{split}
\end{equation}
where, on the left hand side, we have the reversible Hamiltonian dynamics whereas, on the right hand side, the dissipative term is exhibited. 

\textbf{Casimir dissipation bracket.} Define a symmetric bilinear operator $\psi $ on Lie algebra $\mathfrak{K}$. Referring to any Casimir function $\mathcal{C}$ of the Lie-Poisson bracket, define a symmetric bracket \cite{GB13} of  two functions $\mathcal{F}$ and $\mathcal{H}$ as
\begin{equation}\label{ccs}
(\mathcal{F},\mathcal{H})^{(CD)}=-\psi \left( \left[ \frac{\delta \mathcal{F}}{\delta \textbf{z} },\frac{\delta \mathcal{H}}{\delta \textbf{z} }\right] ,\left[ \frac{\delta C}{\delta \textbf{z}},\frac{%
\delta \mathcal{H}}{\delta \textbf{z}}\right] \right).  
\end{equation}
This bracket is suitable for the second type of metriplectic bracket.
Note that the change of Hamiltonian function over time is constant and the change of  Casimir function can be given as
\begin{equation*}
\dot{\mathcal{C}} =-\psi \left( \left[ \frac{
\delta \mathcal{C}}{\delta \textbf{z} },\frac{\delta \mathcal{H}}{\delta \textbf{z} }\right] ,\left[ \frac{
\delta \mathcal{C}}{\delta \textbf{z} },\frac{\delta \mathcal{H}}{\delta \textbf{z} }\right] \right)=-\left\Vert \left[ \frac{\delta \mathcal{C}}{\delta \textbf{z} },\frac{\delta \mathcal{H}}{\delta
\textbf{z} }\right] \right\Vert ^{2}<0.
\end{equation*} 
 The dynamics of an arbitrary observable $\mathcal{F}$ governed by a Hamiltonian function $\mathcal{H}$ is deduced by this bracket as
\begin{equation}
\begin{split}
\dot{\mathcal{F}}=-\psi \left( \left[ \frac{\delta \mathcal{F}}{\delta \textbf{z} },\frac{
	\delta \mathcal{H}}{\delta \textbf{z} }\right] ,\left[ \frac{\delta \mathcal{C}}{\delta \textbf{z} },\frac{
	\delta \mathcal{H}}{\delta \textbf{z} }\right] \right) = \left\langle    \left[ \frac{\delta \mathcal{C}}{\delta \textbf{z} },\frac{
	\delta \mathcal{H}}{\delta \textbf{z} }\right]^{\flat} ,ad_{{\delta H}/{\delta \textbf{z} }}\frac{
	\delta \mathcal{F}}{\delta \textbf{z} } \right\rangle  
&=\left\langle  -ad^*_{{\delta \mathcal{H}}/{\delta \textbf{z} }}\left[ \frac{\delta C}{\delta \textbf{z} },\frac{
	\delta \mathcal{H}}{\delta \textbf{z} }\right]^\flat , \frac{
	\delta \mathcal{F}}{\delta \textbf{z} } \right\rangle.
\end{split}
\end{equation}
Here, the musical mapping $\flat$, from $\mathfrak{K}$ to 
$\mathfrak{K}^{\ast }$, is defined through the symmetric operator $\psi $, satisfying the identity  $\left\langle \textbf{x}^{\flat},\textbf{x}' \right\rangle =\psi \left( \textbf{x},\textbf{x}' \right) $ for two elements $\textbf{x}$ and $\textbf{x}'$ in $\mathfrak{K}$. In this case, the dissipative equation of motion can be written as follows
\begin{equation} \label{DB-Dyn} 
\dot{\textbf{z}}=-ad_{\frac{\delta  \mathcal{H}}{\delta \textbf{z}}}^{\ast }\left[ \frac{\delta C}{%
\delta \textbf{z}},\frac{\delta  \mathcal{H}}{\delta \textbf{z}}\right] ^{\flat}.
\end{equation} 

We collect the Lie-Poisson bracket \eqref{LP-loc} and the Casimir Dissipation bracket \eqref{ccs} together to arrive at the following metriplectic bracket 
\begin{equation}
\dot{\mathcal{F}}=[\vert\mathcal{F},\mathcal{H}\vert]^{(CD)}=\{\mathcal{F},\mathcal{H}\}+a
(\mathcal{F},\mathcal{H})^{(CD)}.
\end{equation}
Then we compute the equation of motion as 
\begin{equation} \label{MB-DBD-Dyn}
\dot{\textbf{z}} \mp ad_{\frac{\delta \mathcal{H}}{\delta \textbf{z}}}^{\ast }z =(-a)ad_{\frac{\delta
H}{\delta \textbf{z} }}^{\ast }\left[ \frac{\delta \mathcal{C}}{\delta \textbf{z} },\frac{\delta \mathcal{H}}{
\delta \textbf{z} }\right] ^{\flat}.
\end{equation}

 \textbf{Hamilton dissipation  bracket.} We start with assuming a symmetric (semi-positive definite) bilinear operator $\psi $ defined on a Lie algebra $\mathfrak{K}$. We fix a Casimir function $\mathcal{C}$ of the Lie-Poisson bracket \eqref{LP-Bra}, and then introduce the following symmetric bracket on the dual space $\mathfrak{K}^*$, for two function(al)s $\mathcal{F}$ and $\mathcal{H}$, given by
\begin{equation} \label{HD-sym}
(\mathcal{F},\mathcal{H})^{(HD)}=-\psi\Big(\Big[\frac{\delta \mathcal{F}}{\delta \textbf{z}},\frac{\delta \mathcal{C}}{\delta \textbf{z}}\Big],[\frac{\delta \mathcal{H}}{\delta \textbf{z}} ,\frac{\delta \mathcal{C}}{\delta \textbf{z}}\Big]\Big)
\end{equation}
where the brackets on the right sides are Lie algebra brackets on $\mathfrak{K}$. An interesting feature of this symmetric bracket is to see that the Casimir function(al) $\mathcal{C}$ is a conserved quantity for the dynamics determined by the bracket \eqref{HD-sym} since $\dot{\mathcal{C}}=0$ due to the skew-symmetry of the Lie-bracket. On the other hand the generating function $\mathcal{H}$ dissipates that is
\begin{equation*}
\dot{\mathcal{H}} =(\mathcal{H},\mathcal{H})^{(HD)}=-\psi \Big( \Big[ \frac{
\delta \mathcal{H}}{\delta \textbf{z} },\frac{\delta \mathcal{C}}{\delta \textbf{z} }\Big] ,\Big[ \frac{
\delta \mathcal{C}}{\delta \textbf{z} },\frac{\delta \mathcal{H}}{\delta \textbf{z} }\Big] \Big)=-\left\Vert \Big[ \frac{\delta \mathcal{H}}{\delta \textbf{z} },\frac{\delta \mathcal{C}}{\delta
\textbf{z} }\Big] \right\Vert ^{2} \leq 0.
\end{equation*}
More general, the gradient flow of an observable $\mathcal{F}$ generated by a function(al) $\mathcal{H}$ is computed to be 
\begin{equation}
\dot{\mathcal{F}}=-\psi \left( \left[ \frac{\delta \mathcal{F}}{\delta \textbf{z} },\frac{
	\delta \mathcal{H}}{\delta \textbf{z} }\right] ,\left[ \frac{\delta \mathcal{C}}{\delta \textbf{z} },\frac{
	\delta \mathcal{H}}{\delta \textbf{z} }\right] \right) = \left\langle    \left[ \frac{\delta \mathcal{C}}{\delta \textbf{z} },\frac{
	\delta \mathcal{H}}{\delta \textbf{z} }\right]^{\flat} ,ad_{{\delta H}/{\delta \textbf{z} }}\frac{
	\delta \mathcal{F}}{\delta \textbf{z} } \right\rangle  
=\left\langle  -ad^*_{{\delta \mathcal{H}}/{\delta \textbf{z} }}\left[ \frac{\delta C}{\delta \textbf{z} },\frac{
	\delta \mathcal{H}}{\delta \textbf{z} }\right]^\flat , \frac{
	\delta \mathcal{F}}{\delta \textbf{z} } \right\rangle.
\end{equation}
Accordingly, we can write the equation of motion generated by $\mathcal{S}$ is  
\begin{equation} \label{HDB-Dyn}
\dot{\textbf{z}}=-ad_{{\delta C}/{%
\delta \textbf{z}}}^{\ast }\Big[\frac{\delta  \mathcal{C}}{\delta \textbf{z}} ,\frac{\delta  \mathcal{S}}{\delta \textbf{z}}\Big] ^{\flat}.
\end{equation} 
Now we are ready to add the Lie-Poisson bracket \eqref{LP-loc} and the symmetric bracket exhibited in \eqref{HD-sym} in order to define a metriplectic (Leibniz) bracket on $\mathfrak{K}^*$. By assuming that both the Lie-Poisson and the gradient dynamics generated by a single function $\mathcal{H}$ we have that
\begin{equation}
\dot{\mathcal{F}}=[\vert\mathcal{F},\mathcal{H}\vert]^{(HD)}=\{\mathcal{F},\mathcal{H}\}+a
(\mathcal{F},\mathcal{H})^{(HD)}.
\end{equation}
Then we compute the equation of motion as
\begin{equation} \label{MB-HD-Dyn}
\dot{\textbf{z}} \mp ad_{{\delta \mathcal{H}}/{\delta \textbf{z}}}^{\ast }z =(-a)ad_{{\delta C}/{
\delta \textbf{z}}}^{\ast }\left[ \frac{\delta \mathcal{C}}{\delta \textbf{z} },\frac{\delta \mathcal{H}}{
\delta \textbf{z} }\right] ^{\flat}.
\end{equation}

\section{Extensions of Lie Algebras}\label{Sec-MP}

In this section, we introduce a unifying construction, called \textit{extended structure}, for factorization of Lie algebras. Then, we examine its particular instances such as matched pair Lie algebra and 2-cocycle extension.

\subsection{Extended Structures} \label{brzezinski}

Let $(\G{g},[\bullet,\bullet])$ be a Lie algebra and, assume that, it acts on a vector space $\G{h}$ from the right that is 
\begin{equation}\label{Lieact1}
 \vartriangleleft :\mathfrak{h}\otimes \mathfrak{g}\rightarrow \mathfrak{h}
 ,\qquad \eta \otimes \xi \mapsto \eta \vartriangleleft \xi.
\end{equation}
 Our goal in this subsection is to construct the most general extension of  $\G{g}$ by $\G{h}$. To have this, we permit existence of the following maps
\begin{equation} \label{thm-m-h-const-maps}
\begin{split}
\Phi:\G{h}\ot\G{h}\longrightarrow \G{g},\qquad (\eta,\eta')\mapsto\Phi(\eta,\eta') \\
\kappa:\G{h}\ot\G{h}\longrightarrow \G{h} ,\qquad (\eta,\eta')\mapsto\kappa(\eta,\eta') 
\end{split}
\end{equation}
along with a linear map 
\begin{equation}\label{Lieact}
\vartriangleright :\mathfrak{h}\otimes \mathfrak{g}\rightarrow \mathfrak{g 
},\qquad \eta \otimes \xi \mapsto \eta \vartriangleright \xi. 
\end{equation}
Note here that \eqref{Lieact} is not an action since $\mathfrak{h}$ is only a vector space. The need of the operations \eqref{thm-m-h-const-maps} and \eqref{Lieact} will be more evident in the sequel where we examine this in the point of view of decomposition. 
The following theorem determines the conditions to define a Lie algebra structure on the direct sum $\G{K}=\G{g} \oplus \G{h}$, see also \cite{Agormili14, AgoreMili-book}.

\begin{theorem}\label{thm-m-h-const} The direct sum $\G{K}=\G{g} \oplus \G{h}$ is a Lie algebra via 
	\begin{equation} \label{mpla-2-cocyc-1}
\lbrack (\xi \oplus\eta ),\,(\xi'\oplus\eta')]_{_\Phi\bowtie}=\big( [\xi,\xi'
]+\eta \vartriangleright \xi'-\eta'\vartriangleright \xi + \Phi(\eta,\eta')
\big)\oplus\big(\kappa(\eta,\eta')+\eta \vartriangleleft \xi'-\eta' \vartriangleleft \xi \big).
\end{equation}
	where the mappings are the ones in \eqref{Lieact1}, \eqref{thm-m-h-const-maps} and \eqref{Lieact}, 
	if and only if, for any $\eta,\eta',\eta'' \in \G{h}$, and any $\xi,\xi' \in \G{g}$,
\begin{equation} \label{cocycle-compatibility}
\begin{split}
& \kappa(\eta,\eta) =0, \qquad \Phi(\eta,\eta) = 0, \\
& \kappa(\eta,\eta') \vartriangleleft \xi = \kappa(\eta,\eta' \vartriangleleft \xi) - \kappa(\eta',\eta \vartriangleleft \xi) + \eta \vartriangleleft (\eta' \vartriangleright \xi) - \eta' \vartriangleleft (\eta \vartriangleright \xi), \\
& \kappa(\eta,\eta')\vartriangleright \xi = [\xi,\Phi(\eta,\eta')] + \Phi(\eta,\eta'\vartriangleleft\xi) + \Phi(\eta\vartriangleleft\xi,\eta') + \eta \vartriangleright (\eta' \vartriangleright \xi) - \eta' \vartriangleright (\eta \vartriangleright \xi), \\
& \eta \vartriangleright [\xi,\xi'] = [\xi,\eta \vartriangleright \xi'] - [\xi',\eta \vartriangleright \xi] + (\eta \vartriangleleft \xi) \vartriangleright \xi' - (\eta \vartriangleleft \xi' )\vartriangleright \xi, \\
& \eta \vartriangleleft [\xi,\xi'] = (\eta \vartriangleleft \xi) \vartriangleleft \xi' - (\eta \vartriangleleft \xi' )\vartriangleleft \xi, \\
& \circlearrowright \Phi(\eta,\kappa(\eta',\eta'')) + \circlearrowright \eta \vartriangleright \Phi(\eta',\eta'') = 0, \\
& \circlearrowright \kappa(\eta,\kappa(\eta',\eta'')) + \circlearrowright \eta \vartriangleleft \Phi(\eta',\eta'') = 0,
\end{split}
\end{equation}
	where $\circlearrowright$ refers to the cyclic sum over the indicated elements.
\end{theorem}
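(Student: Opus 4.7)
The plan is to verify that the bracket $[\bullet,\bullet]_{_\Phi\bowtie}$ satisfies the axioms of a Lie algebra, namely bilinearity, skew-symmetry and the Jacobi identity, and that each of the listed conditions \eqref{cocycle-compatibility} corresponds to one of these axioms (or a specific multi-homogeneous component thereof). Bilinearity is automatic from the way the bracket is defined on generators, since all constituent maps $[\bullet,\bullet]$, $\vartriangleright$, $\vartriangleleft$, $\Phi$ and $\kappa$ are bilinear. Hence the real work splits into skew-symmetry and Jacobi, each of which will be analysed by projecting onto the $\G{g}$- and $\G{h}$-components of $\G{K}=\G{g}\oplus\G{h}$.

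For skew-symmetry, I would evaluate $[(\xi\oplus\eta),(\xi\oplus\eta)]_{_\Phi\bowtie}$. The $\G{g}$-component collapses to $\Phi(\eta,\eta)$ and the $\G{h}$-component to $\kappa(\eta,\eta)$, because the terms involving $\vartriangleright$ and $\vartriangleleft$ cancel and $[\xi,\xi]=0$. Vanishing for every $\xi,\eta$ therefore is equivalent to the first two conditions $\Phi(\eta,\eta)=0$ and $\kappa(\eta,\eta)=0$. After this, I would polarize to recover full skew-symmetry in the form $[u,v]_{_\Phi\bowtie}=-[v,u]_{_\Phi\bowtie}$, which is then used freely in the Jacobi computation.

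The bulk of the proof is the Jacobi identity $\circlearrowright[(\xi\oplus\eta),[(\xi'\oplus\eta'),(\xi''\oplus\eta'')]_{_\Phi\bowtie}]_{_\Phi\bowtie}=0$. The strategy is to expand the double bracket, sum cyclically, and then project onto the $\G{g}$- and $\G{h}$-components; each component is further decomposed according to its multi-degree in the $\xi$- and $\eta$-variables. The purely $\xi\xi\xi$-terms reproduce the Jacobi identity of $\G{g}$ and so vanish identically. The $\xi\xi\eta$-terms on the $\G{g}$-side give the condition $\eta\vartriangleright[\xi,\xi']=[\xi,\eta\vartriangleright\xi']-[\xi',\eta\vartriangleright\xi]+(\eta\vartriangleleft\xi)\vartriangleright\xi'-(\eta\vartriangleleft\xi')\vartriangleright\xi$, while those on the $\G{h}$-side give $\eta\vartriangleleft[\xi,\xi']=(\eta\vartriangleleft\xi)\vartriangleleft\xi'-(\eta\vartriangleleft\xi')\vartriangleleft\xi$, i.e.\ the fourth and fifth conditions. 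The $\xi\eta\eta$-terms, in which $\Phi$ and $\kappa$ appear for the first time under actions or commutators with $\G{g}$, produce the third condition on the $\G{h}$-side and the fourth equation of \eqref{cocycle-compatibility} (that is, the $\kappa(\eta,\eta')\vartriangleright\xi$ identity) on the $\G{g}$-side. Finally the $\eta\eta\eta$-terms, being intrinsically cyclic, yield the last two conditions with the $\circlearrowright$ symbol. Sufficiency follows by noting that the displayed identities are exactly what is needed for each multi-homogeneous piece of the Jacobiator to vanish, and necessity follows by specializing: given any fixed elements, choose all other arguments in the complementary space to isolate a single multi-degree component. The main obstacle I anticipate is purely bookkeeping—keeping track of signs and of which projection each term contributes to after expansion—rather than any conceptual difficulty; to tame it I would carry out the cyclic sum degree-by-degree and systematically apply the already-established skew-symmetry to fold pairs of terms together before reading off the compatibility identities.
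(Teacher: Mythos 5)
Your proposal is correct and follows essentially the same route as the paper: antisymmetry evaluated on the diagonal yields $\Phi(\eta,\eta)=\kappa(\eta,\eta)=0$, and the Jacobi identity, split by the type of the triple (three $\xi$'s, two $\xi$'s and one $\eta$, one $\xi$ and two $\eta$'s, three $\eta$'s) and projected onto the $\G{g}$- and $\G{h}$-components, yields exactly the remaining conditions, with multilinearity giving both sufficiency and (by isolating pure triples) necessity. The only blemish is a cosmetic inconsistency in how you number the conditions of \eqref{cocycle-compatibility} mid-paragraph, but the identities you name explicitly are the right ones.
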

\begin{proof}
	We first observe that
	\begin{equation}
	[\eta,\eta] = \big(\Phi(\eta,\eta), \kappa(\eta,\eta) \big) = 0
	\end{equation}
	if and only if
	\begin{equation}
	\Phi(\eta,\eta) = 0, \\ \qquad
	\kappa(\eta,\eta) = 0
	\end{equation}
	for any $\eta \in \G{h}$. Next, we shall consider the mixed Jacobi identities. Let us begin with 
	\begin{equation}\label{J-I}
	[ \xi,[\eta, \eta']] + [\eta,[\eta', \xi]] + [\eta',[\xi, \eta]] = 0,
	\end{equation}
	where
\begin{equation}
\begin{split}
	[ \xi,[\eta, \eta']] &= [\xi,(\Phi(\eta,\eta'),\, \kappa(\eta,\eta'))] = \big( - \kappa(\eta,\eta') \vartriangleright \xi  + [\xi, \Phi(\eta,\eta')]_\G{g},- \kappa(\eta,\eta') \vartriangleleft \xi  \big),
\\
	[\eta,[\eta',\xi] ] & = [\eta,(- \eta' \vartriangleright \xi,-\eta' \vartriangleleft \xi  ) ] =\big(- \eta \vartriangleright(\eta' \vartriangleright \xi) -\Phi(\eta' \vartriangleleft \xi, \eta), -\kappa(\eta' \vartriangleleft \xi, \eta) - \eta \vartriangleleft(\eta' \vartriangleright \xi) \big),
\\
	[\eta',[\xi, \eta] ] & =  [\eta',(\eta \vartriangleright \xi,\eta \vartriangleleft \xi  ) ]  =\big(\eta' \vartriangleright(\eta \vartriangleright \xi) + \Phi(\eta \vartriangleleft \xi, \eta'),\kappa(\eta \vartriangleleft \xi, \eta') + \eta' \vartriangleleft(\eta \vartriangleright \xi ) )\big).
	\end{split}
\end{equation}
	Hence, \eqref{J-I} is satisfied if and only if
\begin{equation}
\begin{split}
	\kappa(\eta,\eta')\vartriangleleft \xi &= \kappa(\eta \vartriangleleft \xi, \eta') - \kappa(\eta' \vartriangleleft \xi, \eta) + \eta' \vartriangleleft(\eta \vartriangleright \xi) - \eta \vartriangleleft(\eta' \vartriangleright \xi) 
\\
	[\xi, \Phi(\eta,\eta') ]_\G{g}  &= \kappa(\eta,\eta') \vartriangleright \xi + \eta \vartriangleright(\eta' \vartriangleright \xi) - \eta' \vartriangleright(\eta \vartriangleright \xi ) + \Phi(\eta' \vartriangleleft \xi, \eta) - \Phi(\eta \vartriangleleft \xi, \eta')
	\end{split}
\end{equation}
	for any $\eta,\eta' \in \G{h}$, and any $\xi \in \G{g}$.

	Next, we consider the Jacobi identity of an arbitrary $\eta \in \G{h}$, and any $\xi,\xi' \in \G{g}$, namely; 
	\begin{equation}\label{J-II}
	[[\xi, \xi'], \eta] + [[\xi', \eta],\xi] + [[\eta, \xi], \xi'] = 0.  
	\end{equation}
	In this case,
	\[
	[[\xi, \xi'], \eta] = \big(\eta \vartriangleleft [\xi, \xi']_{\G{g}} ,\, \eta \vartriangleright [\xi, \xi']_{\G{g}})\big),
	\]
	together with
\begin{equation}
\begin{split}
	[[\xi', \eta],\xi] &
	  =\big(( \eta \vartriangleleft \xi')\vartriangleleft \xi ,\, -(\eta \vartriangleleft \xi) \vartriangleright \xi' + [\xi,\eta \vartriangleright \xi']_\G{g}\big),
\\
	[[\eta, \xi], \xi']&
	= \big(-(\eta \vartriangleleft \xi) \vartriangleleft \xi',\,  (\eta \vartriangleleft \xi') \vartriangleright \xi - [\eta \vartriangleright \xi,\xi']_\G{g}\big).
	\end{split}
\end{equation}
		As such, \eqref{J-II} is satisfied if and only if 
\begin{equation}
\begin{split}
	\eta \vartriangleleft [\xi, \xi']_{\G{g}} & =   -( \eta \vartriangleleft \xi)\vartriangleleft \xi' + (\eta \vartriangleleft \xi) \vartriangleleft \xi' ,
\\
	\eta \vartriangleright [\xi, \xi']_{\G{g}} & = [\eta \vartriangleright \xi,\xi']_\G{g} + [\xi,\eta \vartriangleright \xi']_\G{g} +  (\eta \vartriangleleft \xi) \vartriangleright \xi' -  (\eta \vartriangleleft \xi') \vartriangleright \xi
	\end{split}
\end{equation}
	for any $\eta \in \G{h}$, and any $\xi,\xi' \in \G{g}$.
	Finally we consider the Jacobi identity 
	\begin{equation}\label{J-III}
	[[\eta, \eta'], \eta''] + [[\eta', \eta''], \eta] + [[\eta'', \eta], \eta'] = 0
	\end{equation}
	for any $\eta,\eta',\eta'' \in \G{h}$. We have,
	\begin{align*}
	[[\eta, \eta'], \eta''] &= \big((\Phi(\eta,\eta'),\kappa(\eta,\eta') ),\, \eta''\big)  \\
	& =\big(\eta'',\kappa(\kappa(\eta,\eta')) + \eta'' \vartriangleleft \Phi(\eta,\eta'),\, \eta'' \vartriangleright(\Phi(\eta,\eta') + \Phi(\eta'',\kappa(\eta,\eta'))\big),
	\end{align*}
	as well as
\begin{equation}
\begin{split}
	[[\eta', \eta''], \eta] &= [(\Phi(\eta',\eta''),\kappa(\eta',\eta'')) ,\, \eta]  \\
	&  = \big(\eta,\kappa(\kappa(\eta',\eta'')) + \eta \vartriangleleft \Phi(\eta',\eta'') ,\, \eta'' \vartriangleright \Phi(\eta',\eta) + \Phi(\eta,\kappa(\eta',\eta''))\big)
\\
	[[\eta'', \eta], \eta'] &= [(\Phi(\eta,\eta''),\kappa(\eta,\eta'')) ,\, \eta'] \\
	& = \big(\eta',\kappa(\kappa(\eta,\eta'')) + \eta' \vartriangleleft\Phi(\eta,\eta'') ,\, \eta'\vartriangleright\Phi(\eta,\eta'') + \Phi(\eta',\kappa(\eta,\eta''))\big).
	\end{split}
\end{equation}
	Accordingly, \eqref{J-III} is satisfied if and only if
\begin{equation}
\begin{split}
	\sum_{(\eta,\eta',\eta'')}\, \kappa(\eta'',\kappa(\eta,\eta')) + \sum_{(\eta,\eta',\eta'')}\, \eta''\vartriangleleft\Phi(\eta,\eta') = 0,
\\
	\sum_{(\eta,\eta',\eta'')}\, \eta''\vartriangleright\Phi(\eta,\eta') + \sum_{(\eta,\eta',\eta'')}\, \Phi(\eta'',\kappa(\eta,\eta')) = 0
	\end{split}
\end{equation}
	for any $\eta,\eta',\eta'' \in \G{h}$.
	\end{proof}

We denote the direct product space $\mathfrak{K} = \mathfrak{g} \oplus \mathfrak{h}$ equipped with the Lie algebra bracket \eqref{mpla-2-cocyc-1} by
$\mathfrak{K} = \mathfrak{g} _{_\Phi\bowtie} \mathfrak{h}$. In \cite{Agormili14, AgoreMili-book}, the realization presented in Theorem \ref{thm-m-h-const} has already been introduced under the name of \textit{extended structure}. We shall follow this terminology as well. We remark that, the last two identities in \eqref{cocycle-compatibility} are called twisted cocycle identity for $\Phi$ and twisted Jacobi identity for $\kappa$, respectively. 
In  the following subsection, we shall exploit that extended structure realizes both matched (double cross sum) Lie algebra and 2-cocycle extension Lie algebra as particular instances. 

Let us cite here some related studies presented in this section. See
\cite{BrHa99}  for a coalgebra discussion related with the extension presented here. We refer  \cite{JanViz16} for the extensions of Hamiltonian vector fields. See \cite{Kubo} for extensions of Poisson algebras.

\textbf{Decomposing a Lie algebra.} 
Instead of extending a Lie algebra with its representation space, one 
can decompose a Lie algebra into the internal direct sum of one of its Lie subalgebra and its complement. The latter is manifesting the inverse of the statement in Theorem \ref{thm-m-h-const}. So that, the (de)composition exhibited here is universal. Let us depict this argument. 

We start with a Lie algebra $\mathfrak{K}$, and assume a subalgebra, say $\mathfrak{g}$, of it. It is always possible to define  a complementary subspace $\mathfrak{h}\subset \mathfrak{K}$ so that $\mathfrak{K} = \mathfrak{g} \oplus \mathfrak{h}$. In most general case, for $\eta,\eta'$ in $\mathfrak{h}$, under the Lie algebra bracket of $\mathfrak{K}$, we have that
\begin{equation}\label{Phi-kappa}
[\eta,\eta'] =\Phi(\eta,\eta') \oplus \kappa(\eta,\eta') \in \mathfrak{g} \oplus \mathfrak{h} .
\end{equation}
Here, we define the mappings
\begin{eqnarray}
\Phi:\mathfrak{h} \times \mathfrak{h} \rightarrow \mathfrak{g}, \qquad 
\Phi(\eta,\eta') := proj_\mathfrak{g}[\eta,\eta'] \label{phi}
\\
\kappa:\mathfrak{h} \times \mathfrak{h} \rightarrow \mathfrak{h}, \qquad 
\kappa(\eta,\eta') := proj_\mathfrak{h}[\eta,\eta'] \label{kappa}
\end{eqnarray}
where $proj$ denotes the projection operator. Notice that, if $\Phi$ is identically zero then $\mathfrak{h}$ becomes a Lie subalgebra of $\mathfrak{K}$. In this case, $\kappa$ becomes the Lie algebra bracket on $\mathfrak{h}$. 

\begin{lemma}[Universal Lemma] \label{uni-prop-bre}
	Given a decomposition of Lie algebra $\G{K} = \G{g} \oplus \G{h}$, where $\G{g}$ is being a Lie subalgebra. The mappings $\Phi$ and $\kappa$ in \eqref{thm-m-h-const-maps} recovered from \eqref{Phi-kappa} 
	and the mutual \textit{actions} computed through
\begin{equation}\label{bracketactions}
	[0\oplus\eta,\xi\oplus 0] = \eta\vartriangleright \xi \oplus  \eta\vartriangleleft \xi 
\end{equation}
satisfies the conditions in \eqref{cocycle-compatibility}. That is, $\G{K}$ can be identified to the extended structure $\mathfrak{g} _{_\Phi\bowtie} \mathfrak{h}$ hence, \eqref{mpla-2-cocyc-1} reads a decomposition of the Lie bracket on $\G{K}$. 
 \end{lemma}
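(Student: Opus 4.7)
The plan is to observe that the lemma is essentially the converse of Theorem \ref{thm-m-h-const}: given the ambient Lie bracket $[\bullet,\bullet]_{\G{K}}$ on $\G{K}=\G{g}\oplus\G{h}$, one reads off the four pieces of data $\Phi$, $\kappa$, $\vartriangleright$, $\vartriangleleft$ using \eqref{phi}, \eqref{kappa} and \eqref{bracketactions}, and then all compatibility conditions in \eqref{cocycle-compatibility} are forced by the antisymmetry and the Jacobi identity of $[\bullet,\bullet]_{\G{K}}$.

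First I would check that the formula \eqref{mpla-2-cocyc-1} actually reproduces $[\bullet,\bullet]_{\G{K}}$ on $\G{g}\oplus\G{h}$. For $\xi,\xi'\in\G{g}$, since $\G{g}$ is a subalgebra, $[\xi\oplus 0,\xi'\oplus 0]_{\G{K}}=[\xi,\xi']\oplus 0$ which matches \eqref{mpla-2-cocyc-1}. The mixed terms $[\xi\oplus 0,0\oplus\eta]_{\G{K}}=-(\eta\vartriangleright\xi)\oplus -(\eta\vartriangleleft\xi)$ follow from antisymmetry of $[\bullet,\bullet]_{\G{K}}$ combined with the defining relation \eqref{bracketactions}. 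Finally, for $\eta,\eta'\in\G{h}$, the equality $[0\oplus\eta,0\oplus\eta']_{\G{K}}=\Phi(\eta,\eta')\oplus\kappa(\eta,\eta')$ is literally \eqref{Phi-kappa}. Extending bilinearly yields the identification $[\bullet,\bullet]_{\G{K}}=[\bullet,\bullet]_{_\Phi\bowtie}$.

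With that identification in hand, the verification of \eqref{cocycle-compatibility} is immediate from the proof of Theorem \ref{thm-m-h-const} read in reverse. The first pair, $\Phi(\eta,\eta)=0$ and $\kappa(\eta,\eta)=0$, are the $\G{g}$- and $\G{h}$-components of $[\eta,\eta]_{\G{K}}=0$. The next pair is the component-wise expansion of the Jacobi identity for the triple $(\xi,\eta,\eta')$; the third pair comes from Jacobi on $(\xi,\xi',\eta)$; and the last two identities, the twisted cocycle condition for $\Phi$ and the twisted Jacobi condition for $\kappa$, arise from Jacobi on $(\eta,\eta',\eta'')$ after projecting onto $\G{g}$ and $\G{h}$ respectively. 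Since these are exactly the seven identities that were extracted in the proof of Theorem \ref{thm-m-h-const} as necessary and sufficient for \eqref{mpla-2-cocyc-1} to be a Lie bracket, and since $[\bullet,\bullet]_{\G{K}}$ is already known to satisfy Jacobi, they all hold automatically.

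The main obstacle, if one can call it that, is purely bookkeeping: one must carefully project each Jacobi triple onto the correct summand and keep track of the sign introduced by antisymmetry when replacing $[\eta,\xi]_{\G{K}}$ by $-[\xi,\eta]_{\G{K}}$ in order to match \eqref{bracketactions}. No new computation is needed beyond observing that the argument in Theorem \ref{thm-m-h-const} is fully reversible, so the universal lemma follows by reading that proof from bottom to top.
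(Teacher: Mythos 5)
Your proof is correct and follows exactly the route the paper intends: the paper states the lemma without a separate proof, presenting it as the converse of Theorem \ref{thm-m-h-const}, and your argument — identifying the ambient bracket with \eqref{mpla-2-cocyc-1} component by component and then invoking the "if and only if" of Theorem \ref{thm-m-h-const} together with the Jacobi identity of $[\bullet,\bullet]_{\G{K}}$ — is precisely the reading-in-reverse that justifies it. No gap; the only care needed is the sign bookkeeping in the mixed term, which you handle correctly.
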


\textbf{Coordinate realizations.}  
Choose  a basis $\{\mathbf{e}_\alpha\}$ on an $N$-dimensional Lie algebra $\mathfrak{g}$ and a basis  $\{\mathbf{f}_a\}$ on $M$-dimensional vector space $\mathfrak{h}$. Notice that we reserve the Greek scripts while denoting the basis of the Lie algebra $\mathfrak{g}$ whereas we use the Latin scripts for the basis for the vector space $\mathfrak{h}$. We denote
\begin{equation}\label{nonbracket2}
[\mathbf{e}_\alpha,\mathbf{e}_\beta]=C^\theta_{\alpha \beta}\mathbf{e}_\theta, \qquad [\mathbf{f}_a,\mathbf{f}_b]=\Phi^\alpha_{ab}\mathbf{e}_\alpha+\kappa^d_{ab}\mathbf{f}_d,
\end{equation}
where the set $C^\theta_{\alpha \beta}$ determines the structure constants of the Lie subalgebra $\mathfrak{g}$ whereas the sets of constants $\Phi^\alpha_{ab}$ and $\kappa^d_{ab}$ are coordinate realizations of the mappings $\Phi$ and  $\kappa$ determined in \eqref{Phi-kappa}. We identify the mappings \eqref{Lieact1} and \eqref{Lieact} in terms of the basis $\mathbf{e}_\alpha$, $\mathbf{f}_a$ as 
\begin{equation} \label{local-act}
\mathbf{f}_{a}	\vartriangleleft \mathbf{e}_{\alpha}=R_{a \alpha}^{b}\mathbf{f}_{b}, \qquad 
 \mathbf{f}_{a}	\vartriangleright \mathbf{e}_{\alpha}=L_{a \alpha}^{\beta}\mathbf{e}_{\beta}.
\end{equation}
It is needless to say that, the scalars $L_{a \alpha}^{\beta}$ and  $R_{a \alpha}^{b}$ determine the mappings in a unique way. 

In the present finite case, the extended structure $\mathfrak{g} _{_\Phi\bowtie} \mathfrak{h}$ is an $N+M$ dimensional vector space. Referring to the basis of the constitutive subspaces, one can define a basis $\{\bar{\mathbf{e}}_{1}, \dots ,\bar{\mathbf{e}}_{N+M}\}$ on  $\mathfrak{g} _{_\Phi\bowtie} \mathfrak{h}$ as
\begin{equation} \label{coordmpL}
\{\bar{\mathbf{e}}_{\alpha},\bar{\mathbf{e}}_a\}\subset \mathfrak{g} _{_\Phi\bowtie} \mathfrak{h}, \qquad \bar{\mathbf{e}}_{\alpha}=\mathbf{e}_\alpha\oplus 0,\qquad \bar{\mathbf{e}}_a=0\oplus \mathbf{f}_a.
\end{equation}
In the light of these local choices \eqref{nonbracket2} and \eqref{local-act}, one can calculate the structure constants of the extended  Lie bracket \eqref{mpla-2-cocyc-1} via
\begin{equation} \label{non-bracket-constants}
\begin{split}
&\left[\bar{\mathbf{e}}_{\beta}, \bar{\mathbf{e}}_{\alpha}\right]_{_\Phi\bowtie}=\bar{C}_{\beta \alpha}^{\gamma}\bar{\mathbf{e}}_{\gamma} + \bar{C}_{\beta \alpha}^{a} \bar{\mathbf{e}}_{a} =\left [ \mathbf{e}_{\beta}\oplus 0,\mathbf{e}_{\alpha}\oplus 0 \right ]_{_\Phi\bowtie} =C^{\gamma}_{\beta \alpha}\mathbf{e}_{\gamma}\oplus 0,
\\
&\left[\bar{\mathbf{e}}_{\beta}, \bar{\mathbf{e}}_{a}\right]_{_\Phi\bowtie}=\bar{C}_{\beta a}^{\gamma}\bar{\mathbf{e}}_{\gamma} +\bar{C}_{\beta a}^{d}\bar{\mathbf{e}}_{d} =\left [ \mathbf{e}_{\beta}\oplus 0,0\oplus \mathbf{f}_{a} \right ]_{_\Phi\bowtie}=-L^{\gamma}_{a \beta}\mathbf{e}_{\gamma}\oplus -R_{a\beta}^d \mathbf{f}_d,
\\
&\left[\bar{\mathbf{e}}_{b}, \bar{\mathbf{e}}_{a}\right]_{_\Phi\bowtie}=\bar{C}_{b a}^{\gamma}\bar{\mathbf{e}}_{\gamma}+\bar{C}_{b a}^{d}\bar{\mathbf{e}}_{d} =\left[ 0\oplus \mathbf{f}_{b},0\oplus \mathbf{f}_{a} \right ]_{_\Phi\bowtie} = \Phi^\gamma_{ba}\mathbf{e}_\gamma \oplus \kappa^{d}_{b a}\mathbf{f}_{d}.
\end{split}
\end{equation}
As a result, structure constants of $\mathfrak{g} _{_\Phi\bowtie} \mathfrak{h}$ can be written as
\begin{equation} \label{sc-nonbracket}
\begin{split}
\bar{C}_{\beta \alpha}^{\gamma}&=C_{\beta \alpha}^{\gamma},\qquad \bar{C}_{\beta \alpha}^{a}=0, \qquad  \bar{C}_{\beta a}^{\gamma}=-L_{a \beta}^{\gamma},\\ \bar{C}_{\beta a}^{d}&=-R_{a \beta}^{d}, \qquad \bar{C}_{b a}^{\gamma}= \Phi^\gamma_{ba}, \qquad \bar{C}_{b a}^{d}=\kappa_{b a}^{d}.
\end{split}
\end{equation}

 \subsection{Double Cross Sum (Matched Pair) Lie Algebra} \label{doublecross}

 We shall recall the matched pair construction from \cite{Ma90,Ma902}. Let $(\G{g},[\bullet,\bullet]_{\G{g}})$ and $(\G{h},[\bullet,\bullet]_{\G{h}})$   be two Lie algebras admitting mutual actions 
\begin{equation}\label{matched-pair-mutual-actions}
\vartriangleright:\G{h} \otimes \G{g} \to \G{g}, \qquad \vartriangleleft:\G{h} \otimes \G{g} \to \mathfrak{h}.
\end{equation}
That is, we have that the following identities hold, for any $\xi,\xi' \in \mathfrak{g}$, and any $\eta,\eta' \in \mathfrak{h}$,  
\begin{equation} \label{matchpaircond}
\begin{split}
& [\eta,\eta']\vartriangleright \xi =  \eta \vartriangleright (\eta' \vartriangleright \xi) - \eta' \vartriangleright (\eta \vartriangleright \xi), \\
& \eta \vartriangleleft [\xi,\xi'] = (\eta \vartriangleleft \xi) \vartriangleleft \xi' -( \eta \vartriangleleft \xi') \vartriangleleft \xi. 
\end{split}
\end{equation}
The direct sum $\mathfrak{K}=\mathfrak{g}\oplus \mathfrak{h}$ can be endowed with a Lie algebra structure if some compatibility conditions are satisfied. 
\begin{theorem} \label{mp-prop}
The direct sum of two Lie algebras $\mathfrak{g}$ and $\mathfrak{h}$  under mutual actions \eqref{matched-pair-mutual-actions} is a Lie algebra if it is  equipped with the bilinear mapping
\begin{equation} 
\lbrack (\xi \oplus\eta ),\,(\xi'\oplus\eta')]_{\bowtie}=\big( [\xi,\xi'
]+\eta \vartriangleright \xi'-\eta'\vartriangleright \xi
\big)\oplus\big([\eta ,\eta']+\eta \vartriangleleft \xi'-\eta' \vartriangleleft \xi \big).  \label{mpla}
\end{equation}
satisfying the following compatibility conditions
\begin{equation} \label{comp-mpa}
\begin{split}
& [\eta,\eta'] \vartriangleleft \xi = [\eta,\eta' \vartriangleleft \xi] - [\eta',\eta \vartriangleleft \xi] + \eta \vartriangleleft (\eta' \vartriangleright \xi) - \eta' \vartriangleleft (\eta \vartriangleright \xi), \\
& \eta \vartriangleright [\xi,\xi'] = [\xi,\eta \vartriangleright \xi'] - [\xi',\eta \vartriangleright \xi] + (\eta \vartriangleleft \xi) \vartriangleright \xi' - (\eta \vartriangleleft \xi' )\vartriangleright \xi \\
\end{split}
\end{equation}
for any $\eta,\eta' \in \mathfrak{h}$, and any $\xi,\xi' \in \mathfrak{g}$.
\end{theorem}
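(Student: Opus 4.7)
The plan is to derive Theorem \ref{mp-prop} as a direct specialization of Theorem \ref{thm-m-h-const}. The matched pair bracket \eqref{mpla} is exactly the extended-structure bracket \eqref{mpla-2-cocyc-1} in the case where the twisted cocycle $\Phi$ is identically zero and the twisted Jacobiator $\kappa$ is taken to be the Lie bracket of $\mathfrak{h}$:
\begin{equation}
\Phi(\eta,\eta') = 0, \qquad \kappa(\eta,\eta') = [\eta,\eta']_{\mathfrak{h}}.
\end{equation}
With this identification, Theorem \ref{thm-m-h-const} will yield the result once I check that, under these choices, the seven compatibility conditions in \eqref{cocycle-compatibility} reduce precisely to the two identities \eqref{comp-mpa}, together with the standing hypotheses in \eqref{matchpaircond} and the Jacobi identity of $\mathfrak{h}$.

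Concretely, I would run through \eqref{cocycle-compatibility} line by line. The antisymmetry conditions $\Phi(\eta,\eta)=0$ and $\kappa(\eta,\eta)=0$ are automatic, since $\Phi\equiv 0$ and $[\bullet,\bullet]_{\mathfrak{h}}$ is already skew. The third line of \eqref{cocycle-compatibility} is literally the first identity of \eqref{comp-mpa} after substituting $\kappa=[\bullet,\bullet]_\mathfrak{h}$. The fourth line, using $\Phi=0$, becomes $[\eta,\eta']_\mathfrak{h}\vartriangleright\xi = \eta\vartriangleright(\eta'\vartriangleright\xi) - \eta'\vartriangleright(\eta\vartriangleright\xi)$, which is the first line of \eqref{matchpaircond}, i.e., the assumption that $\vartriangleright$ is a (left) Lie algebra action. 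The fifth line is the second identity of \eqref{comp-mpa}, and the sixth line is the second line of \eqref{matchpaircond}, namely that $\vartriangleleft$ is a (right) Lie algebra action. Finally the two cyclic conditions: the one involving $\Phi$ collapses to $0=0$, and the one involving $\kappa$ becomes $\circlearrowright [\eta,[\eta',\eta'']_\mathfrak{h}]_\mathfrak{h}=0$, which is nothing but the Jacobi identity of $\mathfrak{h}$.

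Assembling these observations, every condition of \eqref{cocycle-compatibility} holds if and only if \eqref{comp-mpa} holds (the rest being built into the hypothesis that $\mathfrak{g},\mathfrak{h}$ are Lie algebras with mutual Lie actions in the sense of \eqref{matchpaircond}). Theorem \ref{thm-m-h-const} then gives that $\mathfrak{K}=\mathfrak{g}\oplus\mathfrak{h}$ equipped with \eqref{mpla} is a Lie algebra, which is the claim. The only real obstacle is bookkeeping: one must be careful to match the signs and the ordering of the arguments in \eqref{cocycle-compatibility} against those in \eqref{comp-mpa} and \eqref{matchpaircond}, and to verify that the substitution $\kappa(\eta,\eta') = [\eta,\eta']_\mathfrak{h}$ is compatible with the antisymmetric convention used in both statements.

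Should one prefer a self-contained argument not invoking Theorem \ref{thm-m-h-const}, the alternative is to verify the skew-symmetry and Jacobi identity for \eqref{mpla} directly, splitting the Jacobiator $J(x,y,z)$ into the four cases determined by how many of $x,y,z$ lie in $\mathfrak{g}$ versus $\mathfrak{h}$; the $(3,0)$ and $(0,3)$ cases reduce to the Jacobi identities of $\mathfrak{g}$ and $\mathfrak{h}$, the $(2,1)$ case produces the second line of \eqref{comp-mpa} together with the second line of \eqref{matchpaircond}, and the $(1,2)$ case produces the first line of \eqref{comp-mpa} together with the first line of \eqref{matchpaircond}. This is, however, just a repetition of the calculation already carried out in the proof of Theorem \ref{thm-m-h-const} with $\Phi=0$, so the corollary route is cleaner.
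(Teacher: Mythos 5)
Your proposal is correct and follows essentially the same route as the paper: immediately after stating Theorem \ref{mp-prop}, the paper (in the paragraph ``From the extended structure to matched pairs'') derives it as the $\Phi\equiv 0$, $\kappa=[\bullet,\bullet]_{\mathfrak{h}}$ specialization of Theorem \ref{thm-m-h-const}, matching the conditions of \eqref{cocycle-compatibility} to \eqref{matchpaircond}, \eqref{comp-mpa}, and the Jacobi identity of $\mathfrak{h}$ exactly as you do. Your line-by-line identification of which condition reduces to which is the same as the paper's, up to a harmless difference in how the lines of \eqref{cocycle-compatibility} are counted.
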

If a Lie algebra $\mathfrak{K}$ is constructed in the realm of Proposition \ref{mp-prop}, then we call $\mathfrak{K}$ a matched pair (double cross sum) Lie algebra, and denote it as $\mathfrak{K}=\mathfrak{g}\bowtie \mathfrak{h}$, see also \cite{Zh10}. Notice that, we denote the matched pair Lie bracket \eqref{mpla} by $\lbrack \bullet, \bullet]_{\bowtie}$. If one of the actions in \eqref{matched-pair-mutual-actions} is trivial then we arrive at a semidirect product Lie algebra. So that matched pair Lie algebras are generalizations of the semidirect Lie algebras \cite{Cendra98,Cendra01,Cendra03,Holm98}. 

\textbf{From the extended structure to matched pairs.} Recall the extended structure in Subsection \ref{brzezinski}. In that realization, a more relaxed construction is given where $\G{h}$ is not necessarily assumed to be a Lie algebra.  
Consider particular case of that structure where the mapping $\Phi$ in \eqref{thm-m-h-const-maps} is taken to be  identically zero. In the realm of Theorem \ref{thm-m-h-const}, for the case of $\Phi\equiv 0$, the last condition in the list \eqref{cocycle-compatibility} gives that $\kappa$ mapping satisfies the Jacobi identity that is
\begin{equation} \label{kappa-Jac}
\circlearrowright \kappa(\eta,\kappa(\eta',\eta''))=0.	
\end{equation}
This reads that the vector space $\mathfrak{h}$ turns out to be a Lie algebra:
\begin{equation}\label{kappabracket}
[\eta,\eta']:=\kappa(\eta,\eta').
\end{equation} 
Further, for $\Phi\equiv 0$, the third line and the fifth line in the compatibility list \eqref{cocycle-compatibility} reduce to the action conditions \eqref{matchpaircond}, and the second and fourth lines in \eqref{cocycle-compatibility} become the matched pair compatibility conditions in \eqref{comp-mpa}. This observation says that Theorem \ref{mp-prop} is a particular case of Theorem \ref{thm-m-h-const}. That is, every matched pair Lie algebra is an extended structure.   
For the matched pair theory, Universal Lemma \eqref{uni-prop-bre} takes the following particular form

\begin{lemma}[Universal Lemma II] \label{universal-prop}
	Let $\mathfrak{K}$ be a Lie algebra with two Lie subalgebras $\mathfrak{g}$ and $\mathfrak{h}$ such that $\mathfrak{K}$ is isomorphic to the direct sum of $\mathfrak{g}$ and $\mathfrak{h}$ as vector spaces through the vector addition in $\mathfrak{K}$. Then $\mathfrak{K}$ is isomorphic to the matched pair $\mathfrak{g}\bowtie\mathfrak{h}$ as Lie algebras, and the mutual actions \eqref{matched-pair-mutual-actions} are derived from 
	\begin{equation} \label{mab-defn}
	[\eta,\xi] = \eta \vartriangleright \xi \oplus \eta \vartriangleleft \xi.
	\end{equation}
	Here, the inclusions of the subalgebras are defined to be 
	\begin{equation}
	\mathfrak{g} \longrightarrow \mathfrak{K}: \xi \to (\xi\oplus 0),\qquad \mathfrak{h} \longrightarrow \mathfrak{K}: \eta \to (0\oplus \eta).
	\end{equation}
\end{lemma}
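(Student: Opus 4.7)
The plan is to obtain Universal Lemma II as a direct specialization of Universal Lemma (Lemma \ref{uni-prop-bre}) to the situation in which the complementary subspace is itself a Lie subalgebra. First, I would invoke Lemma \ref{uni-prop-bre}: since $\G{g}$ is a Lie subalgebra and $\G{h}$ is a vector space complement with $\G{K} = \G{g} \oplus \G{h}$, the mappings $\Phi$, $\kappa$, $\vartriangleright$, $\vartriangleleft$ obtained from \eqref{phi}, \eqref{kappa}, and \eqref{bracketactions} satisfy the compatibility list \eqref{cocycle-compatibility}, and yield an identification of $\G{K}$ with the extended structure $\G{g}\,{}_{_\Phi\bowtie}\,\G{h}$ via the bracket \eqref{mpla-2-cocyc-1}.

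Next, I would use the additional hypothesis that $\G{h}$ is a Lie subalgebra of $\G{K}$. For any $\eta,\eta' \in \G{h}$, the bracket $[\eta,\eta']$ taken in $\G{K}$ lies entirely in $\G{h}$, so projection to $\G{g}$ gives $\Phi(\eta,\eta') = 0$ and projection to $\G{h}$ gives $\kappa(\eta,\eta') = [\eta,\eta']_{\G{h}}$, as in \eqref{kappabracket}. Substituting $\Phi \equiv 0$ into \eqref{mpla-2-cocyc-1} recovers precisely the matched pair bracket \eqref{mpla}, with $\kappa$ playing the role of the Lie bracket on $\G{h}$, while \eqref{bracketactions} delivers exactly the decomposition \eqref{mab-defn} of $[\eta,\xi]$ into its $\G{g}$- and $\G{h}$-components $\eta \vartriangleright \xi$ and $\eta \vartriangleleft \xi$.

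To finish, I would verify that the matched pair axioms of Theorem \ref{mp-prop} hold: as observed in the paragraph preceding this lemma, with $\Phi \equiv 0$ and $\kappa$ interpreted as the Lie bracket of $\G{h}$, the third and fifth lines of \eqref{cocycle-compatibility} collapse to the action conditions \eqref{matchpaircond}, and the second and fourth lines of \eqref{cocycle-compatibility} become the matched pair compatibility conditions \eqref{comp-mpa}. The remaining identities become trivial or the Jacobi identities in $\G{g}$ and $\G{h}$, which hold by assumption. Therefore the identity map on the underlying vector space is a Lie algebra isomorphism $\G{K} \cong \G{g} \bowtie \G{h}$, with inclusions of $\G{g}$ and $\G{h}$ as stated.

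The only nontrivial point is the bookkeeping needed to match \eqref{cocycle-compatibility} against \eqref{matchpaircond} and \eqref{comp-mpa} once $\Phi \equiv 0$ is imposed; this is a cyclic-sum verification rather than a new computation, and it has already been carried out implicitly in the discussion preceding the lemma. No new identities are required beyond those established in Theorem \ref{thm-m-h-const} and Lemma \ref{uni-prop-bre}.
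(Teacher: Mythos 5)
Your proposal is correct and follows essentially the same route as the paper: the paper offers no separate proof of Universal Lemma II, but derives it exactly as you do, namely as the specialization of Lemma \ref{uni-prop-bre} in which the complement $\G{h}$ is a Lie subalgebra, so that $\Phi\equiv 0$, $\kappa$ becomes the Lie bracket of $\G{h}$, and the list \eqref{cocycle-compatibility} collapses to the action axioms \eqref{matchpaircond} together with the matched pair conditions \eqref{comp-mpa}. Your bookkeeping of which lines of \eqref{cocycle-compatibility} become which matched pair identities agrees with the discussion preceding the lemma, so nothing further is needed.
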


\textbf{Coordinate realizations.} Assume that the coordinates are chosen as in Subsection  \ref{brzezinski}.  
To have the local characterization for the matched pair Lie algebra, we first analyse the structure constants given in \eqref{non-bracket-constants}. See that the first and the second lines remain the same but since $\Phi^\alpha_{ab}$ are all zero and, the constants $\kappa^{d}_{b a}$ turn out to be structure constants of the Lie algebra $\G{h}$. In order to highlight this, we denote the structure constants by $D^{d}_{b a}$. Let us record this 
\begin{equation}\label{nonbracket2-mp}
[\mathbf{e}_\alpha,\mathbf{e}_\beta]=C^\gamma_{\alpha \beta}\mathbf{e}_\gamma, \qquad [\mathbf{f}_a,\mathbf{f}_b]=\Phi^\alpha_{ab}\mathbf{e}_\alpha+\kappa^d_{ab}\mathbf{f}_d,
\end{equation}
Therefore, we have 
\begin{equation}
\left[\bar{\mathbf{e}}_{b}, \bar{\mathbf{e}}_{a}\right]=\bar{C}_{b a}^{\gamma}\bar{\mathbf{e}}_{\gamma}+\bar{C}_{b a}^{d}\bar{\mathbf{e}}_{d} =\left[ 0\oplus \mathbf{f}_{b},0\oplus \mathbf{f}_{a} \right ]_{\bowtie} =0\oplus D^{d}_{b a}\mathbf{f}_{d}.
\end{equation}
So that the structure constants of the matched Lie algebra are definitely equal to \eqref{sc-nonbracket} except that, in the present case, $\bar{C}_{b a}^{\gamma}=0$, and $\bar{C}_{b a}^{d}=D_{b a}^{d}$:  
\begin{equation} \label{sc-nonbracket-mp}
\begin{split}
\bar{C}_{\beta \alpha}^{\gamma}&=C_{\beta \alpha}^{\gamma},\qquad \bar{C}_{\beta \alpha}^{a}=0, \qquad  \bar{C}_{\beta a}^{\gamma}=-L_{a \beta}^{\gamma},\\ \bar{C}_{\beta a}^{d}&=-R_{a \beta}^{d}, \qquad \bar{C}_{b a}^{\gamma}= 0, \qquad \bar{C}_{b a}^{d}=D_{b a}^{d}.
\end{split}
\end{equation}

\subsection{2-cocycle Extension}\label{2coc-Sec}

In this subsection, we discuss another particular instance of extended structures exhibited in Subsection \ref{brzezinski}. In this case, we assume that the right action \eqref{Lieact1} of $\mathfrak{g}$ on $\mathfrak{h}$ and the Lie bracket on $\mathfrak{g}$ are trivial while all the other geometric ingredients of Theorem \ref{thm-m-h-const} are remaining the same. In other words, in this subsection,
\begin{equation}
	\eta \vartriangleright \xi=0, \qquad [\xi,\xi']=0
\end{equation}
for all $\xi$ and $\xi'$ in $\mathfrak{g}$, and for all $\eta$ in $\mathfrak{h}$.  
As a result of this selection, one observes that the extended Lie bracket \eqref{mpla-2-cocyc-1} and the list of conditions \eqref{cocycle-compatibility} will take particular forms. Let us examine them from the bottom to the top. Since the right action is trivial, the last condition in \eqref{cocycle-compatibility} turns out out to be the Jacobi identity \eqref{kappa-Jac} for $\kappa$. This manifests that the two-tuple $(\G{h},\kappa)$ becomes a Lie algebra. Accordingly, in this section, we denote $\kappa$ by a bracket notation $[\bullet,\bullet]$ as in \eqref{kappabracket}. On the other hand, the penultimate condition, namely the twisted 2-cocycle condition, in  \eqref{cocycle-compatibility} takes the particular form
\begin{equation}\label{ext-act}
\circlearrowright \Phi(\eta,[\eta',\eta'']) + \circlearrowright \eta \vartriangleright \Phi(\eta',\eta'') = 0.
\end{equation}
This determines $\Phi$ as a $\mathfrak{g}$-valued 2-cocycle on $\mathfrak{h}$ \cite{Fuks}. The second, the fourth and the fifth conditions in \eqref{cocycle-compatibility} are identically satisfied whereas the third line 
\begin{equation}
[\eta,\eta']\vartriangleright \xi =  \eta \vartriangleright (\eta' \vartriangleright \xi) - \eta' \vartriangleright (\eta \vartriangleright \xi)
\end{equation}
is exploiting that $\vartriangleright$ is a left Lie algebra action of $\mathfrak{h}$ on  $\mathfrak{g}$. Eventually, we arrive at the following reduced form of the Lie algebra bracket \eqref{mpla-2-cocyc-1}
 \begin{equation}\label{AAAAA}
 [\xi \oplus \eta, \xi' \oplus \eta']_{_\Phi\rtimes}= \big(\eta \vartriangleright  \xi'-\eta' \vartriangleright  \xi+ \Phi(\eta,\eta')    \big) \oplus[\eta,\eta']. 
 \end{equation} 
over the direct product space. In this case, we denote the total space by $\mathfrak{g}_{_\Phi\rtimes}\mathfrak{h}$ equipped with the Lie bracket \eqref{AAAAA} and call it $2$-cocycle extension of $\G{h}$ by a vector field $\G{g}$.

Once again, as a manifestation of Universal Lemma \ref{uni-prop-bre} we can discuss the decomposition point of view as follows. Assume a Lie algebra $\mathfrak{K}$ and one of its nontrivial centers, say $\G{g}$. Consider the decomposition $\mathfrak{g} \oplus \mathfrak{h}$
inducing nontrivial $\Phi$ and $\kappa$ mappings as in \eqref{phi}
and \eqref{kappa} and a left action \eqref{Lieact} then Universal Lemma \ref{uni-prop-bre} reads that, $\mathfrak{K}$ can be decomposed into a $2$-cocycle extension of $\G{h}$ by $\G{g}$ that is $\mathfrak{K}=\mathfrak{g}_{_\Phi\rtimes}\mathfrak{h}$. 

\textbf{Coordinate realizations.} Assume once more that the coordinates are chosen as in Subsection  \ref{brzezinski}.  
In this case, the Lie bracket on $\mathfrak{g}$ and, the right action of $\mathfrak{g}$ on $\mathfrak{h}$ are trivial. So that, the structure constants $C_{\beta \alpha}^\gamma$ of $\mathfrak{g}$
given in \eqref{nonbracket2} will be all zero while the constants
$\Phi^\alpha_{ab}$ and $D^d_{ab}$
determining $\Phi$ and $\kappa$ mappings remain the same. Since the right action is zero, $R_{a \beta}^d$ in \eqref{local-act} will be zero whereas the scalars $L_{a \alpha}^{\beta}$ are giving the left action. 
 If these aforementioned changes are applied to the system of equations \eqref{non-bracket-constants}, one can reach the structure constants of 2-cocycles. 

\section{Lie-Poisson Dynamics on Extensions}\label{LP-Ext-Sec}
Dual of a Lie algebra admits Lie-Poisson bracket according to the definition in \eqref{LP-Bra}. In the present section, following the order of the extensions and couplings in Section \ref{Sec-MP}, we compute associated Lie-Poisson brackets. 

\subsection{Lie-Poisson Systems on Duals of Extended Structures}
\label{2-LP-Coec}

Assume the Lie algebraic framework in Subsection \ref{brzezinski}, and let that all the conditions in Theorem \ref{thm-m-h-const} hold. 
Now we start with the left action $\vartriangleright$ in (\ref{Lieact}) then, freeze an element $\eta$ in $\mathfrak{h}$ in this operation to have a linear mapping $\eta \vartriangleright$ on the subalgebra $\mathfrak{g}$. This linear mapping and the dual action $\overset{\ast }{\vartriangleleft} \eta$ are
\begin{equation} \label{eta-star}
\begin{split}
\eta \vartriangleright&:\mathfrak{g}\longrightarrow  \mathfrak{g}, \qquad \xi \mapsto \eta \vartriangleright \xi
\\
\overset{\ast }{\vartriangleleft} &:\mathfrak{g}^*\otimes \G{h} \longrightarrow \mathfrak{g}^*, 
\qquad \langle \mu \overset{\ast }{\vartriangleleft} \eta, \xi \rangle=\langle \mu, \eta \vartriangleright \xi \rangle.
\end{split}
\end{equation}
This dual mapping is a right representation of $\mathfrak{h}$ on the dual space $\mathfrak{g}^*$. 
Later, by freezing $\xi\in \mathfrak{g}$ in the left action $\vartriangleright$ in (\ref{Lieact}), we define a linear mapping $\mathfrak{b}_\xi: \mathfrak{h} \mapsto \mathfrak{g}$. We record here this linear mapping $\mathfrak{b}_\xi$ and the dual mapping $\mathfrak{b}_\xi^*$ as
\begin{align}
\label{b}
\mathfrak{b}_\xi&: \mathfrak{h} \longrightarrow \mathfrak{g},\qquad \mathfrak{b}_\xi(\eta)=\eta\vartriangleright \xi,
\\
\label{b*}
\mathfrak{b}_\xi^*&:\mathfrak{g}^*\longrightarrow \mathfrak{h}^*, \qquad \langle \mathfrak{b}_\xi^*\mu,\eta \rangle= \langle \mu, \mathfrak{b}_\xi \eta \rangle = \langle \mu, \eta\vartriangleright \xi  \rangle.
\end{align}

Consider the right action $\vartriangleleft$ in (\ref{Lieact1}). In this operation, we freeze $\xi$ in $\G{g}$ to have an automorphism   on $\mathfrak{h}$, denoted by $\vartriangleleft\xi$. We write  $\vartriangleleft\xi$ and its dual $\xi \overset{\ast }{\vartriangleright}$ in the following display 
\begin{equation} \label{xi-star}
\begin{split}
\vartriangleleft\xi&:\mathfrak{h} \longrightarrow \mathfrak{h}, \qquad \eta \mapsto 
\eta \vartriangleleft\xi
\\
\overset{\ast }{\vartriangleright}&: \G{g}\times \mathfrak{h}^* \longrightarrow \mathfrak{h}^*, \qquad 
\langle \xi \overset{\ast }{\vartriangleright}\nu, \eta \rangle 
=\langle\nu,  \eta \vartriangleleft\xi\rangle.
\end{split}
\end{equation}
See that, $\overset{\ast }{\vartriangleright}$ is a left representation of $\mathfrak{g}$ on the dual $\mathfrak{h}^*$. Further, we freeze an element, say $\eta$ in $\mathfrak{h}$, in the right action (\ref{Lieact1}). This enables us to define a linear mapping $\mathfrak{a}_\eta$ from $\mathfrak{g}$ to $\mathfrak{h}$. Here are the mapping $\mathfrak{a}_\eta$ and its dual $\mathfrak{a}_\eta^*$ in a respective order
\begin{align} 
\mathfrak{a}_\eta&:\mathfrak{g}\mapsto \mathfrak{h}, \qquad \mathfrak{a}_\eta(\xi)=\eta\vartriangleleft \xi \label{a}
\\
\label{a*}
\mathfrak{a}_\eta^*&:\mathfrak{h}^*\mapsto \mathfrak{g}^*, \qquad \langle \mathfrak{a}_\eta^* \nu,\xi \rangle =
\langle \nu,\mathfrak{a}_\eta \xi \rangle=\langle \nu,\eta\vartriangleleft \xi \rangle.
\end{align}

Let us recall mappings $\Phi$ and $\kappa$  displayed in \eqref{phi} and \eqref{kappa}, respectively. Define two functions $\kappa_\eta$ and $\Phi_\eta$ as
\begin{align} 
\kappa_\eta &: \mathfrak{h} \rightarrow \mathfrak{h} \qquad \kappa_\eta(\eta') := \kappa(\eta,\eta')\label{kappan}
\\ 
\Phi_\eta &: \mathfrak{h} \rightarrow \mathfrak{g} \qquad \Phi_\eta(\eta') := \Phi(\eta,\eta')\label{phin}
\end{align}
where $\eta,\eta' \in \mathfrak{h}$, $\nu \in \mathfrak{h}^*$ and $\mu \in \mathfrak{g}^*$. 
 According to these definitions, the dual mappings are calculated as
\begin{align}  
\kappa_{\eta}^* &: \mathfrak{h}^* \rightarrow \mathfrak{h}^* \qquad \langle \kappa_{\eta}^*\nu, \eta' \rangle = \langle \nu, -\kappa_\eta(\eta') \rangle = -\langle \nu, \kappa(\eta,\eta') \rangle,\label{kappa_coad}
\\ \label{phi_coad}
\Phi_{\eta}^* &:  \mathfrak{g}^* \rightarrow \mathfrak{h}^* \qquad \langle \Phi_{\eta}^*\mu, \eta' \rangle = \langle \mu, -\Phi_\eta(\eta') \rangle = -\langle \mu, \Phi(\eta,\eta') \rangle,
\end{align}
respectively. 
\begin{proposition} \label{adfortwisted1}
	The adjoint action on $\mathfrak{g}_\Phi\bowtie  \mathfrak{h}$ is being the extended Lie bracket in \eqref{mpla-2-cocyc-1}, the coadjoint
	action $\ad^{\ast}$ of an element $\xi\oplus\eta$ in $\mathfrak{g}_\Phi\bowtie  \mathfrak{h}$ onto an element $\mu\oplus\nu$ in the dual space $\mathfrak{g}^*\oplus \mathfrak{h}^{\ast}$ is computed to be
	\begin{equation}  \label{ad-*-}
	\ad_{(\xi\oplus\eta)}^{\ast}(\mu\oplus\nu)=\underbrace{ \big(ad^{\ast}_{\xi} \mu -\mu \overset{\ast }{%
			\vartriangleleft}\eta - \mathfrak{a}_{\eta}^{\ast}\nu\big)}_{\in ~ \mathfrak{g}^*}\oplus \underbrace{ \big(
		\kappa^{\ast}_{\eta} \nu+\Phi_{\eta}^*\mu +\xi \overset{\ast }{\vartriangleright}\nu+ \mathfrak{b}%
		_{\xi}^{\ast}\mu\big )}_{\in ~ \mathfrak{h}^*}.
	\end{equation}
	Here, $ad^{\ast}$ (in italic font) is for the infinitesimal coadjoint actions of subalgebras to their duals.
\end{proposition}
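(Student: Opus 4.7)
The plan is a direct computation from the definition of the coadjoint action. Since the paper sets $\ad^{\ast}$ equal to minus the linear-algebraic dual of $\ad$, the starting point is
\[
\langle \ad^{\ast}_{(\xi\oplus\eta)}(\mu\oplus\nu),\, \xi'\oplus\eta' \rangle = -\langle \mu\oplus\nu,\, [(\xi\oplus\eta),(\xi'\oplus\eta')]_{_\Phi\bowtie} \rangle,
\]
where the pairing on the right decomposes as $\langle \mu,\cdot\rangle_{\G{g}}+\langle \nu,\cdot\rangle_{\G{h}}$, and the bracket on the right is given explicitly by \eqref{mpla-2-cocyc-1}. The task then reduces to expanding the six resulting bilinear terms and re-identifying each of them via the various dualizations introduced in \eqref{eta-star}--\eqref{phi_coad}.

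First I would handle the $\G{g}^{\ast}$-component of the output, i.e.\ the terms paired with $\xi'$. The bracket $[\xi,\xi']$ dualizes to the ordinary coadjoint term $ad^{\ast}_{\xi}\mu$. The summand $\eta\vartriangleright \xi'$, with $\eta$ frozen and $\xi'$ variable, is by definition paired to $\mu\overset{\ast}{\vartriangleleft}\eta$ via \eqref{eta-star}, contributing $-\mu\overset{\ast}{\vartriangleleft}\eta$ after the overall sign. The summand $\eta\vartriangleleft\xi'$ appearing against $\nu$, again with $\eta$ frozen and $\xi'$ variable, is precisely $\G{a}_{\eta}(\xi')$, so its dual contribution is $-\G{a}^{\ast}_{\eta}\nu$. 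Collecting these yields the first slot of \eqref{ad-*-}.

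Next I would treat the $\G{h}^{\ast}$-component, i.e.\ the terms paired with $\eta'$. The cocycle term $\Phi(\eta,\eta')$ against $\mu$ gives $\Phi^{\ast}_{\eta}\mu$ by \eqref{phi_coad} (the sign is absorbed into the definition of $\Phi^{\ast}_{\eta}$). Similarly, $\kappa(\eta,\eta')$ against $\nu$ produces $\kappa^{\ast}_{\eta}\nu$ via \eqref{kappa_coad}. The remaining two summands need the \emph{other} freezing choice: in $-\eta'\vartriangleright\xi$ one fixes $\xi$ and lets $\eta'$ vary, so this is $\G{b}_{\xi}(\eta')$ and hence dualizes to $+\G{b}^{\ast}_{\xi}\mu$; while in $-\eta'\vartriangleleft\xi$ one fixes $\xi$ and lets $\eta'$ vary, which by \eqref{xi-star} is $+\xi\overset{\ast}{\vartriangleright}\nu$. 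Collecting these gives the second slot of \eqref{ad-*-}.

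No Jacobi-type compatibility from \eqref{cocycle-compatibility} is required at this stage; the statement is only about dualizing a bilinear bracket, so the computation is purely bookkeeping. The only real pitfall, which I expect to be the main obstacle, is keeping the two different dualizations of $\vartriangleright$ (namely $\overset{\ast}{\vartriangleleft}$ and $\G{b}^{\ast}_{\xi}$) and of $\vartriangleleft$ (namely $\overset{\ast}{\vartriangleright}$ and $\G{a}^{\ast}_{\eta}$) straight, since they correspond to freezing different arguments and therefore end up in different components of $\G{g}^{\ast}\oplus\G{h}^{\ast}$. Once the two freezing conventions are disciplined, assembling the six summands reproduces \eqref{ad-*-} verbatim.
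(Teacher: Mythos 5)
Your computation is correct and is precisely the direct dualization the paper has in mind (the paper in fact states Proposition \ref{adfortwisted1} without writing out a proof): each summand of \eqref{mpla-2-cocyc-1} lands in the component and with the sign you indicate, the overall minus in the definition of $\ad^{\ast}$ cancelling against the explicit minus signs in front of $\eta'\vartriangleright\xi$ and $\eta'\vartriangleleft\xi$ and being absorbed by the built-in signs in \eqref{kappa_coad}--\eqref{phi_coad}. Your emphasis on keeping the two freezing conventions straight is exactly the right bookkeeping point, and no compatibility condition from \eqref{cocycle-compatibility} is needed, as you say.
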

By using the equations \eqref{kappa_coad} and \eqref{phi_coad}, (plus/minus) extended Lie-Poisson bracket is computed as
\begin{equation} 
\begin{split} \label{LiePoissononcocycle}
\left\{ \mathcal{H},\mathcal{F}\right\}_{_\Phi\bowtie}(\mu\oplus\nu) =& \pm \left \langle \mu\oplus\nu, \left[ \Big(\frac{\delta \mathcal{H}}{\delta \mu}\oplus \frac{\delta \mathcal{H}}{\delta \nu} \Big), 
\Big(\frac{\delta \mathcal{F}}{\delta \mu}\oplus \frac{\delta \mathcal{F}}{\delta \nu} \Big)
\right ]_{_\Phi\bowtie}\right \rangle  \\
=&
\pm \left\langle \mu ,\left[\frac{\delta \mathcal{H}}{%
	\delta \mu},\frac{\delta \mathcal{F}}{\delta \mu}\right] \right\rangle
\pm\left\langle \nu ,\kappa \left(\frac{\delta \mathcal{H}}{\delta \nu},\frac{\delta
	\mathcal{F}}{\delta \nu}\right)\right\rangle \pm \underbrace{\left\langle \mu , \Phi \left(\frac{\delta \mathcal{H}}{\delta \nu},\frac{\delta
		\mathcal{F}}{\delta \nu}\right) \right\rangle}_{\text{A: from twisted cocycle}} \\ 
&\pm \underbrace{\left\langle \mu ,\frac{\delta \mathcal{H}}{\delta \nu}
	\vartriangleright \frac{\delta \mathcal{F}}{\delta \mu}\right\rangle
	\mp\left\langle \mu ,\frac{\delta \mathcal{F}}{\delta \nu}\vartriangleright
	\frac{\delta \mathcal{H}}{\delta \mu}\right\rangle}_ {\text{B: action of $\mathfrak{h}$ on $\mathfrak{g}$ from the left}}
\pm\underbrace{ \left\langle \nu ,\frac{\delta \mathcal{H}}{\delta \nu}\vartriangleleft \frac{\delta \mathcal{F}}{\delta \mu}\right\rangle
	\mp\left\langle \nu ,\frac{\delta \mathcal{F}}{\delta \nu}\vartriangleleft
	\frac{\delta \mathcal{H}}{\delta \mu}\right\rangle} _{\text{C: action of $\mathfrak{g}$ on $\mathfrak{h}$ from the right}}
\end{split}
\end{equation}
for two functions $\mathcal{H}, \mathcal{F}$. We assume the reflexivity condition which reads that ${\delta \mathcal{H}}/{\delta \mu}$ and ${\delta \mathcal{F}}/{\delta \mu}$ are elements of $\mathfrak{g}$ whereas ${\delta \mathcal{H}}/{\delta \nu}$ and ${\delta \mathcal{F}}/{\delta \nu}$ are elements of $\mathfrak{h}$.  The Lie bracket on the first line in \eqref{LiePoissononcocycle} is the extended Lie bracket $[\bullet,\bullet]_{_\Phi\bowtie}$ in \eqref{mpla-2-cocyc-1}. In the Poisson bracket, the term labelled by A is a manifestation of the existence of twisted cocycle $\Phi$. The terms labelled by B are  due to the \textit{left action} of $\mathfrak{h}$ on $\mathfrak{g}$ whereas the terms labelled by C are due to the right action of $\mathfrak{g}$ on $\mathfrak{h}$.

Recall 
the (plus/minus) Lie-Poisson equation in \eqref{eqnofmotion} determined as a coadjoint flow. In the light of the Lie-Poisson bracket \eqref{LiePoissononcocycle}, governed by a Hamiltonian function $\mathcal{H}=\mathcal{H}(\mu,\nu)$), for the present picture, the (plus/minus) Lie-Poisson equation is computed as
\begin{equation}\label{LPEghcocycle}
\begin{split}
& \underbrace{\dot{\mu} = \pm
	ad^{\ast}_{\frac{\delta\mathcal{H}}{\delta\mu}}(\mu)}_{\text{Lie-Poisson Eq. on} \  \mathfrak{g}^*}\mp\underbrace{\mu\overset{\ast }{\vartriangleleft}
	\frac{\delta\mathcal{H}}{\delta\nu}}_{\text{action of} \ \mathfrak{h}}
\mp
\underbrace{\mathfrak{a}_{\frac{\delta\mathcal{H}}{\delta\nu}}^{\ast}\nu,}
_{\text{action of} \ \mathfrak{g}} 
\\
&\dot{\nu} =\pm
\kappa^{\ast}_{\frac{\delta\mathcal{H}}{\delta\nu}}(\nu)
\pm \underbrace{\Phi_{\frac{\delta\mathcal{H}}{\delta\nu}}^*(\mu)}
_{\text{twisted cocycle}}
\pm
\underbrace{
	\frac{\delta\mathcal{H}}{\delta\mu} \overset{\ast }{\vartriangleright}\nu}_{\text{action of} \ \mathfrak{g}}
\pm \underbrace{\mathfrak{b}
	_{\frac{\delta\mathcal{H}}{\delta\mu}}^{\ast}\mu.
}_{\text{action of} \ \mathfrak{h}}
\end{split}
\end{equation}
Notice that we have determined and labelled the terms in the Lie-Poisson in order to identify them. As depicted in \eqref{LPEghcocycle}, one can easily follow how the Lie-Poisson  dynamics on $\G{g}^*$ is extended by addition of terms coming from the mutual actions of $\G{h}$ and $\G{g}$ on each other as well as from the twisted 2-cocycle term. 

\textbf{Coordinate realizations.} We follow the notation in Subsection  \ref{brzezinski}. Recall, $(N+M)$-dimensional  extended structure $\mathfrak{K}=\mathfrak{g}\,_\Phi\bowtie \mathfrak{h}$. Denote the dual basis of $\mathfrak{g}^*$ and $\mathfrak{h}^*$ by $\{\mathbf{e}^\alpha\}$ and  $\{\mathbf{f}^a\}$, respectively. Then, define the dual basis 
\begin{equation}
\{\bar{\mathbf{e}}^{\alpha},\bar{\mathbf{e}}^a\}\subset \mathfrak{g}^* \oplus  \mathfrak{h}^*, \qquad \bar{\mathbf{e}}^{\alpha}=\mathbf{e}^\alpha\oplus 0,\qquad \bar{\mathbf{e}}^a=0\oplus \mathbf{f}^a
\end{equation}
on the dual space $\mathfrak{g}^*\oplus \mathfrak{h}^*$. 
Using this basis, we can write an element of $ \mathfrak{g}^* \oplus \mathfrak{h}^* $ as follows
\begin{equation}
(\mu,\nu)=\mu_{\alpha}\bar{\mathbf{e}}^{\alpha}+\nu_{a}\bar{\mathbf{e}}^a.
\end{equation}
In this picture, the mappings \eqref{eta-star} and \eqref{xi-star} turn out to be
\begin{equation}\label{1}
(\mu_\alpha \mathbf{e}^\alpha) \overset{\ast}{\vartriangleleft} (\eta^a \mathbf{f}_a)=\mu_\alpha \eta^a L_{a\beta}^\alpha \mathbf{e}^\beta
, \qquad 
(\xi^\alpha \mathbf{e}_\alpha) \overset{\ast}{\vartriangleright} (\nu_a\mathbf{f}^a)=\nu_a \xi^\alpha R_{b\alpha}^a \mathbf{f}^b,
\end{equation}
where $L_{a\beta}^\alpha$ and $R_{b\alpha}^a$ are scalars in \eqref{local-act} determining the \textit{actions}. Later, we compute the dual mappings in (\ref{b*}), (\ref{a*}) and also \eqref{kappa_coad}, \eqref{phi_coad} in terms of local coordinates as follows:
\begin{align}\label{6}
\mathfrak{b}^*_{(\xi^\alpha \mathbf{e}_\alpha)} (\mu_\alpha \mathbf{e}^\alpha)& =\mu_\alpha \xi^\beta L_{a \beta}^\alpha \mathbf{f}^a, \qquad 
\mathfrak{a}^*_{(\eta^a \mathbf{f}_a)} (\nu_a \mathbf{e}^a)=\nu_a \eta^b R_{b \alpha}^a \mathbf{e}^\alpha,
\\
	\kappa^{\ast}_\eta \nu &=-\kappa^a_{bd} \nu_a  \eta^b \mathbf{f}^d, \qquad \Phi^{\ast}_\eta \mu= -\Phi_{b k}^\alpha \mu_\alpha \eta^b \mathbf{f}^k. 
\end{align}
Therefore, the (plus/minus) Lie-Poisson bracket \eqref{LiePoissononcocycle} is written in coordinates as  
\begin{equation}
\begin{split}\label{Lie-pois-double-non-bracket}
\left \{ \mathcal{H},\mathcal{F} \right \}_{_\Phi\bowtie}(\mu\oplus\nu)=&\pm \mu_\alpha C_{\beta \gamma}^\alpha \frac{\partial  \mathcal{H}}{\partial \mu_\beta}\frac{\partial \mathcal{F}}{\partial \mu_\gamma} 
\pm
\nu_a \kappa_{b d}^a \frac{\partial \mathcal{H}}{\partial \nu_b}\frac{\partial \mathcal{F}}{\partial \nu_d} 
\pm
 \mu_\alpha \Phi_{bk}^\alpha \frac{\partial \mathcal{H}}{\partial \nu_b} \frac{\partial \mathcal{F}}{\partial \nu_k} \\
&
\pm
\mu_\alpha L_{a \beta }^\alpha \big(\frac{\partial \mathcal{H}}{\partial \nu_a}\frac{\partial \mathcal{F}}{\partial \mu_\beta}
-\frac{\partial \mathcal{F}}{\partial \nu_a}\frac{\partial \mathcal{H}}{\partial \mu_\beta}\big)
\pm
\nu_a R_{b \beta}^a \big( \frac{\partial \mathcal{H}}{\partial \nu_b}\frac{\partial \mathcal{F}}{\partial \mu_\beta}- \frac{\partial \mathcal{F}}{\partial \nu_b}\frac{\partial \mathcal{H}}{\partial \mu_\beta}\big).
\end{split}
\end{equation}
whereas the (plus/minus) Lie-Poisson dynamics \eqref{LPEghcocycle} as
\begin{equation}\label{Lie-pois-nonbracket-dyn}
\begin{split} 
\dot{\mu}_\beta&=
\pm
\mu_\rho C_{\beta \alpha}^\rho \frac{\partial \mathcal{H}}{\partial \mu_\alpha} 
\mp
\mu_\alpha L_{a \beta }^\alpha  \frac{\partial \mathcal{H}}{\partial \nu_a} 
\mp
\nu_a R_{b \beta }^a \frac{\partial \mathcal{H}}{\partial \nu_b} ,
\\
\dot{\nu}_d&=
\pm
\mu_\alpha \Phi_{db}^\alpha  \frac{\partial \mathcal{H}}{\partial \nu_b} 
\pm
\nu_a \kappa_{d b}^a \frac{\partial \mathcal{H}}{\partial \nu_b}   
\pm
\nu_a R_{d \alpha }^a  \frac{\partial \mathcal{H}}{\partial \mu_\alpha} 
\pm
\mu_\alpha L_{d \beta }^\alpha \frac{\partial \mathcal{H}}{\partial \mu_\beta} .
\end{split}
\end{equation}
\subsection{Matched Lie-Poisson Systems}\label{Sec-MP-LP}

In Subsection \ref{doublecross}, matched pair Lie algebra $\G{g}\bowtie \G{h}$ is realized as a particular instance of extended structure by choosing the twisted 2-cocycle $\Phi$ as trivial. So that, it is argued that, for a matched pair Lie algebra, both of the constitutive spaces, $\G{g}$ and $\G{h}$ are Lie subalgebras. Therefore, in this case, the duals of each of these subspaces, namely  $\G{g}^*$ and $\G{h}^*$, admit Lie-Poisson flows. This lets us to claim that the Lie-Poisson dynamics on the dual $\mathfrak{g}^*\oplus \mathfrak{h}^*$ of matched pair can be considered as the collective motion of two Lie-Poisson subdynamics \cite{OS16}. Algebraically, this corresponds to matching of two Lie coalgebras.  We refer  \cite{Mi80} for the Lie coalgebra structure of the dual space. 

On the dual of a matched pair, both of the dual actions $\overset{\ast }{\vartriangleleft} $ and $\overset{\ast }{\vartriangleright}$,  exhibited in $\eqref{eta-star}$
and $\eqref{xi-star}$ respectively, are equally valid. Notice that, for the present discussion $\vartriangleright$ is a true left action of $\G{h}$ on $\G{g}$ so that $\overset{\ast }{\vartriangleleft} $ is a true right dual action of $\G{h}$ on $\G{g}^*$. This is not true for extended structure since $\G{h}$ is not assumed to be a Lie subalgebra. It is immediate to observe that the dual mappings $\mathfrak{b}_\xi^*$ and $\mathfrak{a}_\eta^*$, in $\eqref{b*}$
and $\eqref{a*}$ respectively, are remaining the same. The difference of matched pair construction form the extended structure is that the $\kappa$ mapping, in \eqref{kappa}, is a Lie bracket and that $\Phi$ mapping, in \eqref{phi}, is zero. As previously stated, we prefer to denote $\kappa$ by a bracket, so we write the mapping \eqref{kappa_coad} as 
\begin{equation}\label{kappaequivad}
	\kappa_{\eta}^*\nu=ad^{\ast}_{\eta} \nu.
\end{equation}
These observations lead us to the following proposition as a particular case of Proposition \ref{adfortwisted1}, see also \cite{EsPaGr17,OS16}. 
\begin{proposition} \label{ad-*-prop}
	The adjoint action on $\mathfrak{g}\bowtie  \mathfrak{h}$ is being the matched pair Lie bracket in \eqref{mpla},	the  coadjoint
	action $\ad^{\ast}$ of an element $\xi\oplus\eta$ in $\mathfrak{g}\oplus \mathfrak{h}$ onto an element $\mu\oplus\nu$ in the dual space $\mathfrak{g}^*\oplus \mathfrak{h}^{\ast}$ is computed to be
	\begin{equation}  \label{ad-*}
	\ad_{(\xi\oplus\eta)}^{\ast}(\mu\oplus\nu)=\underbrace{ \big(ad^{\ast}_{\xi} \mu -\mu \overset{\ast }{%
			\vartriangleleft}\eta - \mathfrak{a}_{\eta}^{\ast}\nu\big)}_{\in ~ \mathfrak{g}^*}\oplus \underbrace{ \big(
		ad^{\ast}_{\eta} \nu +\xi \overset{\ast }{\vartriangleright}\nu+ \mathfrak{b}%
		_{\xi}^{\ast}\mu\big )}_{\in ~ \mathfrak{h}^*}.
	\end{equation}
	\end{proposition}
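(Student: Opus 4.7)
The plan is to obtain Proposition \ref{ad-*-prop} as a direct specialization of the previously-established Proposition \ref{adfortwisted1}, using the observation in Subsection \ref{doublecross} that a matched pair is precisely an extended structure in which the twisted $2$-cocycle $\Phi$ vanishes identically and the map $\kappa$ coincides with the genuine Lie bracket of $\mathfrak{h}$. Under this specialization, the term $\Phi_\eta^\ast\mu$ appearing in \eqref{ad-*-} drops out, and by \eqref{kappaequivad} the operator $\kappa_\eta^\ast$ becomes the ordinary coadjoint action $ad_\eta^\ast$ of $\mathfrak{h}$ on $\mathfrak{h}^\ast$. Substituting these two facts into \eqref{ad-*-} yields \eqref{ad-*} immediately.

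For a self-contained derivation, I would instead start from the defining relation
\begin{equation*}
\langle \ad^\ast_{(\xi\oplus\eta)}(\mu\oplus\nu),\, \xi'\oplus\eta'\rangle \;=\; \langle \mu\oplus\nu,\, [\xi\oplus\eta,\,\xi'\oplus\eta']_{\bowtie}\rangle,
\end{equation*}
substitute the matched pair bracket \eqref{mpla}, and distribute the pairing across the six resulting summands. The two inner-subalgebra brackets $[\xi,\xi']$ and $[\eta,\eta']$ contribute the expected $ad^\ast_\xi\mu$ and $ad^\ast_\eta\nu$ components in $\mathfrak{g}^\ast$ and $\mathfrak{h}^\ast$ respectively. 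The remaining four summands, all built from the mutual actions, are identified term by term with the dual mappings introduced in \eqref{eta-star}--\eqref{a*}: the pairing of $\mu$ with $\eta\vartriangleright\xi'$ yields $\mathfrak{b}_\xi^\ast\mu\in\mathfrak{h}^\ast$ when $\xi$ is frozen and $\mu\overset{\ast}{\vartriangleleft}\eta\in\mathfrak{g}^\ast$ when $\eta$ is frozen, and analogously the pairing of $\nu$ with $\eta\vartriangleleft\xi'$ splits into $\mathfrak{a}_\eta^\ast\nu\in\mathfrak{g}^\ast$ and $\xi\overset{\ast}{\vartriangleright}\nu\in\mathfrak{h}^\ast$. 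Collecting the $\mathfrak{g}^\ast$- and $\mathfrak{h}^\ast$-components produces precisely \eqref{ad-*}.

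The main bookkeeping hazard is the sign convention: $ad^\ast$ is defined as \emph{minus} the algebraic dual of $ad$, while $\overset{\ast}{\vartriangleleft}$ is a right dual action and $\overset{\ast}{\vartriangleright}$ is a left dual action. One must check that the cross-terms $-\eta'\vartriangleright\xi$ and $-\eta'\vartriangleleft\xi$ in \eqref{mpla}, together with these dualization conventions, consistently assemble into the single-entry operators appearing in \eqref{ad-*} rather than into symmetrized combinations. Once these conventions are pinned down, the verification reduces to reading off definitions, which is why invoking Proposition \ref{adfortwisted1} directly is the cleanest route.
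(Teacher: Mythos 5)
Your primary route---specializing Proposition \ref{adfortwisted1} by setting $\Phi\equiv 0$ and replacing $\kappa_\eta^\ast$ with $ad_\eta^\ast$ via \eqref{kappaequivad}---is exactly how the paper obtains this proposition, and it is correct. One caution on your supplementary direct derivation: with the paper's convention that $\ad^\ast$ is \emph{minus} the algebraic dual of $\ad$, the defining relation should carry a minus sign, $\langle \ad^\ast_{(\xi\oplus\eta)}(\mu\oplus\nu),\,\xi'\oplus\eta'\rangle = -\langle \mu\oplus\nu,\,[\xi\oplus\eta,\,\xi'\oplus\eta']_{\bowtie}\rangle$; only with that minus in place do the six summands assemble into \eqref{ad-*} with the signs shown (e.g.\ $-\mu\overset{\ast}{\vartriangleleft}\eta$ and $+\mathfrak{b}_{\xi}^{\ast}\mu$ rather than their negatives).
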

Later, on the dual space $\mathfrak{g}^\ast \oplus \mathfrak{h}^\ast$, the (plus/minus) Lie-Poisson bracket of double cross sum is computed to be
\begin{equation} 
\begin{split} \label{LiePoissonongh}
& \left\{ \mathcal{H},\mathcal{F}\right\}_{\bowtie}(\mu\oplus\nu) = \underbrace{\pm \left\langle \mu ,\left[\frac{\delta \mathcal{H}}{%
		\delta \mu},\frac{\delta \mathcal{F}}{\delta \mu}\right]\right\rangle
	\pm \left\langle \nu ,\left[\frac{\delta \mathcal{H}}{\delta \nu},\frac{\delta
		\mathcal{F}}{\delta \nu}\right]\right\rangle}_{\text{A: direct product}} \,
\underbrace{\mp\left\langle \mu ,\frac{\delta \mathcal{H}}{\delta \nu}
	\vartriangleright \frac{\delta \mathcal{F}}{\delta \mu}\right\rangle
	\pm  \left\langle \mu ,\frac{\delta \mathcal{F}}{\delta \nu}\vartriangleright
	\frac{\delta \mathcal{H}}{\delta \mu}\right\rangle}_ {\text{B: via the left action of $\mathfrak{h}$ on 
		$\mathfrak{g}$}} \\
&  \hspace{3cm} \underbrace{\mp \left\langle \nu ,\frac{\delta \mathcal{H}}{\delta \nu}\vartriangleleft \frac{\delta \mathcal{F}}{\delta \mu}\right\rangle
	\pm \left\langle \nu ,\frac{\delta \mathcal{F}}{\delta \nu}\vartriangleleft
	\frac{\delta \mathcal{H}}{\delta \mu}\right\rangle} _{\text{C: via the
		right action of $\mathfrak{g}$ on 
		$\mathfrak{h}$}}.
\end{split}
\end{equation}
Notice that, the terms labelled by A are just the sum of individual Poisson brackets on the dual spaces $\mathfrak{g}^* $ and $\mathfrak{h}^* $ of the constitutive Lie subalgebras $\mathfrak{g}$ and $\mathfrak{h}$, respectively. The terms labeled by B is a result of the left action of $\mathfrak{h}$ on $\mathfrak{g}$ whereas the terms labelled by C is due to the right action of $\mathfrak{g}$ on $\mathfrak{h}$. For the case of one-sided actions, that is semidirect product theories, B or C drops. If there is no action then, both  B and C drop. In the light of the (plus/minus) matched pair Lie-Poisson bracket \eqref{LiePoissonongh}, matched pair Lie-Poisson equations generated by a Hamiltonian function 
$\mathcal{H}=\mathcal{H}(\mu,\nu)$ on $\mathfrak{g}^\ast\oplus\mathfrak{h}^\ast$ is computed to be
\begin{equation}\label{LPEgh}
\begin{split}
& \underbrace{\dot{\mu}= \pm
	ad^{\ast}_{\frac{\delta\mathcal{H}}{\delta\mu}}(\mu)}_{\text{Lie-Poisson Eq. on }\mathfrak{g}^*}\,
\underbrace{\mp\mu\overset{\ast }{\vartriangleleft}
	\frac{\delta\mathcal{H}}{\delta\nu}}
_{\text{action of } \mathfrak{h}}\,
\underbrace{\pm\mathfrak{a}_{\frac{\delta\mathcal{H}}{\delta\nu}}^{\ast}\nu}
_{\text{action of }\mathfrak{g}}, 
\\
&\underbrace{\dot{\nu} =\pm
	ad^{\ast}_{\frac{\delta\mathcal{H}}{\delta\nu}}(\nu)}_
{\text{Lie-Poisson Eq. on }\mathfrak{h}^*}\,
\underbrace{\mp
	\frac{\delta\mathcal{H}}{\delta\mu} \overset{\ast }{\vartriangleright}\nu}_{\text{action of }\mathfrak{g}}\,
 \underbrace{\pm\mathfrak{b}
	_{\frac{\delta\mathcal{H}}{\delta\mu}}^{\ast}\mu
}_{\text{action of }\mathfrak{h}}
.
\end{split}
\end{equation}
The first terms on the right hand sides are the individual equations of motions. The other terms are the dual and cross actions appearing as  manifestations of the mutual actions.

\textbf{Coordinate realizations.} Recall the Lie-Poisson bracket in \eqref{Lie-pois-double-non-bracket} and the Lie-Poisson equations \eqref{Lie-pois-nonbracket-dyn} computed for the case of extended structures. In the matched pair case, we take the constants determining twisted 2-cocycle as zero that is $\Phi_{b a}^{\gamma}=0$, and the structure constants for the Lie algebra $\G{h}$ as $\kappa_{b a}^{d}=D_{b a}^{d}$. So that, the (plus/minus) matched Lie-Poisson bracket \eqref{LiePoissonongh} takes the following form in the coordinates  
\begin{equation}
\begin{split}\label{Lie-pois-double-non-bracket-mp}
\left \{ \mathcal{H},\mathcal{F} \right \}_{_\Phi\bowtie}(\mu\oplus\nu)=&\pm \mu_\alpha C_{\beta \gamma}^\alpha \frac{\partial  \mathcal{H}}{\partial \mu_\beta}\frac{\partial \mathcal{F}}{\partial \mu_\gamma} 
\pm
\nu_a D_{b d}^a \frac{\partial \mathcal{H}}{\partial \nu_b}\frac{\partial \mathcal{F}}{\partial \nu_d} 
\pm
\mu_\alpha L_{a \beta }^\alpha \big(\frac{\partial \mathcal{H}}{\partial \nu_a}\frac{\partial \mathcal{F}}{\partial \mu_\beta}
-\frac{\partial \mathcal{F}}{\partial \nu_a}\frac{\partial \mathcal{H}}{\partial \mu_\beta}\big)
\\
&
\pm
\nu_a R_{b \beta}^a \big( \frac{\partial \mathcal{H}}{\partial \nu_b}\frac{\partial \mathcal{F}}{\partial \mu_\beta}- \frac{\partial \mathcal{F}}{\partial \nu_b}\frac{\partial \mathcal{H}}{\partial \mu_\beta}\big). \\
\end{split}
\end{equation}
The matched (plus/minus) Lie-Poisson dynamics in \eqref{LPEgh} is computed to be
\begin{equation}\label{Lie-pois-nonbracket-dyn-mp}
\begin{split} 
\dot{\mu}_\beta&=
\pm
\mu_\rho C_{\beta \alpha}^\rho \frac{\partial \mathcal{H}}{\partial \mu_\alpha} 
\mp
\mu_\alpha L_{a \beta }^\alpha  \frac{\partial \mathcal{H}}{\partial \nu_a} 
\mp
\nu_a R_{b \beta }^a \frac{\partial \mathcal{H}}{\partial \nu_b} ,
\\
\dot{\nu}_d&=
\pm
\nu_a D_{d b}^a \frac{\partial \mathcal{H}}{\partial \nu_b}   
\pm
\nu_a R_{d \alpha }^a  \frac{\partial \mathcal{H}}{\partial \mu_\alpha} 
\pm
\mu_\alpha L_{d \beta }^\alpha \frac{\partial \mathcal{H}}{\partial \mu_\beta} .
\end{split}
\end{equation}

\subsection{Lie Poisson Dynamics on Duals of 2-cocycles}
\label{LPD-2c}

In Subsection \ref{2coc-Sec}, it is shown that 2-cocycle extension $\mathfrak{g}_{_\Phi\rtimes}\mathfrak{h}$ of a Lie algebra $h$ over its representation space $\G{g}$, as a particular case of extended structure $\mathfrak{g}_{_\Phi\bowtie}\mathfrak{h}$.   
Thus, the Lie-Poisson dynamics on the dual space of 2-cocycle extensions would be derived through the Lie-Poisson dynamics on the dual space of extended structures which is given in Subsection \ref{2-LP-Coec}. 
So that, we follow Subsection \ref{2coc-Sec} and make particular choice that the Lie bracket on $\mathfrak{g}$ is trivial. This results with several consequences. One is that the left action $\vartriangleright$ and the right dual action $\overset{\ast }{\vartriangleleft}$, in \eqref{eta-star}, are trivial. Also, observe that the coadjoint action on $\mathfrak{g}^*$ becomes identically zero. In addition, in this case, $\Phi$ turns out to be a true 2-cocycle and, $\kappa$ becomes a Lie bracket on the vector space $\mathfrak{h}$. Thus, the dual of $\kappa$ suits the coadjiont action as in \eqref{kappaequivad}. We employ all these modifications to the Lie-Poisson bracket \eqref{LiePoissononcocycle} on the dual of  extended structure to arrive at the (plus/minus)
Lie-Poisson bracket on the dual of 2-cocycle extension $\mathfrak{g}_{_\Phi\rtimes}\mathfrak{h}$ as
\begin{equation} \label{centextLiepois}
\left\{ \mathcal{H},\mathcal{F}\right\}_{_\Phi\rtimes}(\mu\oplus\nu) = \pm \left\langle \nu ,[\frac{\delta \mathcal{H}}{\delta \nu},\frac{\delta
 	\mathcal{F}}{\delta \nu}] \right\rangle \pm \underbrace{\left\langle \mu , \Phi \left(\frac{\delta \mathcal{H}}{\delta \nu},\frac{\delta
		\mathcal{F}}{\delta \nu}\right) \right\rangle}_{\text{A: 2-cocycle}} \,
  \underbrace{\pm\left\langle \mu ,\frac{\delta \mathcal{H}}{\delta \nu}
	\vartriangleright \frac{\delta \mathcal{F}}{\delta \mu}\right\rangle
	\mp \left\langle \mu ,\frac{\delta \mathcal{F}}{\delta \nu}\vartriangleright
	\frac{\delta \mathcal{H}}{\delta \mu}\right\rangle}_ {\text{B: left action of $\mathfrak{h}$ on $\mathfrak{g}$}}.
\end{equation}
Here, the first term on the right hand side is the Lie-Poisson bracket on $\G{h}^*$. The term labelled as A is due to 2-cocycle $\Phi$ whereas the terms labelled as B are due to the left action of $\G{h}$ on $\G{g}$. For the Lie-Poisson bracket \eqref{centextLiepois},  the Lie-Poisson equations governed by a Hamiltonian function $\mathcal{H}=\mathcal{H}(\mu,\nu)$is computed to be
\begin{equation}\label{centextLiepois2}
\dot{\mu} = \mp \underbrace{\mu\overset{\ast }{\vartriangleleft}
	\frac{\delta\mathcal{H}}{\delta\nu}}
_{\text{action of} \ \mathfrak{h}},\qquad
\underbrace{\dot{\nu}  =\pm
	ad^{\ast}_{\frac{\delta\mathcal{H}}{\delta\nu}}(\nu)}_
{\text{Lie-Poisson Eq. on }\mathfrak{h}^*}\,
\underbrace{\pm \Phi_{\frac{\delta\mathcal{H}}{\delta\nu}}^*(\mu)}
_{\text{2-cocycle}}\,
 \underbrace{\pm \mathfrak{b}
	_{\frac{\delta\mathcal{H}}{\delta\mu}}^{\ast}\mu
}_{\text{action of} \ \mathfrak{h}}.
\end{equation}
A direct observation gives that the Lie-Poisson equation \eqref{centextLiepois2} is a particular case of 
the Lie Poisson equation \eqref{LPEghcocycle} where $\overset{\ast }{\vartriangleright}$, $\mathfrak{a}^*$ are zero. 

\textbf{Coordinate realizations.} In Subsection \ref{2coc-Sec}, 2-cocycle extension is written in terms of coordinates for finite dimensional cases. Referring to Subsection \ref{2-LP-Coec}, we write the Lie-Poisson bracket \eqref{centextLiepois} in coordinates as 
\begin{equation}
\label{Lie-pois-double-non-bracket-tc}
\left \{ \mathcal{H},\mathcal{F} \right \}_{_\Phi\bowtie}(\mu\oplus\nu)= 
\pm
\nu_a D_{b d}^a \frac{\partial \mathcal{H}}{\partial \nu_b}\frac{\partial \mathcal{F}}{\partial \nu_d} 
\pm
 \mu_\alpha \Phi_{bk}^\alpha \frac{\partial \mathcal{H}}{\partial \nu_b} \frac{\partial \mathcal{F}}{\partial \nu_k} 
\pm
\mu_\alpha L_{a \beta }^\alpha \big(\frac{\partial \mathcal{H}}{\partial \nu_a}\frac{\partial \mathcal{F}}{\partial \mu_\beta}
-\frac{\partial \mathcal{F}}{\partial \nu_a}\frac{\partial \mathcal{H}}{\partial \mu_\beta}\big).
\end{equation}
Notice that, we write the structure constants on $\G{h}$ as $D_{b d}^a $. Further, we can write the Lie Poisson equations in \eqref{centextLiepois2} as 
\begin{equation}\label{Lie-pois-nonbracket-dyn-tc}
\dot{\mu} _\beta=
\mp
\mu_\alpha L_{a \beta }^\alpha  \frac{\partial \mathcal{H}}{\partial \nu_a} , \qquad 
\dot{\nu}_d=
\pm
\mu_\alpha \Phi_{db}^\alpha  \frac{\partial \mathcal{H}}{\partial \nu_b} 
\pm
\nu_a D_{d b}^a \frac{\partial \mathcal{H}}{\partial \nu_b}   
\pm
\mu_\alpha L_{d \beta }^\alpha \frac{\partial \mathcal{H}}{\partial \mu_\beta} .
\end{equation}

\section{Illustration: Decomposing 3 Particles BBGKY Hierarchy} \label{examples}

In this section, we provide a concrete example to Lie algebraic constructions and the Lie-Poisson structures introduced up to now. For this, we focus on BBGKY hierarchy in plasma dynamics \cite{Ha}. In \cite{marsden1984hamiltonian}, it is proved that BBGKY hierarchy can be recasted as a Lie-Poisson equation. The formulations presented in that study is for $n>3$. In the present work, we focus on  $n=3$  which is missing in \cite{marsden1984hamiltonian}. Accordingly, we first determine the dynamics of BBGKY hierarchy for $n=3$. Then, we shall investigate its Lie-Poisson form. 
Two decomposition of the BBGKY dynamics will be presented (1) as a matched pair and, (2) as an extended structure. 
 
\subsection{BBGKY Dynamics for 3 Particles}

Assume that a plasma rests in a finite $3D$ manifold $Q$ in $\mathbb{R}^3$. Being a cotangent bundle, $P=T^*Q$ is a symplectic and a Poisson manifold \cite{MaRa13,LiMa12}. Define the product symplectic space $P^3=P\times P \times P$ endowed with the product symplectic and product Poisson structures. 

We denote the $3$ particle density function by $f_3=f_3 (z_1,z_2,z_3)$ on $P^3$. The dynamics of $3$-particle plasma density function is governed by the Vlasov equation
\begin{equation}\label{Vlasov}
\frac{\partial f_3}{\partial t}=\{H(z_1,z_2,z_3),f_3(z_1,z_2,z_3)\}
\end{equation}  
where $H$ is the total energy of the plasma particles \cite{MaRaScSpWe83,MaWe82,Mo82}. Here, $\{\bullet,\bullet\}$ denotes the canonical Poisson bracket on $P_3$ with respect to the variables $z_1,z_2,z_3$. We refer \cite{EsGRGuPa19,EsGu12,GiHoTr,Gu10} for some recent studies on Vlasov motion related with the geometry  here. In the present work, we assume that, the particle energy function is in form
\begin{equation}\label{en-func}
\begin{split}
H&=\sum_i H_1(z_i) + \sum_{i<j} H_2(z_i,z_j)+ H_3(z_1,z_2,z_3)\\ &= H_1(z_1)+H_1(z_2)+H_1(z_3)+H_2(z_1,z_2)+H_2(z_1,z_3)+H_2(z_2,z_3)+H_3(z_1,z_2,z_3).
\end{split}
\end{equation}
Here, the functions $H_2$ and $H_3$ are assumed to be symmetric. 
See that, the dynamical equations in\cite{marsden1984hamiltonian} is for $H_3=0$. We present here the theory with a   
nontrivial $H_3$. 

\textbf{Dynamics of moments.} Now, we determine the moments of the plasma density function $f_3=f_3 (z_1,z_2,z_3)$ on $P^3$ as
\begin{equation} \label{moments}
\begin{split}
f_1(z_1):=3\int f_3(z_1,z_2,z_3)dz_2 dz_3
\\
f_2(z_1,z_2):=6\int f_3(z_1,z_2,z_3) dz_3.
\end{split}
\end{equation}
 To find the dynamics of the moments $f_1$ and $f_2$, we simply take the partial derivatives of \eqref{moments} then, directly substitute the Vlasov equation   \eqref{Vlasov} into these expressions. 
 
In order to arrive at the equation of governing the moment functions, we record the following identity, on the Poisson space $P$,
\begin{equation} \label{eqn}
\int \{h(z),f(z)\}_z ~ dz=0,
\end{equation}
which is valid for any two functions. The equation \eqref{eqn} is the result of that we have omitted the boundary terms. In \eqref{eqn}, $\{\bullet ,\bullet\}_z $ stands for the Poisson bracket on $P$. We shall be referring to this identity in the sequel.  
We compute the dynamics of $f_1$ as  
\begin{equation} \label{Dyn-f-1}
\begin{split}
	\frac{\partial f_1}{\partial t}&=3 \int \{H(z_1,z_2,z_3), f_3(z_1,z_2,z_3)\} dz_2dz_3 
	\\
	& =3 \int \Big \{ \sum_i H_1(z_i) + \sum_{i<j} H_2(z_i,z_j)+ H_3(z_1,z_2,z_3),f_3(z_1,z_2,z_3) \Big  \} dz_2dz_3
	\\
	& =3 \int \Big  \{ \sum_i H_1(z_1)  ,f_3(z_1,z_2,z_3) \Big  \}_{z_1} dz_2dz_3	
	+3 \int 
\{  H_2(z_1,z_2)+H_2(z_1,z_3), f_3(z_1,z_2,z_3)
\}_{z_1}	dz_2dz_3	
 \\
&\qquad +3 \int   \{ H_3(z_1,z_2,z_3),f_3(z_1,z_2,z_3)   \}_{z_1}dz_2dz_3
\\
& = \Big  \{ \sum_i H_1(z_1)  ,3 \int f_3(z_1,z_2,z_3) dz_2dz_3	\Big  \}_{z_1} 
	+ \int  \Big 
\{  H_2(z_1,z_2), 6\int f_3(z_1,z_2,z_3) dz_3
\Big \}_{z_1}	dz_2	\\
& \qquad +
3 \int   \{ H_3(z_1,z_2,z_3),f_3(z_1,z_2,z_3)   \}_{z_1}dz_2dz_3
\\	
	&=\{H_1(z_1),f_1(z_1)\}_{z_1}+\int \{H_2(z_1,z_2),f_2(z_1,z_2)\}_{z_1}dz_2  +3 \int \{H_3(z_1,z_2,z_3),f_3(z_1,z_2,z_3)\}_{z_1}dz_2dz_3.
\end{split}
\end{equation}
In the second equality, we have employed the energy function $H$ in \eqref{en-func}. In the third equality, we have used the identity \eqref{eqn} several times. We have substituted the definitions of the densty functions $f_1$ and $f_2$ in the last equality. Here, the notation $\{\bullet ,\bullet\}_{z_1}$ is the Poisson bracket only for $z_1$ variable even if the functions inside the bracket  are depending on some other variables. 

In a similar fashion, the dynamics of the moment function $f_2$,  can be computed as
\begin{equation} \label{Dyn-f-2}
\begin{split}
	\frac{\partial f_2}{\partial t}&=6 \int \{H(z_1,z_2,z_3), f_3(z_1,z_2,z_3)\} dz_3
\\
	&=6 \int \Big \{ \sum_i H_1(z_i) + \sum_{i<j} H_2(z_i,z_j)+ H_3(z_1,z_2,z_3),f_3(z_1,z_2,z_3) \Big  \} dz_3 
	\\
	&= 6 \int \{H_1(z_1)+H_1(z_2),f_3(z_1,z_2,z_3)\}_{z_1,z_2}dz_3+6\int  \{H_2(z_1,z_2),f_3(z_1,z_2,z_3)\}_{z_1,z_2} dz_3  
\\
&\qquad + 
6\int  \{H_2(z_1,z_3)+H_2(z_2,z_3),f_3(z_1,z_2,z_3)\}_{z_1,z_2} dz_3 
+
6\int  \{H_3(z_1,z_2,z_3),f_3(z_1,z_2,z_3)\}_{z_1,z_2} dz_3
\\
&= \{H_1(z_1)+H_1(z_2),6 \int f_3(z_1,z_2,z_3) dz_3\}_{z_1,z_2}+ \{H_2(z_1,z_2),6\int f_3(z_1,z_2,z_3)dz_3 \}_{z_1,z_2} 
\\
&\qquad + 
6\int  \{H_2(z_1,z_3)+H_2(z_2,z_3),f_3(z_1,z_2,z_3)\}_{z_1,z_2} dz_3 
+
6\int  \{H_3(z_1,z_2,z_3),f_3(z_1,z_2,z_3)\}_{z_1,z_2} dz_3 
\\
&= \{H_1(z_1)+H_1(z_2), f_2(z_1,z_2) \}_{z_1,z_2}+ \{H_2(z_1,z_2),f_2(z_1,z_2) \}_{z_1,z_2} 
\\
&\qquad + 
6\int  \{H_2(z_1,z_3)+H_2(z_2,z_3),f_3(z_1,z_2,z_3)\}_{z_1,z_2} dz_3 
+
6\int  \{H_3(z_1,z_2,z_3),f_3(z_1,z_2,z_3)\}_{z_1,z_2} dz_3.
\end{split}
\end{equation}
Here, $\{\bullet ,\bullet\}_{z_1,z_2}$ 
is the Poisson bracket only for $z_1$ and $z_2$ variables even if the functions inside the bracket are depending on some other variables.

\subsection{Lie-Poisson Realization of BBGKY hierarchy}\label{Alg-BBGKY}

We denote $A_1$ as space of smooth functions on $P$. Equipped with the canonical Poisson bracket, $A_1$ is a Lie algebra. Similarly, we define $A_2$ and $A_3$ as Lie algebras of the symmetric functions on $P^2$ and $P^3$, respectively. 
There are hierarchical embeddings defined to be 
\begin{equation}\label{embedding-BBGKY}
\begin{split}
& A_1 \longrightarrow A_2, \qquad K_1(z_1)\mapsto K_1^{(2)}(z_1,z_2):= K_1(z_1)+ K_1(z_2)\\
& A_1 \longrightarrow A_3, \qquad K_1(z_1)\mapsto  K_1^{(3)}(z_1,z_2,z_3):= K_1(z_1)+ K_1(z_2)+ K_1(z_3)
\\
& A_2 \longrightarrow A_3, \qquad K_2(z_1,z_2)\mapsto K_2^{(3)}(z_1,z_2,z_3):=K_2(z_1,z_2)+K_2(z_2,z_1)
+
K_2(z_1,z_3)\\ & \hspace{8cm}+K_2(z_3,z_1)+K_2(z_2,z_3)+K_2(z_3,z_2).
\end{split}
\end{equation}
We define the following direct sum 
\begin{equation}\label{A-3}
\C{A}:=A_3\oplus A_2 \oplus A_1, 
\end{equation}
then, introduce the following mapping from the product space $\C{A}$ to the Lie algebra $A_3$ of symmetric functions on $P^3$ that is
\begin{equation} \label{alpha-1}
\begin{split}
\alpha: \C{A} \longrightarrow A_3 ,\qquad 
(K_3,K_2,K_1)  \longrightarrow K_3(z_1,z_2,z_3)+K_2^{(3)}(z_1,z_2,z_3)+K_1^{(3)}(z_1,z_2,z_3) 
\end{split}
\end{equation}
Notice that, $\alpha$ in \eqref{alpha-1} turns out to be a Lie algebra homomorphism if, on the domain space $\C{A} $, the following Lie bracket is assumed
\begin{equation}\label{hiearchybracket}
\begin{split} 
\big [(K_3,K_2,K_1),(L_3,L_2,L_1)\big]_{\C{A}}=\Big(\{K_3,L_3\}+\{K_3,L_2^{(3)}\}+\{K_3,L_1^{(3)}\}+\{K_2^{(3)},L_3\}+\{K_1^{(3)},L_3\}\\ +\{K_2^{(3)},L_2^{(3)}\}, 
\{K_2,L^{(2)}_1\}_{z_1,z_2}+\{K^{(2)}_1,L_2\}_{z_1,z_2},\{K_1,L_1\}_{z_1}\Big),
\end{split}
\end{equation}
where, on the right hand side, the notation $\{\bullet,\bullet\}$ without a subscript is the Poisson bracket on $P_3$, $\{\bullet,\bullet\}_{z_1,z_2}$ is the Poisson bracket only for $z_1$ and $z_2$ variables, $\{\bullet,\bullet\}_{z_1}$ is the Poisson bracket only for $z_1$ variable. 

The dual mapping of $\alpha$ in \eqref{alpha-1} computed to be 
\begin{equation}
\alpha^*: A_3^*\longrightarrow  \C{A}^*, \qquad  f_3\mapsto \Big(f_3,f_2=\int 6f_3(z_1,z_2,z_3)dz_3, f_1=\int 3 f_3(z_1,z_2,z_3)dz_2dz_3\Big),
\end{equation}
becomes a momentum and a Poisson map. Notice that, the mapping $\alpha^*$ determines the moment functions exhibited in \eqref{moments}. 
  
\textbf{Coadjoint action.} Assume that the adjoint action of the $\C{A}$ on itself is the Lie bracket $[\bullet,\bullet]_{\C{A}}$ in \eqref{hiearchybracket}. The coadjoint action of the space $\C{A}$ on its dual $\C{A}^*$ is
\begin{equation}\label{Cala}
 \begin{split}
&\Big\langle ad^*_{(L_3,L_2,L_1)}(f_3,f_2,f_1),(K_3,K_2,K_1)\Big\rangle = - \Big\langle (f_3,f_2,f_1),ad_{(L_3,L_2,L_1)}(K_3,K_2,K_1)\Big\rangle
\\
& \hspace{1 cm}= \Big\langle (f_3,f_2,f_1),\big[(K_3,K_2,K_1),(L_3,L_2,L_1)\big]_{\C{A}}\Big\rangle
\\
&\hspace{1 cm}
=\Big\langle 
f_3, \{K_3,L_3\}+\{K_3,L_2^{(3)}\}+\{K_3,L_1^{(3)}\}+\{K_2^{(3)},L_3\}+\{K_1^{(3)},L_3\}+\{K_2^{(3)},L_2^{(3)}\}\Big\rangle 
\\
 &\hspace{3 cm}+ \big\langle f_2,
\{K_2,L^{(2)}_1\}_{z_1,z_2}+\{K^{(2)}_1,L_2\}_{z_1,z_2} \big\rangle +  \big\langle  f_1, \{K_1,L_1\}_{z_1}\big\rangle.
 \end{split}
 \end{equation}
In the second line, we have substituted the Lie algebra bracket \eqref{hiearchybracket}. In the last equality, the first pairing is the one  available between $A_3^*$ and $A_3$ with the symplectic volume $dz_1dz_2dz_3$ whereas the second one is between $A_2^*$ and $A_2$ with the symplectic volume $dz_1dz_2$, and finally the last one is between $A_1^*$ and $A_1$ with the symplectic volume $dz_1$.
It is evident that, in order to arrive at the explicit expression of the coadjoint action from \eqref{Cala}, we need to single out the functions $K_3$, $K_2$ and $K_1$. For this, we first recall the following association property of the smooth functions 
\begin{equation} \label{eqn-2}
\int h(z) \{k(z),l(z)\}_z ~ dz=\int k(z) \{l(z),h(z)\}_z ~ dz. 
\end{equation}
To see this identity, we simply consider the Leibniz identity 
\begin{equation}\label{leib}
 \{h(z)k(z),l(z)\}_z=h(z)\{k(z),l(z)\}_z+ k(z)\{h(z),l(z)\}_z,
\end{equation}
then, take the integral of this expression. In this case, the left hand side turns out to be zero due to \eqref{eqn}. The integrals on the right hand side of \eqref{leib}  is exactly \eqref{eqn-2} after a reordering.

Let us apply the identity \eqref{eqn-2} to the first pairing on the last equality in \eqref{Cala}. We have that 
\begin{equation} \label{La-1}
\begin{split}
&
\int
f_3(z_1,z_2,z_3) \Big( \{K_3,L_3\}+\{K_3,L_2^{(3)}\}+\{K_3,L_1^{(3)}\}+\{K_2^{(3)},L_3\}+\{K_1^{(3)},L_3\}
\\
 &\hspace{2 cm}
 +\{K_2^{(3)},L_2^{(3)}\}\Big )(z_1,z_2,z_3)  dz_1dz_2dz_3  
\\
&\hspace{1 cm} =\int \Big( K_3 \{L_3,f_3\} (z_1,z_2,z_3)  
+ K_3 \{L_2^{(3)},f_3\} (z_1,z_2,z_3)   + K_3 \{L_1^{(3)},f_3\}
(z_1,z_2,z_3)\Big) dz_1dz_2dz_3
\\
 &\hspace{2 cm} +\int K_2(z_1,z_2)\Big(6\int\{L_3,f_3\}_{z_1,z_2}dz_3 \Big) dz_1dz_2 
+ \int K_1(z_1)\Big(3\int\{L_3,f_3\}_{z_1}dz_2dz_3 \Big) dz_1 
\\
 &\hspace{2 cm} +2\int K_2(z_1,z_2) \{L_2,f_2\}_{z_1,z_2} dz_1dz_2 \\
 &\hspace{2 cm} 
+ \int K_2(z_1,z_2)\Big(12\int\{L_2(z_1,z_3)+L_2(z_2,z_3),f_3\}_{z_1,z_2}dz_3 \Big) dz_1dz_2,
\end{split}
\end{equation} 
 where we have employed  the identities \eqref{eqn}  and \eqref{eqn-2} and the definitions of moments in \eqref{moments}. In a similar way, we compute the pairings on the last line of \eqref{Cala} as follows 
\begin{equation}\label{La-2}
\begin{split}
&
\int
f_2(z_1,z_2) \Big(\{K_2,L^{(2)}_1\}_{z_1,z_2} +\{K^{(2)}_1,L_2\}_{z_1,z_2} \big)(z_1,z_2) dz_1dz_2 
+  \int   f_1(z_1) \{K_1,L_1\}_{z_1}(z_1) dz_1
\\
&\hspace{1 cm} = \int K_2 (z_1,z_2) \{L^{(2)}_1 ,f_2\}  _{z_1,z_2} (z_1,z_2)dz_1dz_2 
+ \int  K_1 (z_1)
 \Big(
\int \{L_2(z_1,z_2),f_2(z_1,z_2)\}_{z_1} dz_2 
 \Big) dz_1 \\
 &\hspace{2 cm} + \int K_1(z_1) \{L_1,f_1 \}_{z_1}(z_1) dz_1.
\end{split}
\end{equation} 
In \eqref{La-1}  and \eqref{La-2}, we collect the terms involving  $K_3$, $K_2$ and $K_1$ in an order and then, arrive at the coadjoint flow 
\begin{equation}
ad^*_{(L_3,L_2,L_1)}(f_3,f_2,f_1)= (\tilde{f}_3,\tilde{f}_2,\tilde{f}_1)
\end{equation}
where 
\begin{equation}
\begin{split}
\tilde{f}_3(z_1,z_2,z_3)&= 
\{L_3+L_2^{(3)}+L_1^{(3)},f_3\} (z_1,z_2,z_3), 
\\
\tilde{f}_2(z_1,z_2)&= 
\{L_1^{(2)}+2L_2, f_2\}_{z_1,z_2}(z_1,z_2)
 + 
12\int  \{L_2(z_1,z_3)+L_2(z_2,z_3),f_3(z_1,z_2,z_3)\}_{z_1,z_2} dz_3 
\\
&\qquad+
6\int  \{L_3,f_3\}_{z_1,z_2} (z_1,z_2,z_3) dz_3
\\
\tilde{f}_1(z_1)&= \{L_1,f_1\}_{z_1}(z_1)+2\int \{L_2(z_1,z_2),f_2(z_1,z_2)\}_{z_1}dz_2+3 \int \{L_3(z_1,z_2,z_3),f_3(z_1,z_2,z_3)\}_{z_1}dz_2dz_3
\end{split}
\end{equation}

\textbf{Lie-Poisson equation.}
We assume now the following functional
\begin{equation}
\mathcal{H}(f_3,f_2,f_1)=\int H_3(z_1,z_1,z_3)f_3(z_1,z_1,z_3)
dz_1dz_2dz_3+ \int \frac{1}{2}
H_2(z_1,z_2) f_2(z_1,z_1)
dz_1dz_2  + \int H_1(z_1)f_2(z_1)dz_1. 
\end{equation}
Then we have that 
\begin{equation}
\frac{\delta \mathcal{H}}{\delta (f_3,f_2,f_1)}
=\Big(\frac{\delta \mathcal{H}}{\delta f_3},
\frac{\delta \mathcal{H}}{\delta f_2},
\frac{\delta \mathcal{H}}{\delta f_1}
\Big)
=\big(H_3(z_1,z_1,z_3),
\frac{1}{2}H_2(z_1,z_2),
H_1(z_1)
\big)\in \C{A}
\end{equation}
which are the energy function in \eqref{en-func} under the isomorphism $\alpha$ in \eqref{alpha-1}. If the three tuple  $H=(H_3,(1/2)H_2,H_1)$ is substituted in to the coadjoint action then we arrive at 
\begin{equation}\label{LP-BBGKY}
\frac{\partial}{\partial t}  
(f_3,f_2,f_1)=
ad^*_{{\delta \mathcal{H}}/{\delta (f_3,f_2,f_1)}}(f_3,f_2,f_1)
=
ad^*_{(H_3,(1/2)H_2,H_1)}(f_3,f_2,f_1).
\end{equation}
A direct calculation proves that the coadjoint flow \eqref{LP-BBGKY}
is precisely the system of equations  \eqref{Dyn-f-1}, \eqref{Dyn-f-2} and \eqref{Vlasov} governing the dynamics of the moments. 

 \subsection{BBGKY Hierarchy as a Matched Pair}\label{D-Alg-BBGKY}
 
Recall the direct sum $\C{A}=A_3\oplus A_2 \oplus A_1$ in \eqref{A-3}. One evident  decomposition of $\C{A}$ is given by
\begin{equation} \label{a3}
	\C{A}=\mathfrak{g}_{32}\oplus\mathfrak{h}_1, \qquad \qquad \mathfrak{g}_{32}= A_3\oplus A_2 \text{ and } \mathfrak{h}_1=A_1.
\end{equation}
In the present subsection, we examine this realization in the matched pair decomposition point of view. 

\textbf{Decomposition of the Lie algebra.}
It is a direct calculation two show that the Lie bracket (\ref{hiearchybracket}) is closed if it is restricted to the constitutive subspaces $\mathfrak{g}_{32}$ and $\mathfrak{h}_1$. This reads that both $\mathfrak{g}_{32}$ and $\mathfrak{h}_1$ are Lie subalgebras. So, as a manifestation of Universal Lemma  \ref{universal-prop}, this decomposition can be written as a matched pair decomposition introduced in Subsection \ref{doublecross}. We first determine the Lie algebra brackets on $\mathfrak{g}_{32}$ and $\mathfrak{h}_1$ by restricting the bracket \eqref{hiearchybracket} to the subspaces $\mathfrak{g}_{32}$ and $\mathfrak{h}_1$
 \begin{equation} \label{Alg-32-1}
 \begin{split}
 [\bullet,\bullet]_{32}&: \mathfrak{g}_{32}\otimes \mathfrak{g}_{32}\longrightarrow \mathfrak{g}_{32}, \qquad [(K_3,K_2),(L_3,L_2)]_{32}=
 \Big(\{K_3+K_2^{(3)},L_3\}+\{K_3+K_2^{(3)},L_2^{(3)}\}, 0\Big),
 \\
 [\bullet,\bullet]_{1}&:
 \mathfrak{h}_1\otimes \mathfrak{h}_1\longrightarrow \mathfrak{h}_1,  \qquad [K_1,L_1]_1=\{K_1,L_1\}_{z_1},
 \end{split}
 \end{equation}
 respectively. 
In order to compute mutual actions, we recall the identity \eqref{bracketactions}. In the present case, we compute
\begin{equation}
\big(K_1 \vartriangleright (L_3,L_2)\big) \oplus \big( K_1 \vartriangleleft (L_3,L_2) \big):=[(0,0,K_1),(L_3,L_2,0)]_{\C{A}}
= \big( \{K_1^{(3)},L_3\},\{K_1^{(2)},L_2\}_{z_1,z_2} \big)
 \oplus
0
\end{equation}
then conclude that the left action of $\mathfrak{h}_1$ on $\mathfrak{g}_{32}$, and the right action of $\mathfrak{g}_{32}$ on $\mathfrak{h}_1$ as
\begin{equation} \label{Act-32-1}
\begin{split}
\vartriangleright &:\mathfrak{h}_1 \otimes  \mathfrak{g}_{32}\longrightarrow \mathfrak{g}_{32},\qquad K_1 \vartriangleright  (L_3,L_2)= \big( \{K_1^{(3)},L_3\},\{K_1^{(2)},L_2\}_{z_1,z_2} \big),
\\
\vartriangleleft &: \mathfrak{h}_1 \otimes  \mathfrak{g}_{32}\longrightarrow \mathfrak{h}_1, 
\qquad K_1 \vartriangleleft  (L_3,L_2)=0,
	\end{split}
\end{equation}
respectively. Notice that, the right action $\vartriangleleft$ is trivial so that the Lie algebra $\C{A}=\mathfrak{g}_{32}\rtimes \mathfrak{h}_1$ is a semidirect product Lie algebra. The Lie bracket $[\bullet,\bullet]_{\C{A}}$ in \eqref{hiearchybracket} admits the following decomposition
\begin{equation}
[(K_3,K_2)\oplus K_1,(L_3,L_2)\oplus L_1]=\big([(K_3,K_2),(L_3,L_2)]_{32}+ K_1 \vartriangleright (L_3,L_2)-L_1 \vartriangleright (K_3,K_2) \big )\oplus[K_1,L_1]_{1}. 
\end{equation}
Here, the subalgebras $[\bullet,\bullet]_{32}$ and 
 $[\bullet,\bullet]_{1}$  are the ones in \eqref{Alg-32-1}, and the left action is in \eqref{Act-32-1}. This  realization is precisely in the matched pair Lie bracket form \eqref{mpla} where the left action $\vartriangleleft$ is trivial. Let us apply this to the Lie-Poisson formulation of BBGKY Dynamics \eqref{LiePoissonongh}.

\textbf{Decomposition of BBGKY dynamics.} The dual spaces of the constitutive Lie subalgebras $\mathfrak{g}_{32}$ and $\mathfrak{h}_1$ are $\mathfrak{g}_{32}^*=A_3^*\oplus A_2^* $ and $\mathfrak{h}_1^*=A_1^*$, respectively. So that, we can write $\C{A}^*=\mathfrak{g}_{32}^*\oplus \mathfrak{h}_1^*$. The coadjoint action of $\mathfrak{g}_{32}$ on $\mathfrak{g}_{32}^*$, and the coadjoint action of $\mathfrak{h}_{1}$ on $\mathfrak{h}_{1}^*$ are
\begin{equation}
\begin{split}
&ad_{(L_3,L_2)}(f_3,f_2)=\Big(\{L_3,f_3\} + 6\int \{L_2(z_1,z_2),f_3(z_1,z_2,z_3)\}_{z_1,z_2}dz_3, 2\{L_2(z_1,z_2) ,f_2(z_1,z_2) \}_{z_1,z_2}  \\ &
\hspace{2cm} +
6\int \{L_3(z_1,z_2,z_3),f_3(z_1,z_2,z_3)\}_{z_1,z_2}dz_3 + 12\int\{L_2(z_1,z_3)+L_2(z_2,z_3),f_3\}_{z_1,z_2}dz_3\Big),
\\
&ad_{K_1}f_1=\{K_1(z_1) ,f_1(z_1) \}_{z_1},
\end{split}
\end{equation}
respectively. Recall the mutual actions of 
$\mathfrak{g}_{32}$ and $\mathfrak{h}_{1}$ on each other given in  \eqref{Act-32-1}. The dual of these actions are computed to be 
\begin{equation}
\begin{split}
\overset{\ast }{\vartriangleleft} &:\mathfrak{g}_{32}^*\otimes \mathfrak{h}_{1} \longrightarrow \mathfrak{g}_{32}^*, \qquad (f_3,f_2)\overset{\ast }{\vartriangleleft}K_1 = \big( 3\{f_3(z_1,z_2,z_3),K_1(z_1)\} _{z_1},2\{f_2(z_1,z_2),K_1(z_1)\} _{z_1} \big)
\\
\overset{\ast }{\vartriangleright} &:\mathfrak{g}_{32} \otimes \mathfrak{h}_{1}^* \longrightarrow \mathfrak{h}_{1}^* ,\qquad (L_3,L_2)\overset{\ast }{\vartriangleright}  f_1 =0.
\end{split}
\end{equation}
The mapping $\mathfrak{b}$ in \eqref{b} and its dual \eqref{b*} are computed to be  
\begin{equation}
\begin{split}
\mathfrak{b}_{(L_3,L_2)}&: \mathfrak{h}_{1} \longrightarrow \mathfrak{g}_{32} , \qquad 
\mathfrak{b}_{(L_3,L_2)}(K_1):=K_1 
\vartriangleright
(L_3,L_2)
\\
\mathfrak{b}_{(L_3,L_2)}^*&: \mathfrak{g}_{32} ^* \longrightarrow  \mathfrak{h}_{1}^*, \qquad   \mathfrak{b}_{(L_3,L_2)}^*(f_3,f_2)=3\int\{L_3,f_3\}_{z_1}dz_2dz_3 + 2 
\int\{L_2,f_2\}_{z_1}dz_2.
\end{split}
\end{equation}
Since the left action is trivial both the mapping $\mathfrak{a}$ in \eqref{a} and its dual \eqref{a*} are trivial. It is now straight forward to modify the matched Lie-Poisson equation \eqref{LPEgh} to the present version and  determine 
the coadjoint flow as
\begin{equation}\label{LPEgh-BBGKY-MP}
\begin{split}
  \frac{d (f_3,f_2)}{dt} & = 
	ad^{\ast}_{(L_3,L_2)}(f_3,f_2)
-
(f_3,f_2)\overset{\ast }{\vartriangleleft}K_1
\\
\frac{d f_1}{dt} &=
	ad^{\ast}_{K_1} f_1  
+\mathfrak{b}_{(L_3,L_2)}^*(f_3,f_2).
\end{split}
\end{equation}
This equations is precisely the coadjoint flow \eqref{LP-BBGKY} realization of BBGKY dynamics. Evidently takes the classical form if $(L_3,L_2,K_1)=(H_3,(1/2)H_2,H_1)$. 

\subsection{BBGKY Hierarchy as an Extended Structure}\label{LPe-BBGKY}

Recall once more the direct sum $\C{A}=A_3\oplus A_2 \oplus A_1$ in \eqref{A-3}. We have examined a matched pair decomposition \eqref{a3} of this sum. An alternative decomposition of $\C{A}$ can be given by
\begin{equation} \label{a32}
	\C{A}=\mathfrak{g}_{3}\oplus\mathfrak{h}_{21}, \qquad \qquad 
	\mathfrak{g}_{3}:= A_3,\text{ and }  \mathfrak{h}_{21}= A_2\oplus A_1.
\end{equation}
\textbf{Decomposition of the Lie algebra.} It is straightforward to see that, $\mathfrak{g}_{3}$ is a subalgebra of $\C{A}$ with induced bracket
\begin{equation} \label{Alg-3}
 [\bullet,\bullet]_{3}: \mathfrak{g}_{3}\otimes \mathfrak{g}_{3}\longrightarrow \mathfrak{g}_{3}, \qquad [K_3,L_3]_{3}=
\{K_3,L_3\},
 \end{equation}
 where $\{\bullet,\bullet\}$ is the Poisson bracket on $P^3$. 
On the other hand, the subspace $\mathfrak{h}_{21}$ fails to be so. Indeed, the Lie bracket \eqref{hiearchybracket} of two generic elements $(0,K_2,K_1)$ and $(0,L_2,L_1)$ in $\mathfrak{h}_2$ is
\begin{equation}
	[(0,K_2,K_1),(0,L_2,L_1)]_{\C{A}}=\{K_2^{(3)},L_2^{(3)}\} \oplus \big(\{K_2,L^{(2)}_1\}_{z_1,z_2}+\{K^{(2)}_1,L_2\}_{z_1,z_2},\{K_1,L_1\}_{z_1}\big)\in \mathfrak{g}_{3}\oplus\mathfrak{h}_{21},
\end{equation}
where the first term on the right hand side falls into $\mathfrak{g}_{3}$ whereas the the second and the third terms are in $\mathfrak{h}_{21}$. 
So that, this decomposition should be analysed in  the light of extended structures presented in Subsection \ref{brzezinski}. 
Accordingly, by referring to \eqref{Phi-kappa}, we define
\begin{equation}\label{k-p-Bre}
\begin{split}
	\Phi &:\mathfrak{h}_{21} \otimes \mathfrak{h}_{21}  \longrightarrow \mathfrak{g}_{3}, \qquad   ((K_2,K_1),(L_2,L_1))\mapsto \{K_2^{(3)},L_2^{(3)}\} \\
	\kappa &: \mathfrak{h}_{21} \otimes \mathfrak{h}_{21}
	\longrightarrow \mathfrak{h}_{21}, \qquad 
	((K_2,K_1),(L_2,L_1))\mapsto \big(\{K_2,L^{(2)}_1\}_{z_1,z_2}+\{K^{(2)}_1,L_2\}_{z_1,z_2},\{K_1,L_1\}_{z_1}\big).
		\end{split}
\end{equation}
Now, we are ready to compute mutual \textit{actions} defined in \eqref{Lieact} and \eqref{Lieact1} between the constitutive spaces $\mathfrak{h}_{21}$ and $\mathfrak{g}_3$. To obtain the fomulas, we employ the identity \eqref{bracketactions} that is
\begin{equation}
\begin{split}
(K_2,K_1) \vartriangleright L_3 \oplus (K_2,K_1)  \vartriangleleft L_3 &= 
[(0,K_2,K_1),(L_3,0,0)]_{\C{A}}  \\
&=\big(\{K^{(3)}_2,L_3\}+\{K^{(3)}_1,L_3\}\big) \oplus (0,0) 
\in \mathfrak{g}_3 \oplus \mathfrak{h}_{21}
\end{split}
\end{equation} 
which gives that 
\begin{equation}\label{L-Act-Bre-BBGKY}
\begin{split}
\vartriangleright&:\mathfrak{h}_{21} \otimes \mathfrak{g}_3 \longrightarrow \mathfrak{g}_3, \qquad  (K_2,K_1) \vartriangleright L_3=\{K^{(3)}_2,L_3\}+\{K^{(3)}_1,L_3\} \\
\vartriangleleft&: \mathfrak{h}_{21} \otimes \mathfrak{g}_3 \longrightarrow \mathfrak{h}_{21}, \qquad (K_2,K_1)  \vartriangleleft L_3=(0,0).
\end{split}
\end{equation}
Due to Universal Lemma \ref{uni-prop-bre}, the decomposition $\mathfrak{g}_{3}\oplus\mathfrak{h}_{21}$ of the Lie algebra $\C{A}$ reads the decomposition of the Lie bracket $\{\bullet,\bullet\}_{\C{A}}$ given in \eqref{hiearchybracket} into the form of extended  Lie bracket \ref{brzezinski} where the right action $\vartriangleleft$ is trivial 
\begin{equation} \label{mpla-2-cocyc-1-BBGKY}
\begin{split}
&[K_3 \oplus(K_2,K_1) ,L_3 \oplus(L_2,L_1)]_{_\Phi\bowtie}=\big( \{K_3,L_3\}+(K_2,K_1) \vartriangleright L_3-(L_2,L_1)\vartriangleright K_3 \\ &\hspace{6cm}+ \Phi((K_2,K_1) ,(L_2,L_1) )
\big)\oplus \kappa((K_2,K_1) ,(L_2,L_1) ),
\end{split}
\end{equation}
where the left action is the one given in \eqref{L-Act-Bre-BBGKY} whereas $\Phi$ and $\kappa$ mapping are those available in \eqref{k-p-Bre}. 

\textbf{An alternative decomposition of the Lie algebra.} 
The hierarchy of the moment functions suggests an alternative formulation of $\Phi$ and $\kappa$ mappings in \eqref{k-p-Bre}. This is due to the fact that the term $[K_2^{(3)},L_2^{(3)}]$ can be written as a sum of  some terms in $\mathfrak{h}_{21}$ and some terms in $\mathfrak{g}_{3}$. Indeed,
\begin{equation}
\begin{split}
[K_2^{(3)},L_2^{(3)}]&=2(\{K_2(z_1,z_2),L_2(z_1,z_2)\}_{z_1,z_2})^{(3)}(z_1,z_2,z_3) \\ &\qquad +4\{K_2(z_1,z_2),L_2(z_1,z_3)+L_2(z_2,z_3) \}_{z_1,z_2}  + 4\{K_2(z_1,z_3),L_2(z_1,z_2)+L_2(z_2,z_3) \}_{z_1,z_3} \\&\qquad + 4\{K_2(z_2,z_3),L_2(z_1,z_2)+L_2(z_1,z_3)\}_{z_2,z_3}.
		\end{split}
\end{equation}
Here, as depicted in the display, the first term on the right hand side can be written as the image of the symmetric function $\{K_2(z_1,z_2),L_2(z_1,z_2)\}$ under the embedding $A_2 \mapsto A_3$ given in \eqref{embedding-BBGKY}. Accordingly, instead of $\Phi$ and $\kappa$ mappings in \eqref{k-p-Bre}, we can propose the following alternatives 
\begin{equation}\label{k-p-Bre-alt}
\begin{split}
	\tilde{\Phi} &:\mathfrak{h}_{21} \otimes \mathfrak{h}_{21}  \longrightarrow \mathfrak{g}_{3}, \\ & ((K_2,K_1),(L_2,L_1))\mapsto 4\{K_2(z_1,z_2),L_2(z_1,z_3)+L_2(z_2,z_3) \}  + 4\{K_2(z_1,z_3),L_2(z_1,z_2)+L_2(z_2,z_3) \} \\&\hspace{7cm}+ 4\{K_2(z_2,z_3),L_2(z_1,z_2)+L_2(z_1,z_3) \}  \\
	\tilde{\kappa} &: \mathfrak{h}_{21} \otimes \mathfrak{h}_{21}
	\longrightarrow \mathfrak{h}_{21}, \\ & 
	((K_2,K_1),(L_2,L_1))\mapsto \Big(\{K_2(z_1,z_2),L_1(z_1)+L_1(z_2)\}_{z_1,z_2}+\{K_1(z_1)+K_1(z_2),L_2(z_1,z_2)\}_{z_1,z_2}\\&\hspace{7cm}+2\{K_2(z_1,z_2),L_2(z_1,z_2)\}_{z_1,z_2}
	,\{K_1(z_1),L_1(z_1)\}_{z_1}\Big).
		\end{split}
\end{equation}
Evidently, this observation reads an alternative Lie bracket operation on $\C{A}$ as well. We denote this by a tilde notation $[ \bullet  , \bullet ]_{\tilde{\Phi}\bowtie}$ and record as follows
\begin{equation} \label{mpla-2-cocyc-1-BBGKY-2}
\begin{split}
&[ K_3 \oplus(K_2,K_1)  , L_3 \oplus(L_2,L_1) ]_{\tilde{\Phi}\bowtie}=\big( \{K_3,L_3\}+(K_2,K_1) \vartriangleright L_3-(L_2,L_1)\vartriangleright K_3 \\ &\hspace{6cm}+ \tilde{\Phi}((K_2,K_1) ,(L_2,L_1) )
\big)\oplus \tilde{\kappa}((K_2,K_1) ,(L_2,L_1) ),
\end{split}
\end{equation}
where $\{K_3,L_3\}$ is the Poisson bracket on $P^3$, and $\vartriangleright$ is the left action in \eqref{L-Act-Bre-BBGKY}. 

\textbf{Decomposition of the dynamics:   $\C{A}^*=\mathfrak{g}_{3}^*\oplus\mathfrak{h}_{21}^*$.} We start with the dualization of the mutual actions in \eqref{L-Act-Bre-BBGKY}. Notice that, the left action is trivial, so that it induces a trivial dual action. For the right action, we compute the dual action as
\begin{equation}
\mathfrak{g}_{3}^*\overset{\ast }{\vartriangleleft}\mathfrak{h}_{21}\longrightarrow 
\mathfrak{g}_{3}^*, \qquad  f_3
\overset{\ast }{\vartriangleleft} (K_2,K_1)=\{f_3,K_2^{(3)}\}+
\{f_3,K_1^{(3)}\}.
\end{equation}
Using the right action in \eqref{L-Act-Bre-BBGKY}, and in the lights of the definitions in \eqref{b} and\eqref{b*}, we compute the following mapping along with its dual 
\begin{equation}
\begin{split}
\mathfrak{b}_{L_3}&:\mathfrak{h}_{21}\longrightarrow \mathfrak{g}_{3},\qquad \mathfrak{b}_{L_3}(K_2,K_1)=(K_2,K_1) \vartriangleright L_3=[K^{(3)}_2,L_3]+[K^{(3)}_1,L_3] 
\\
\mathfrak{b}^*_{L_3}&:\mathfrak{g}_{3}^* \longrightarrow \mathfrak{h}_{21}^* ,\qquad \mathfrak{b}^*_{L_3}f_3=\big(6\int \{L_3,f_3 \}_{z_1,z_2} dz_3, 3\int \{L_3,f_3 \}_{z_1} dz_2dz_3 \big).
\end{split}
\end{equation}
Further, according to \eqref{kappan} and \eqref{phin}, by freezing the first entries of $\tilde{\Phi}$ and $\tilde{\kappa}$ in \eqref{k-p-Bre-alt}, we arrive at linear mappings. One od these mappings is $\tilde{\Phi}_{(K_2,K_1)}$ from $\mathfrak{h}_{21}$ to $\mathfrak{g}_{3}$, and other is $\tilde{\kappa}_{(K_2,K_1)}$  from $\mathfrak{h}_{21}$ to $\mathfrak{h}_{21}$. Dualizations of these mappings result with 
\begin{equation}
\begin{split}
&\tilde{\Phi}_{(K_2,K_1)}^*: \mathfrak{g}_{3}^* \longrightarrow 
\mathfrak{h}_{21}^*,\quad \tilde{\Phi}_{(K_2,K_1)}^*f_3=\Big(24
\int \{K_2(z_1,z_3),f_3(z_1,z_2,z_3)\}_{z_1} dz_3 ,0\Big),
\\
&
\tilde{\kappa}_{(K_2,K_1)}^*:\mathfrak{h}_{21}^*
\longrightarrow 
\mathfrak{h}_{21}^*, \quad \tilde{\kappa}_{(K_2,K_1)}^*(f_2,f_1)=\Big( 
2\{K_1(z_1),f_2(z_1,z_2)\}_{z_1}
+2\{K_2(z_1,z_2),f_2(z_1,z_2)\}_{z_1,z_2}, \\&\hspace{7cm}
2\int 
\{K_2(z_1,z_2),f_2(z_1,z_2)\}_{z_1}dz_2+\{K_1(z_1),f_1(z_1)\}_{z_1}
\Big).
\end{split}
\end{equation}
Now, recall the decomposed Lie-Poisson equation \eqref{LPEghcocycle}. Since, in the present case, the right action is trivial, we take the terms involving $\overset{\ast }{\vartriangleleft}$ and $\mathfrak{a}^*$ as zero.   
After substituting all these into the Lie-Poisson equation, we have that
\begin{equation}\label{LPEghcocycle-BBGKY}
\begin{split}
&  \frac{df _3}{dt} = 
	ad^{\ast}_{K_3}(f _3)
	 - f _3 \overset{\ast }{\vartriangleleft}
	(K_2,K_1),
	\\
&\frac{d (f_2,f_1)}{dt} =
\tilde{\kappa}^{\ast}_{(K_2,K_1)}(f_2,f_1)
+ \tilde{\Phi}_{(K_2,K_1)}^*f_3
+ \mathfrak{b}^*_{L_3}f_3. 
\end{split}
\end{equation}
Here, $ad^{\ast}_{K_3}(f _3)=\{K_3,f_3\}$ is the coadjoint action of $\mathfrak{g}_3$ on its dual space $\mathfrak{g}_3^*$. If we take $K_3=H_3$, $K_2=(1/2)H_2$ and $K_1=H_1$ then, the system is exactly the dynamics of the moments in \eqref{Dyn-f-1}, \eqref{Dyn-f-2} and \eqref{Vlasov} by decomposing the coadjoint flow \eqref{LP-BBGKY}.

\section{Coupling of 2-cocycles} \label{Sec-Cop-co}

In Subsection \ref{2coc-Sec}, 2-cocycle extensions are exhibited as particular instances of extended structures. In this section, we couple two 2-cocycle extensions under mutual actions. This will be achieved by matched pair theory available in Subsection \ref{doublecross}. Our goal is to explore conditions for a matched pair of 2-cocycles to be a 2-cocycle of a matched pair. We shall, further, study dynamics on the coupled system to have the Lie-Poisson equations for the collective 
motion. 

\subsection{Coupling of 2-cocycle Extensions} \label{centrallyextend}

We start with two Lie algebras, say $\mathfrak{l}$ and $\mathfrak{k}$, and two vector spaces $V$ and $W$. Assume a $V$-valued 2-cocycle $\varphi$ on $\mathfrak{l}$, and a $W$-valued 2-cocycle $\phi$ on $\mathfrak{k}$ given by
\begin{equation}\label{phi-psi}
\varphi: \mathfrak{l} \times \mathfrak{l} \rightarrow V, \qquad  \phi:\mathfrak{k} \times \mathfrak{k} \rightarrow W.
\end{equation}
Further, we consider a left action of the Lie algebra $\mathfrak{l}$ on the vector spaces $V$, and a left action of the Lie algebra $\mathfrak{k}$ on the vector spaces $W$ that is
\begin{equation} \label{actionsoflk}
\begin{split}
\downharpoonleft &:\mathfrak{l}\otimes V \rightarrow V,\qquad l \otimes v \mapsto l \downharpoonleft v, \\
\downharpoonright &:\mathfrak{k} \otimes W \rightarrow W,\qquad k \otimes w \mapsto k \downharpoonright w. \\
\end{split}
\end{equation} 
If these actions are compatible with the 2-cocycle maps as in \eqref{ext-act} then, one arrives at the following 2-cocycle extensions 
\begin{equation}\label{2-2-coc}
\mathfrak{g}:=V {_\varphi\rtimes} \:  \mathfrak{l},\qquad \mathfrak{h} :=W {_\phi\rtimes} \: \mathfrak{k}.
\end{equation}
In the light of the discussions done in Subsection \ref{2coc-Sec}, after some proper modifications of \eqref{AAAAA}, we have the following Lie algebra brackets on the extended Lie algebras $\mathfrak{g}$ and $\mathfrak{h}$
\begin{equation} \label{bracketontwococ}
\begin{split}
[v \oplus l, v' \oplus l']_{_\varphi\rtimes}&= \big(l \downharpoonleft v'-l' \downharpoonleft v+ \varphi(l,l')  \big) \oplus[l,l'], \\
[w \oplus k, w' \oplus k']_{_\phi\rtimes}&= \big(k \downharpoonright w'-k' \downharpoonright 
w+ \phi(k,k')  \big) \oplus[k,k'], \\ 
\end{split}
\end{equation}
 respectively. Here, the bracket $[l,l']$ is the Lie algebra bracket on $\mathfrak{l}$ whereas the bracket $[k,k']$ is the Lie  bracket on $\mathfrak{k}$.    

\textbf{Matching of 2-cocycles.}
We now examine matched pair of 2-cocycle extensions $\mathfrak{g}=V {_\varphi\rtimes} \:  \mathfrak{l}$ and $\mathfrak{h}=W {_\phi\rtimes} \: \mathfrak{k}$. To this end, we consider the followings $3$ sets of mappings: 

\textbf{(1)} We first consider mutual Lie algebras actions of $\mathfrak{l}$ and $\mathfrak{k}$ on each other 
\begin{equation}\label{Lieact-k-l}
\begin{split}
\blacktriangleright &:\mathfrak{k}\otimes \mathfrak{l}\rightarrow \mathfrak{l},\qquad k \otimes l \mapsto k \blacktriangleright l,
\\
\blacktriangleleft &:\mathfrak{k}\otimes \mathfrak{l}\rightarrow \mathfrak{k}
 ,\qquad k \otimes l \mapsto k \blacktriangleleft l.
 \end{split}
\end{equation}
We assume that these actions satisfy the compatibility condition \eqref{comp-mpa} hence, determine a matched pair Lie algebra denoted by  $\mathfrak{l} \bowtie \mathfrak{k}$. 

\textbf{(2)} In order to extend the mutual actions given in \eqref{Lieact-k-l} to the product spaces $\mathfrak{g}$ and $\mathfrak{h}$ in \eqref{2-2-coc}, we introduce   
a right action of $\mathfrak{l}$ on $W$ and, a left action $\mathfrak{k}$ on $V$ given by
\begin{equation}\label{actionsofkl}
\begin{split}
\curvearrowright &:\mathfrak{k}\otimes V \rightarrow V,\qquad k \otimes v \mapsto k \curvearrowright v, \\
\curvearrowleft&:W \otimes \mathfrak{l} \rightarrow W,\qquad w \otimes l \mapsto w \curvearrowleft l, \\
\end{split}
\end{equation}
respectively. 

\textbf{(3)} In addition, it is always possible to have the following cross representations
\begin{equation}\label{epsiot}
\begin{split}
\epsilon&: \mathfrak{k} \otimes \mathfrak{l} \rightarrow V, \qquad \epsilon(k,l) \in V,  \\
 \iota&: \mathfrak{k} \otimes \mathfrak{l} \rightarrow W, \qquad \iota(k,l) \in W.
\end{split}
\end{equation}

Referring to the mappings \eqref{Lieact-k-l}, \eqref{actionsofkl} and \eqref{epsiot}, we define mutual actions of 2-cocycle extensions $\mathfrak{g}=V {_\varphi\rtimes} \:  \mathfrak{l}$ and $\mathfrak{h} =W {_\phi\rtimes} \: \mathfrak{k}$ as follows
\begin{equation}\label{left-comp}
\begin{split}
\vartriangleright &: ~ (W {_\phi\rtimes} \: \mathfrak{k}) \times (V {_\varphi\rtimes} \:  \mathfrak{l}) \longrightarrow  V {_\varphi\rtimes} ~  \mathfrak{l}, \qquad ((w\oplus k),(v\oplus l))\mapsto \left( k \curvearrowright v+\epsilon(k,l)) \oplus (k \blacktriangleright l \right),
\\ 
\vartriangleleft &: ~ (W {_\phi\rtimes} \: \mathfrak{k}) \times (V {_\varphi\rtimes} \:  \mathfrak{l}) \longrightarrow  W {_\phi\rtimes} \: \mathfrak{k}, \qquad ((w \oplus k),(v \oplus l))\mapsto \left( w \curvearrowleft l+\iota(k,l))\oplus (k \blacktriangleleft l \right).
\end{split}
\end{equation}
It is possible to see that $\vartriangleright$ is a left action whereas $\vartriangleleft$ is a right action.
In order to construct a matched pair of $\mathfrak{h} =W {_\phi\rtimes} \: \mathfrak{k}$ and $\mathfrak{g}=V {_\varphi\rtimes} \:  \mathfrak{l}$,  one needs to justify the compatibility conditions in \eqref{comp-mpa}. A direct observation gives that, for the actions \eqref{left-comp}, the compatibility conditions \eqref{comp-mpa}  consist of $4$ equations. Two of them, those for the second terms in the decompositions, involve only the left $\blacktriangleright$ and the right $\blacktriangleleft$ actions in \eqref{Lieact-k-l}. These two equations are precisely the matched pair compatibility conditions for $\mathfrak{l} \bowtie \mathfrak{k}$. Since, we assume that $\mathfrak{l} \bowtie \mathfrak{k}$ is a matched pair, these two compatibility conditions are automatically satisfied. So, we left with other two compatibility conditions. For any $k,k'$ in $\G{k}$, $l,l'$ in $\G{l}$, $v,v'$ in $V$, and $w,w'$ in $W$, these equations are computed to be
\begin{equation}
\begin{split}
& k \curvearrowright(l \downharpoonleft v'-l' \downharpoonleft v+\varphi(l,l'))+\epsilon(k,\lbrack l, l' \rbrack)=l \downharpoonleft(k \curvearrowright v'+\epsilon(k,l'))-(k \blacktriangleright l')\downharpoonleft v+ \varphi(l,k \blacktriangleright l') \\
&\hspace{4cm} -l' \downharpoonleft(k \curvearrowright v+\epsilon(k,l))+(k\blacktriangleright l) \downharpoonleft v'-\varphi(l',k \blacktriangleright l)  +(k \blacktriangleleft l) \curvearrowright v'-(k \blacktriangleleft l')\curvearrowright v\\
&\hspace{4cm} +\epsilon(k \blacktriangleleft l,l')-\iota(k \blacktriangleleft l',l) -(w \curvearrowleft l'+\iota(k,l'))\curvearrowleft l 
\\
& (k \downharpoonright w'-k' \downharpoonright w+\phi(k,k'))\curvearrowleft l+\iota([k,k'],l)=k\downharpoonright(w'\curvearrowleft l+\iota(k',l))-(k' \blacktriangleleft l)\downharpoonright w+\phi(k,k' \blacktriangleleft l) \\
&\hspace{4cm}-k'\downharpoonright(w \curvearrowleft l+\iota(k,l))+(k \blacktriangleleft l) \downharpoonright w'-\phi(k',k\blacktriangleleft l)  -w \curvearrowleft(k' \blacktriangleright l)\\ &\hspace{4cm} +w'\curvearrowleft(k \blacktriangleright l)+\iota(k',k \blacktriangleright l)
-\epsilon(k,k'\blacktriangleright l) +k \curvearrowright (k' \curvearrowright v+\epsilon(k',l)),
\end{split}
\end{equation}
where $\downharpoonleft$ and $\downharpoonright$ are the left actions in \eqref{actionsoflk}, $\blacktriangleright$ and  $\blacktriangleleft$ are actions in \eqref{Lieact-k-l}, $\curvearrowright$ and $\curvearrowleft$ are the actions in \eqref{actionsofkl}, $\epsilon$ and $\iota$ are the mappings in \eqref{epsiot}. By assuming that these conditions are satisfied,  
we define the following matched pair Lie algebra 
  \begin{equation}\label{mp-2c}
  \mathfrak{g} \bowtie \mathfrak{h} 
=
(V {_\varphi\rtimes} \: \mathfrak{l})\bowtie (W {_\phi\rtimes} \: \mathfrak{k}).
  \end{equation}
To compute the matched Lie algebra bracket on this total space, we recall the general formula of the matched Lie bracket in \eqref{mpla} then, by employing the actions in \eqref{left-comp}, compute
\begin{equation}\label{bracketlk-1}
\big[\big ( (v\oplus l)\oplus(w\oplus k)\big),\big((v'\oplus l')
\oplus
(w'\oplus k')\big)\big]_{\bowtie}=  (\bar{v}\oplus \bar{l})\oplus(\bar{w}\oplus \bar{k}) 
\end{equation}
where 
\begin{equation}\label{bracketlk-2}
\begin{split}
& 
\bar{v}= l \downharpoonleft v'-l'\downharpoonleft v+k \curvearrowright v'-k'\curvearrowright v+\epsilon(k,l') -\epsilon(k',l)+\varphi(l,l'),
\\
& 
\bar{l}= [l,l']+k \blacktriangleright l'-k' \blacktriangleright l , 
\\
& 
\bar{w}= k \downharpoonright w'-k' \downharpoonright w+w \curvearrowleft l'-w' \curvearrowleft l+\iota(k,l') -\iota(k',l)+\phi(k,k')  ,
\\
&
\bar{k}=[k,k']+k \blacktriangleleft l'-k' \blacktriangleleft l.
\end{split}
\end{equation}

\textbf{Matched pair as a 2-cocycle.}
Now, we investigate that under which conditions the matched pair $\mathfrak{g} \bowtie \mathfrak{h} $ in \eqref{mp-2c} turns out to be a 2-cocycle extension by itself. To show this, we need to determine a left action a 2-cocycle. Let us determine these one by one.

\textbf{(Left action).} Recall the mutual actions in \eqref{Lieact-k-l} and the matched pair algebra $\mathfrak{l} \bowtie \mathfrak{k}$. 
It is evident that $\mathfrak{l} \bowtie \mathfrak{k}$ is a Lie subalgebra of $\mathfrak{g} \bowtie \mathfrak{h} $. 
Define a left action of  $\mathfrak{l} \bowtie \mathfrak{k}$ on the product space $V \oplus W$ as follows:
\begin{equation}\label{fifthact}
\rrbiprod: (\mathfrak{l} \bowtie \mathfrak{k}) \times (V \oplus W) \longrightarrow (V \oplus W), \quad  
(l\oplus k) ~ \rrbiprod ~ (v\oplus w) =\big(l \downharpoonleft v+k \curvearrowright v \big) \oplus\big( -w \curvearrowleft l+k \downharpoonright w\big),
\end{equation}  
where we have employed the left actions $\downharpoonleft$ and $\downharpoonright$ in \eqref{actionsoflk}, the actions
$\curvearrowright$ and $\curvearrowleft$ in \eqref{actionsofkl}. 
To be a left action, \eqref{fifthact} needs to satisfy the first condition in \eqref{matchpaircond}. We compute this as  
\begin{equation}
\begin{split}
(k \blacktriangleright l')\downharpoonleft v-(k' \blacktriangleright l)\downharpoonleft v+(k \blacktriangleleft l') \curvearrowright v-(k' \blacktriangleleft l) \curvearrowright v
&=l \downharpoonleft (k' \curvearrowright v)-l' \downharpoonleft (k \curvearrowright v) \\
&\quad k \curvearrowright (l' \downharpoonleft v)-k' \curvearrowright(l \downharpoonleft v) \\
\end{split}
\end{equation}
for the first entry in \eqref{fifthact}. Notice that, this is an equation defined on the vector space $V$. For the second entry, we have
\begin{equation}
\begin{split}
-w \curvearrowleft(k \blacktriangleright l')+w \curvearrowleft(k' \blacktriangleright l)+(k \blacktriangleleft l')\downharpoonright w-(k' \blacktriangleleft l)\downharpoonright w&= -(k' \downharpoonright w)\curvearrowleft l+(k \downharpoonright w)\curvearrowleft l' \\
&-k \downharpoonright(w \curvearrowleft l')+k'\downharpoonright(w \curvearrowleft l)
\end{split}
\end{equation}
where $\blacktriangleright$ and $\blacktriangleleft$ are the mutual actions exhibited in \eqref{Lieact-k-l}.

\textbf{(2-cocycle).}
 Later, we introduce a $(V \oplus W)$-valued 2-cocycle on $\mathfrak{l} \bowtie \mathfrak{k}$, in terms of the  2-cocycles $\varphi$ and $\phi$ given in \eqref{phi-psi}, as follows
\begin{equation}\label{theta}
\Theta: (\mathfrak{l} \bowtie \mathfrak{k}) \times (\mathfrak{l} \bowtie \mathfrak{k}) \longrightarrow V \oplus W, \quad \Theta((l\oplus k),(l' \oplus k'))=\big(\varphi(l , l')+\epsilon(k,l')-\epsilon(k',l)\big) \oplus \big(\phi(k, k')+\iota(k,l')-\iota(k',l)\big),
\end{equation}
One needs to ask the compatibility conditions, in \eqref{ext-act},
\begin{equation} \label{con-2}
\begin{split}
	0=&\varphi(l,k''\blacktriangleright l'-k'\blacktriangleright l'')+\epsilon(k,[l'',l'])+\epsilon(k,k''\blacktriangleright l'-k'\blacktriangleright l'')-\epsilon([k'',k'],l)-\epsilon(k''\blacktriangleleft l'-k'\blacktriangleleft l'',l)\\
	&- l \downharpoonleft(\varphi(l',l'')+\epsilon(k',l'')-\epsilon(k'',l'))-k \curvearrowright(\varphi(l',l'')+\epsilon(k',l'')-\epsilon(k'',l'))\\
	&\varphi(l',k\blacktriangleright l''-k''\blacktriangleright l)+\epsilon(k',[l,l''])+\epsilon(k',k\blacktriangleright l''-k''\blacktriangleright l)-\epsilon([k,k''],l')-\epsilon(k\blacktriangleleft l''-k''\blacktriangleleft l,l')\\
	&- l' \downharpoonleft(\varphi(l'',l)+\epsilon(k'',l)-\epsilon(k,l''))-k \curvearrowright(\varphi(l'',l)+\epsilon(k'',l)-\epsilon(k,l''))\\
	&\varphi(l'',k'\blacktriangleright l-k\blacktriangleright l')+\epsilon(k'',[l',l])+\epsilon(k'',k'\blacktriangleright l-k\blacktriangleright l')-\epsilon([k',k],l'')-\epsilon(k'\blacktriangleleft l-k\blacktriangleleft l',l'')\\
	&- l'' \downharpoonleft(\varphi(l,l')+\epsilon(k,l')-\epsilon(k',l))-k \curvearrowright(\varphi(l,l')+\epsilon(k,l')-\epsilon(k',l))\\
	0=&\phi(l,k''\blacktriangleright l'-k'\blacktriangleright l'')+\iota(k,[l'',l'])+\iota(k,k''\blacktriangleright l'-k'\blacktriangleright l'')-\iota([k'',k'],l)-\iota(k''\blacktriangleleft l'-k'\blacktriangleleft l'',l)\\
	&- l \downharpoonleft(\phi(l',l'')+\iota(k',l'')-\iota(k'',l'))-k \curvearrowright(\phi(l',l'')+\iota(k',l'')-\iota(k'',l'))\\
	&\phi(l',k\blacktriangleright l''-k''\blacktriangleright l)+\iota(k',[l,l''])+\iota(k',k\blacktriangleright l''-k''\blacktriangleright l)-\iota([k,k''],l')-\iota(k\blacktriangleleft l''-k''\blacktriangleleft l,l')\\
	&- l' \downharpoonleft(\phi(l'',l)+\iota(k'',l)-\iota(k,l''))-k \curvearrowright(\phi(l'',l)+\iota(k'',l)-\iota(k,l''))\\
	&\varphi(l'',k'\blacktriangleright l-k\blacktriangleright l')+\iota(k'',[l',l])+\iota(k'',k'\blacktriangleright l-k\blacktriangleright l')-\iota([k',k],l'')-\iota(k'\blacktriangleleft l-k\blacktriangleleft l',l'')\\
	&- l'' \downharpoonleft(\phi(l,l')+\iota(k,l')-\iota(k',l))-k \curvearrowright(\phi(l,l')+\iota(k,l')-\iota(k',l)).
\end{split}
\end{equation}
We are ready now to define a 2-cocycle extension of the Lie alegbra $\mathfrak{l} \bowtie \mathfrak{k}$ via $V \oplus W$. We record this by 
\begin{equation} \label{2-co-mp}
(V \oplus W) {_\Theta\rtimes} (\mathfrak{l} \bowtie \mathfrak{k}).
\end{equation}
To arrive at the Lie bracket on this space, one only needs to employ the explicit definitions of the left action $\rrbiprod$ in 2-cocycle $\Theta$ into the generic formula \eqref{AAAAA} of Lie bracket for 2-cocycle extensions. This results with 
 \begin{equation}\label{AAAAA--}
 \begin{split}
 &[(v\oplus w) \oplus (l\oplus k), (v '\oplus w ') \oplus (l'\oplus k')]_{_\Theta\rtimes}= \big((l\oplus k) \, \rrbiprod \, (v '\oplus w ')-(l'\oplus k') \, \rrbiprod \, (v\oplus w) \\
 &\hspace{4cm}
 + \Theta ((l\oplus k),(l'\oplus k'))    \big) \oplus[(l\oplus k),(l'\oplus k')],
 \end{split}
 \end{equation} 
where $ \rrbiprod$ is the left action in \eqref{fifthact}, and the bracket $[(l\oplus k),(l'\oplus k')]$ is the matched pair Lie bracket on $\mathfrak{l} \bowtie \mathfrak{k}$. 
 It is immediate to see that extended Lie algebra bracket on \eqref{AAAAA--} is precisely  equal to the matched Lie bracket given  in \eqref{bracketlk-1} and \eqref{bracketlk-2}, up to some reordering. 
Eventually, we are ready now to collect all the discussions done so far in the following proposition.
\begin{proposition}\label{Prop-Gokhan}
The matched pair Lie algebra $\G{g}\bowtie \G{h}$, given in \eqref{mp-2c}, of 2-cocycle extensions $\mathfrak{g}=V {_\varphi\rtimes} \mathfrak{l}$ and $ \mathfrak{h}=W {_\phi\rtimes} \mathfrak{k}$ is a 2-cocycle extension admitting a left action $\rrbiprod$ in \eqref{fifthact} and a 2-cocycle $\Theta$ in \eqref{theta} that is 
\begin{equation}\label{claim}
 (V {_\varphi\rtimes} \:\mathfrak{l}) \bowtie  (W {_\phi\rtimes} \: \mathfrak{k}) \cong (V\oplus W) {_\Theta\rtimes} \: (\mathfrak{l} \bowtie \mathfrak{k}).
\end{equation}  	 
\end{proposition}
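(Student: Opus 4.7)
The plan is to establish the isomorphism in \eqref{claim} by producing an explicit bijection and checking it intertwines the Lie brackets. As vector spaces, both sides are canonically $V\oplus\mathfrak{l}\oplus W\oplus\mathfrak{k}$, so I would define
\begin{equation*}
\Psi:(V {_\varphi\rtimes}\mathfrak{l})\bowtie(W {_\phi\rtimes}\mathfrak{k})\longrightarrow(V\oplus W) {_\Theta\rtimes}(\mathfrak{l}\bowtie\mathfrak{k}),\qquad ((v\oplus l)\oplus(w\oplus k))\mapsto ((v\oplus w)\oplus(l\oplus k))
\end{equation*}
and verify that $\Psi$ is a Lie algebra homomorphism. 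Its linearity and bijectivity are immediate, so the content is to check compatibility with the brackets.

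First I would unpack the right-hand side bracket. Substituting the explicit formula \eqref{fifthact} for the action $\rrbiprod$ and \eqref{theta} for the cocycle $\Theta$ into the generic 2-cocycle bracket \eqref{AAAAA--}, and using the matched pair bracket on $\mathfrak{l}\bowtie\mathfrak{k}$ in the last slot, one obtains explicit expressions for the four components in $V$, $W$, $\mathfrak{l}$, $\mathfrak{k}$. Comparing these with the four components $\bar v,\bar w,\bar l,\bar k$ produced by \eqref{bracketlk-1}--\eqref{bracketlk-2} on the left-hand side is then a pure bookkeeping check: the $\mathfrak{l}$- and $\mathfrak{k}$-parts coincide because in both presentations they are governed by the matched pair bracket of $\mathfrak{l}\bowtie\mathfrak{k}$; the $V$-part matches because the terms $l\downharpoonleft v'-l'\downharpoonleft v+\varphi(l,l')$ come from the 2-cocycle bracket on $V {_\varphi\rtimes}\mathfrak{l}$, the terms $k\curvearrowright v'-k'\curvearrowright v$ come from $\rrbiprod$, and $\epsilon(k,l')-\epsilon(k',l)$ comes from $\Theta$; and the $W$-part matches symmetrically.

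The remaining obligation is to confirm that $\rrbiprod$ and $\Theta$ are legitimate inputs for a 2-cocycle extension of $\mathfrak{l}\bowtie\mathfrak{k}$, i.e.\ that $\rrbiprod$ is a genuine left Lie algebra action and $\Theta$ satisfies the twisted cocycle condition \eqref{ext-act}. The action axiom for $\rrbiprod$ is exactly the pair of identities displayed immediately after \eqref{fifthact}, which are themselves consequences of the facts that $\downharpoonleft$ and $\downharpoonright$ are already actions of $\mathfrak{l}$ and $\mathfrak{k}$, that $\curvearrowright$, $\curvearrowleft$ intertwine with these through $\blacktriangleright,\blacktriangleleft$, and that $\mathfrak{l}\bowtie\mathfrak{k}$ is a matched pair; the cocycle condition on $\Theta$ is exactly the pair of identities \eqref{con-2}, which are assumed as part of the data.

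The main obstacle is keeping the sign and ordering conventions straight across the two presentations, since the bracket \eqref{bracketlk-1}--\eqref{bracketlk-2} arises from the matched pair formula \eqref{mpla} applied to \eqref{left-comp} while the bracket \eqref{AAAAA--} arises from the 2-cocycle formula \eqref{AAAAA} applied to \eqref{fifthact} and \eqref{theta}; these two derivations distribute the cross terms $\epsilon(k,l')-\epsilon(k',l)$ and $\iota(k,l')-\iota(k',l)$ differently between the ``action'' contributions and the ``cocycle'' contributions, and one must confirm that the totals agree. Once this bookkeeping is carried out, $\Psi$ is a Lie algebra isomorphism and the proposition follows.
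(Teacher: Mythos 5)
Your proposal is correct and follows essentially the same route as the paper: both sides' brackets are written out explicitly (the matched pair bracket \eqref{bracketlk-1}--\eqref{bracketlk-2} and the 2-cocycle extension bracket \eqref{AAAAA--}), the identification is the evident reordering of factors, and the legitimacy of $\rrbiprod$ as a left action and of $\Theta$ as a twisted 2-cocycle rests on exactly the displayed identities following \eqref{fifthact} and on \eqref{con-2}. The paper additionally remarks that the result can be obtained more implicitly via Universal Lemma \ref{universal-prop} by embedding $\G{g}$ and $\G{h}$ as trivially intersecting subalgebras of $(V\oplus W){_\Theta\rtimes}(\mathfrak{l}\bowtie\mathfrak{k})$, but its primary derivation is the same explicit bookkeeping you describe.
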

Even though, we have derived all the mappings and conditions up to now explicitly. There is a short but implicit way to arrive at that proposition by employing Universal Lemma \ref{universal-prop}. For this, we first embed Lie subalgebras to the space $(V \oplus W) {_\Theta\rtimes} (\mathfrak{l} \bowtie \mathfrak{k})$ as follows
\begin{equation}
\begin{split}
\mathfrak{g}&\longrightarrow (V \oplus W) {_\Theta\rtimes} (\mathfrak{l} \bowtie \mathfrak{k}), \qquad (v \oplus l) \mapsto (v \oplus 0)\oplus(l \oplus 0)
\\
\mathfrak{h}&\longrightarrow (V \oplus W) {_\Theta\rtimes} (\mathfrak{l} \bowtie \mathfrak{k}), \qquad (w\oplus k) \mapsto (0 \oplus w) \oplus( 0 \oplus k).
\end{split}
\end{equation}
Then Universal Lemma \ref{universal-prop} reads that the total space admits a matched pair decomposition. 

\textbf{A Particular case.} 
For future reference, we now examine a particular case of Proposition \ref{Prop-Gokhan}. First, we choose the left actions $\downharpoonleft $ and $\downharpoonright$ in \eqref{actionsoflk} as trivial. So that, the Lie brackets on the 2-cocycle extensions in \eqref{bracketontwococ} turn out to be 
 \begin{equation} 
 \begin{split}
 [v \oplus l, v' \oplus l']_{_\varphi\rtimes}&=  \varphi(l,l')   \oplus[l,l'], \\
 [w \oplus k, w' \oplus k']_{_\phi\rtimes}&= \phi(k,k')  \oplus[k,k']. \\ 
 \end{split}
 \end{equation}
In addition, consider that the mutual actions of $\G{k}$ and $\G{l}$ in \eqref{Lieact-k-l} and the cross representations in \eqref{actionsofkl} are all zero. Hence, we have that, the mutual actions of $\G{h}$ and $\G{g}$ in \eqref{left-comp} are 
\begin{equation}\label{heismutual}
\begin{split}
\vartriangleright &: ((w\oplus k),(v\oplus l))\mapsto \left( \epsilon(k,l) \oplus 0 \right),
\\ 
\vartriangleleft &: ((w\oplus k),(v\oplus l))\mapsto \left(\iota(k,l) \oplus 0 \right).
\end{split}
\end{equation} 
A direct calculation gives us that, in the present setting, $\vartriangleright$ is a left action if and only if $\epsilon(k,[l,l'])=0$, and $\vartriangleleft$ is a right action  if and only if  $\iota([k,k'],l)=0$. Eventually, we claim that, the assumptions in this case reduce the matched Lie algebra bracket in \eqref{bracketlk-1} and \eqref{bracketlk-2} as
\begin{equation}\label{heisbracket}
\begin{split}
\big[\big ( (v\oplus l)\oplus (w\oplus k)\big),\big((v'\oplus l')\oplus(w'\oplus k')\big)\big]_{\bowtie}=&\Big(\big(\epsilon(k,l')-\epsilon(k',l)+\varphi(l,l')\big)\oplus 0\Big) \\ 
&\oplus \Big(\big(\iota(k,l') -\iota(k',l)+\phi(k,k') \big) \oplus 0 \Big)
\end{split}
\end{equation}
where $\epsilon$ and $\iota$ are as in \eqref{epsiot}. In order to implement Proposition \ref{Prop-Gokhan}, we exploit a proper left action and a 2-cocycle operator. Here, according to \eqref{fifthact}, we take the left action as trivial whereas the 2-cocycle term $\Theta$ is precisely equal to the one in \eqref{theta}.

\subsection{Lie Poisson Dynamics on Duals of Matched 2-cocycle Extensions} \label{twococyclematchedLie}

We have exhibited matched pair of 2-cocycle extensions in the previous subsection. To arrive at Hamiltonian dynamics on the dual picture, we make use the Lie-Poisson formalism on 2-cocycle extensions, that is the theory in Subsection \ref{LPD-2c}. In that subsection, there exist both the Lie-Poisson bracket \eqref{centextLiepois} and the Lie-Poisson equations  \eqref{centextLiepois2}  
for the case of 2-cocycles.

\textbf{Lie-Poisson brackets.}
Let us consider the following notation on the dual spaces 
\begin{equation}\label{elements}
\mu=\mu_{V^*} \oplus \mu_{W^*}  \in V^* \oplus W^*,\qquad 
\nu= \nu_{\mathfrak{l}^*} \oplus \nu_{\mathfrak{k}^*}  \in  \mathfrak{l}^* \oplus \mathfrak{k} ^*.
\end{equation}
Substitute the 2-cocycle $\Theta$ in \eqref{theta} and the left action $\rrbiprod$ in \eqref{fifthact} in the Lie-Poisson bracket \eqref{centextLiepois}.
Therefore, for this situation, (plus/minus) Lie-Poisson bracket is
\begin{equation} \label{Liepoiscenter}
\begin{split}
\left\{ \mathcal{H},\mathcal{F}\right\}_{_\Theta\rtimes}(\mu\oplus\nu) 
= &
\pm 
\Big\langle 
\nu_{\mathfrak{l}^*} \oplus \nu_{\mathfrak{k}^*} ,
\big[
\big( 
\frac{\delta \mathcal{H}}{\delta 
\nu_{\mathfrak{l}^*}}\oplus \frac{\delta \mathcal{H}}{\delta \nu_{\mathfrak{k}^*}}
\big) 
,
\big( 
\frac{\delta \mathcal{F}}{\delta 
\nu_{\mathfrak{l}^*}}\oplus \frac{\delta \mathcal{F}}{\delta \nu_{\mathfrak{k}^*}}
\big) 
\big] 
 	\Big\rangle
 	 \\ 
 	 & \pm 
 	 \Big\langle 
 	 \mu_{V^*} \oplus \mu_{W^*}, 
 	 \Theta \Big(
 	 \frac{\delta \mathcal{H}}{\delta 
\nu_{\mathfrak{l}^*}}\oplus \frac{\delta \mathcal{H}}{\delta \nu_{\mathfrak{k}^*}},
\frac{\delta \mathcal{F}}{\delta 
\nu_{\mathfrak{l}^*}}\oplus \frac{\delta \mathcal{F}}{\delta \nu_{\mathfrak{k}^*}}
\Big) \Big \rangle 
\\
&
  \pm
    \Big\langle \mu_{V^*} \oplus \mu_{W^*} ,
  \big( \frac{\delta \mathcal{H}}{\delta 
\nu_{\mathfrak{l}^*}}\oplus \frac{\delta \mathcal{H}}{\delta \nu_{\mathfrak{k}^*}}\big)
\rrbiprod 
	  \big(
	\frac{\delta \mathcal{F}}{\delta 
\mu_{V^*}}\oplus \frac{\delta \mathcal{F}}{\delta \mu_{W^*}}  \big)
	  \Big \rangle \\
	&
  \mp
    \Big\langle \mu_{V^*} \oplus \mu_{W^*} ,
  \big( \frac{\delta \mathcal{F}}{\delta 
\nu_{\mathfrak{l}^*}}\oplus \frac{\delta \mathcal{F}}{\delta \nu_{\mathfrak{k}^*}}\big)
\rrbiprod 
	  \big(
	\frac{\delta \mathcal{H}}{\delta 
\mu_{V^*}}\oplus \frac{\delta \mathcal{H}}{\delta \mu_{W^*}}  \big)
	  \Big \rangle,
	\end{split}
\end{equation}
where the bracket on the first line is the matched pair Lie bracket on $\G{l}\bowtie\G{k}$. Here, the first pairing is the one between   $\G{l}^*\times \G{k}^*$ and $\G{l}\bowtie\G{k}$, the others are the pairing between $V^*\oplus W^*$ and $V\oplus W$. It is maybe needless to say that we have assumed that all the vector spaces studied in here are reflexive. 
If the explicit expressions for the Lie bracket on $\G{l}\bowtie\G{k}$, the left action $\rrbiprod$ in \eqref{fifthact}, and the 2-cocycle $\Theta$ in \eqref{theta} are substituted into the bracket \eqref{Liepoiscenter}, one arrives at the bracket as 
\begin{equation} \label{Liepoiscenter-2}
\begin{split}
&
\left\{ \mathcal{H},\mathcal{F}\right\}_{_\Theta\rtimes}(\mu\oplus\nu) 
=
\pm \big
\langle \nu_{\mathfrak{l}^*}, \big [ \frac{\delta \mathcal{H}}{\delta \nu_{\mathfrak{l}^*}}, \frac{\delta \mathcal{F}}
{\delta \nu_{\mathfrak{l}^*}}\big] +\frac{\delta \mathcal{H}}{ \delta \nu_{\mathfrak{k}^*}} \blacktriangleright \frac{\delta \mathcal{F}}{\delta  \nu_{\mathfrak{l}^*}}-\frac{\delta \mathcal{F}}{\delta  \nu_{\mathfrak{k}^*}} \blacktriangleright \frac{\delta \mathcal{H}}{\delta \nu_{\mathfrak{l}^*}} \big \rangle 
\\
&
\hspace{3,1cm}
\pm \big
\langle \nu_{\mathfrak{k}^*},\big \lbrack \frac{\delta \mathcal{H}}{\delta \nu_{\mathfrak{k}^*}}, \frac{\delta \mathcal{F}}{\delta \nu_{\mathfrak{k}^*}}\big \rbrack
+
\frac{\delta \mathcal{H}}{\delta \nu_{\mathfrak{k}^*}} \blacktriangleleft \frac{\delta \mathcal{F}}{\delta\nu_{\mathfrak{l}^*}}
-
\frac{\delta \mathcal{F}}{\delta  \nu_{\mathfrak{k}^*}} \blacktriangleleft \frac{\delta \mathcal{H}}{\delta \nu_{\mathfrak{l}^*}} 
\big \rangle 
\\
&
\hspace{3,1cm}  \pm \big
\langle \mu_{V^*}, \frac{\delta \mathcal{H}}{\delta \nu_{\mathfrak{l}^*}} \downharpoonleft \frac{\delta \mathcal{F}}{\delta \mu_{V^*}}
+
\frac{\delta \mathcal{H}}{\delta \nu_{\mathfrak{k}^*}} 
\curvearrowright 
\frac{\delta \mathcal{F}}{\delta \mu_{V^*}}
+
\frac{\delta \mathcal{F}}{\delta \nu_{\mathfrak{l}^*}}
\downharpoonleft 
\frac{\delta \mathcal{H}}{\delta \mu_{V^*}}
+
\frac{\delta \mathcal{F}}{\delta \nu_{\mathfrak{k}^*}} 
\curvearrowright
\frac{\delta \mathcal{H}}{\delta \mu_{V^*}}
 \big \rangle
\\
& \hspace{3,1cm}
\pm \big \langle \mu_{V^*},\epsilon(\frac{\delta \mathcal{H}}{\delta \nu_{ \mathfrak{k}^*}},\frac{\delta \mathcal{F}}{\delta \nu_{ \mathfrak{l}^*}})-\epsilon(\frac{\delta \mathcal{F}}{\delta \nu_{ \mathfrak{k}^*}},\frac{\delta \mathcal{H}}{\delta \nu_{ \mathfrak{l}^*}})
+
\varphi(\frac{\delta \mathcal{H}}{\delta \nu_{\mathfrak{l}^*}},\frac{\delta \mathcal{F}}{\delta \nu_{\mathfrak{l}^*}}) \\
&
\hspace{3,1cm} \mp \big
\langle \mu_{W^*}, -\frac{\delta \mathcal{F}}{\delta \mu_{W^*}} \curvearrowleft \frac{\delta \mathcal{H}}{\delta \nu_{\mathfrak{l}^*}}+\frac{\delta \mathcal{H}}{\delta \nu_{ \mathfrak{k}^*}} \downharpoonright \frac{\delta \mathcal{F}}{\delta \mu_{W^*}}-\frac{\delta \mathcal{H}}{\delta \mu_{W^*}}\curvearrowleft\frac{\delta \mathcal{F}}{\delta \nu_{\mathfrak{l}^*}}+\frac{\delta \mathcal{F}}{\delta \nu_{\mathfrak{k}^*}} \downharpoonright \frac{\delta \mathcal{H}}{\delta \mu_{W^*}}
\big\rangle. \\
& \hspace{3,1cm} \pm \big \langle \mu_{W^*} , \iota(\frac{\delta \mathcal{H}}{\delta \nu_{ \mathfrak{k}^*}},\frac{\delta \mathcal{F}}{\delta \nu_{ \mathfrak{l}^*}})-\iota(\frac{\delta \mathcal{F}}{\delta \nu_{ \mathfrak{k}^*}},\frac{\delta \mathcal{H}}{\delta \nu_{ \mathfrak{l}^*}})+ \phi(\frac{\delta \mathcal{H}}{\delta \nu_{ \mathfrak{k}^*}},\frac{\delta \mathcal{F}}{\delta \nu_{ \mathfrak{k}^*}}) \big \rangle
\end{split}
\end{equation}
Here, $\blacktriangleright$ and $\blacktriangleleft$ are the actions in \eqref{Lieact-k-l},  $\curvearrowleft $ and $\curvearrowright$ are in \eqref{actionsofkl} whereas   $\downharpoonleft $ 
and $\downharpoonleft$ are those in \eqref{actionsoflk}. 

In the light of Proposition \ref{Prop-Gokhan}, one can write the bracket \eqref{Liepoiscenter} as a matched pair Lie-Poisson bracket. To this end, by reordering the elements in \eqref{elements}, define
\begin{equation}
\tilde{\mu}=\mu_{V^*}\oplus \nu_{\mathfrak{l}^*}\in \mathfrak{g}=V {_\varphi\rtimes} \mathfrak{l},
\qquad
\tilde{\nu}=\mu_{W^*}\oplus \nu_{\mathfrak{k}^*}\in \mathfrak{h}=W {_\phi\rtimes}  \mathfrak{k}.
\end{equation} 
Then the matched pair Lie-Poisson bracket in \eqref{LiePoissonongh} takes the form
\begin{equation} 
\begin{split} \label{LiePoissonongh-1}
 \left\{ \mathcal{H},\mathcal{F}\right\}_{\bowtie}(\tilde{\mu}\oplus\tilde{\nu}) =& \pm 
\left\{ \mathcal{H},\mathcal{F}\right\}_{\varphi\rtimes }(\tilde{\mu})
	\pm 
\left\{ \mathcal{H},\mathcal{F}\right\}_{\phi\rtimes }(\tilde{\nu}) 
\\
 & 
\mp
\Big  \langle 
\mu_{V^*}\oplus \nu_{\mathfrak{l}^*} 
,
\big( 
\frac{\delta \mathcal{H}}{\delta 
\mu_{W^*}}\oplus \frac{\delta \mathcal{H}}{\delta \nu_{\mathfrak{k}^*}}
\big)
	\vartriangleright 
\big( 
\frac{\delta \mathcal{F}}{\delta 
\mu_{V^*}}\oplus \frac{\delta \mathcal{F}}{\delta \nu_{\mathfrak{l}^*}}
\big)
-
\big( 
\frac{\delta \mathcal{F}}{\delta 
\mu_{W^*}}\oplus \frac{\delta \mathcal{F}}{\delta \nu_{\mathfrak{k}^*}}
\big)
	\vartriangleright 
\big( 
\frac{\delta \mathcal{H}}{\delta 
\mu_{V^*}}\oplus \frac{\delta \mathcal{H}}{\delta \nu_{\mathfrak{l}^*}}
\big)
\Big \rangle
 \\
&  \mp \Big \langle \mu_{W^*}\oplus \nu_{\mathfrak{k}^*} 
,
\big( 
\frac{\delta \mathcal{H}}{\delta 
\mu_{W^*}}\oplus \frac{\delta \mathcal{H}}{\delta \nu_{\mathfrak{k}^*}}
\big)
	\vartriangleleft 
\big( 
\frac{\delta \mathcal{F}}{\delta 
\mu_{V^*}}\oplus \frac{\delta \mathcal{F}}{\delta \nu_{\mathfrak{l}^*}}
\big)
-
\big( 
\frac{\delta \mathcal{F}}{\delta 
\mu_{W^*}}\oplus \frac{\delta \mathcal{F}}{\delta \nu_{\mathfrak{k}^*}}
\big)
	\vartriangleleft 
\big( 
\frac{\delta \mathcal{H}}{\delta 
\mu_{V^*}}\oplus \frac{\delta \mathcal{H}}{\delta \nu_{\mathfrak{l}^*}}
\big)
\Big \rangle, 
\end{split}
\end{equation}
where the actions $\vartriangleleft$ and $\vartriangleright$ are in \eqref{left-comp}. Notice that, the Lie-Poisson brackets on the right hand side of the first line are the individual Lie-Poisson brackets on the dual spaces $\G{g}^*$ and $\G{h}^*$, respectively. Those terms available in the second line of \eqref{LiePoissonongh-1} are manifestations of the left action of $\G{h}$ on $\G{g}$ whereas the third  line is due to the right action of $\G{g}$ on $\G{h}$. A direct computation gives that the Lie-Poisson bracket \eqref{LiePoissonongh-1} is equal to the Lie-Poisson bracket in \eqref{Liepoiscenter-2}. 

\textbf{Dual actions:} 
First, define the dual actions of
$\curvearrowright$ and $\curvearrowleft$ in \eqref{actionsofkl}  as
\begin{equation} \label{dualofyanok}
\begin{split}  
\overset{\ast}{\curvearrowleft} &: V^* \rightarrow V^*, \qquad \langle
\mu_{V^*} \overset{\ast}{\curvearrowleft} k,v \rangle=\langle \mu_{V^*},
k \curvearrowright v \rangle, \\
\overset{\ast}{\curvearrowright} &: W^* \rightarrow W^*, \qquad \langle
l  \overset{\ast}{\curvearrowright} \mu_{W^*},w \rangle=\langle \mu_{W^*}, w
\curvearrowleft l \rangle,
\end{split} 
\end{equation} 
respectively. See that, $\overset{\ast}{\curvearrowleft}$ is a right action whereas $\overset{\ast}{\curvearrowright}$ is a left action. Secondly, introduce the dual (right) actions of $\downharpoonleft$ and $\downharpoonright$  in \eqref{actionsoflk} as 
\begin{equation} \label{dualofduzok}
\begin{split}
\overset{\ast}{\downharpoonleft} k &: W^* \rightarrow W^*, \qquad
\langle \mu_{W^*} \overset{\ast}{\downharpoonleft} k,w \rangle=\langle
\mu_{W^*}, k \downharpoonright w \rangle, \\
 \overset{\ast}{\downharpoonright} l &: V^*
\rightarrow V^*, \qquad \langle \mu_{V}
\overset{\ast}{\downharpoonright} l,v \rangle=\langle \mu_{V^*}, l
\downharpoonleft v \rangle,
\end{split}
\end{equation}
respectively. Then, determine the dual actions of $\blacktriangleleft$ and $\blacktriangleright$ in \eqref{Lieact-k-l}
\begin{equation}\label{dualofkara}
\begin{split}
\overset{\ast}{\blacktriangleleft}&: \mathfrak{k} \otimes \mathfrak{l}^* \longrightarrow \mathfrak{l}^*, \qquad \langle k \overset{\ast}{\blacktriangleleft} \nu_{\mathfrak{l}^*}, l \rangle=\langle \nu_{\mathfrak{l}^*}, k \blacktriangleright l \rangle, \\
\overset{\ast}{\blacktriangleright}&: \mathfrak{l} \otimes \mathfrak{k}^* \longrightarrow \mathfrak{k}^*, \qquad \langle l \overset{\ast}{\blacktriangleright} \nu_{\mathfrak{k}^*}, k \rangle=\langle \nu_{\mathfrak{k}^*}, k \blacktriangleleft l \rangle,
\end{split}
\end{equation} 
respectively.
We define the dual mappings of the 2-cocycles $\varphi$ and $\phi$ as
\begin{equation}\label{dualofvp}
\begin{split}
\varphi^*_{l}&:V^* \longrightarrow \mathfrak{l}^*, \qquad \langle \varphi^*_{l} \mu_{V^*},l'\rangle=-\langle \mu_{V^*}, \varphi_{l} l' \rangle, \\
\phi^*_{k}&:W^* \longrightarrow \mathfrak{k}^*, \qquad \langle \phi^*_{k} \mu_{W^*},k'\rangle=-\langle \mu_{W^*}, \phi_{k} k' \rangle.
\end{split}
\end{equation} 
Lastly, exhibit the dual mappings of $\epsilon$ and $\iota$ in \eqref{epsiot} as
\begin{equation}\label{dualofei}
\begin{split}
\epsilon^*_{k}&:V^* \longrightarrow \mathfrak{l}^*, \qquad \langle \epsilon^*_{k} \mu_{V^*},l \rangle=-\langle \mu_{V^*}, \epsilon_{k} l \rangle, \\
\iota^*_{k}&:W^* \longrightarrow \mathfrak{l}^*, \qquad \langle \iota^*_{k} \mu_{W^*},l \rangle=-\langle \mu_{W^*}, \iota_{k} l \rangle.
\end{split}
\end{equation}

\textbf{Lie-Poisson equations.} According to the equations
\eqref{centextLiepois2}, it suffice to define dual mappings of the
action $\rrbiprod$ \eqref{fifthact} and $\Theta$ \eqref{theta} to write Lie-Poisson dynamics. For $\rrbiprod$, by the definiton, we compute
\begin{equation}\label{dualoffifth1}  
\begin{split} 
  \langle
(\mu_{V^*}\oplus \mu_{W^*})\,\overset{\ast}{\lrbiprod}\,
(l\oplus k),(v\oplus w)\rangle &=\langle\mu_{V^*},l \downharpoonleft v+ k
\curvearrowright v \rangle+\langle \mu_{W^*},-w \curvearrowleft l+k
\downharpoonright w \rangle, 
\\
 &=\langle \mu_{V^*}, l \downharpoonleft v
\rangle+\langle \mu_{V^*}, k \curvearrowright v \rangle+\langle
\mu_{W^*},-w \curvearrowleft l \rangle+ \langle \mu_{W^*},
k\downharpoonright w \rangle. 
\end{split}
\end{equation} 
Then, we have 
\begin{equation} \label{dotactiondual}
(\mu_{V^*}\oplus \mu_{W^*})~\overset{\ast}{\lrbiprod}~
(l\oplus k)= \big(\mu_{V^*} \overset{\ast}{\curvearrowleft} k+\mu_{V^*}
\overset{\ast}{\downharpoonright} l \big) \oplus \big(  l  \overset{\ast}{\curvearrowright} \mu_{W^*}+\mu_{W^*} \overset{\ast}{\downharpoonleft} k \big).   
\end{equation}
For $\Theta$, we compute 
\begin{equation} \label{thetadual}
\Theta^*_{(l\oplus k)}(\mu_{V^*}\oplus \mu_{W^*})=(\varphi^*_{l}\mu_{V^*}+\epsilon^*_{k}\mu_{V^*}-\iota^*_{k}\mu_{W^*}) \oplus (\phi^*_{k}\mu_{W^*}+\iota^*_{k}\mu_{W^*}  -\epsilon^*_{k}\mu_{V^*}),
\end{equation}
where $\varphi^*_{l}, \phi^*_{k}$  defined as \eqref{dualofvp} and $\epsilon^*_{k}, \iota^*_{k}$ defined as \eqref{dualofei}.
There are two more dual mappings we need to get for the left side of the equation \eqref{centextLiepois2}. These are $\mathfrak{b}^*_{(l\oplus k)}(\mu_{V^*}\oplus \mu_{W^*})$ and the coadjoint action of $(\frac{\delta \mathcal{H}}{\delta \nu_{\mathfrak{l}^*}}\oplus \frac{\delta \mathcal{H}}{\delta \nu_{\mathfrak{k}^*}})$ on the dual element $ (\nu_{\mathfrak{l}^*}\oplus \nu_{\mathfrak{k}^*})$. Being   a matched pair, we can employ the equation \eqref{ad-*} for the case of $\mathfrak{l}\bowtie \mathfrak{k}$. Accordingly, we arrive at
\begin{equation}\label{addual}
ad^*_{(\frac{\delta \mathcal{H}}{\delta \nu_{\mathfrak{l}^*}}\oplus \frac{\delta \mathcal{H}}{\delta \nu_{\mathfrak{k}^*}})}(\nu_{\mathfrak{l}^*}\oplus  \nu_{\mathfrak{k}^*})
=\big(ad^*_{\frac{\delta \mathcal{H}}{\delta \nu_{\mathfrak{l}^*}}} \nu_{\mathfrak{l}^*}+\nu_{\mathfrak{l}^*} \overset{\ast}{\blacktriangleleft} \frac{\delta \mathcal{H}}{\delta \nu_{\mathfrak{k}^*}} + \nu_{\mathfrak{k}^*} \overset{\ast}{\blacktriangleright} \frac{\delta \mathcal{H}}{\delta \nu_{\mathfrak{k}^*}} \big) \oplus \big( ad^*_{\frac{\delta \mathcal{H}}{\delta \nu_{\mathfrak{k}^*}}} \nu_{\mathfrak{k}^*}-\nu_{\mathfrak{l}^*} \overset{\ast}{\blacktriangleleft} \frac{\delta \mathcal{H}}{\delta \nu_{\mathfrak{l}^*}} - \nu_{\mathfrak{k}^*} \overset{\ast}{\blacktriangleright} \frac{\delta \mathcal{H}}{\delta \nu_{\mathfrak{l}^*}}\big), 
\end{equation}
where $\overset{\ast}{\blacktriangleleft}$ and $\overset{\ast}{\blacktriangleright} $ denote dual actions in the equation \eqref{dualofkara}. 
Finally, using equation \eqref{b*}, we arrive 
\begin{equation} \label{bdual}
\mathfrak{b}^*_{(l,k)}\mu_{V^*}=(\mu_{V^*}\overset{\ast}{\downharpoonright} l+ \mu_{V^*} \overset{\ast}{\curvearrowleft} k) \oplus (\mu_{W^*} \overset{\ast}{\downharpoonleft}k-\mu_{W^*}\overset{\ast}{\curvearrowright} l).
\end{equation} 
 Therefore, according to the equations
 \eqref{dotactiondual}, \eqref{thetadual}, \eqref{addual},\eqref{bdual} the (plus/minus) Lie-Poisson equations (governed by Hamiltonian function $\mathcal{H}=\mathcal{H}((\mu_{V^*},\mu_{W^*}),(\nu_{\mathfrak{l}^*},\nu_{\mathfrak{k}^*}))$  is computed as
\begin{equation}\label{Liepoismc1}
\begin{split}
\dot{\mu}_{V^*}&= \mu_{V^*}  \overset{\ast}{\downharpoonright} \frac{\delta \mathcal{H}}{\delta \nu_{\mathfrak{l}^*}}+\mu_{V^*}\overset{\ast}{\curvearrowleft}\frac{\delta \mathcal{H}}{\delta \nu_{\mathfrak{k}^*}},
\\
\dot{\mu}_{W^*} &= -\frac{\delta \mathcal{H}}{\delta \nu_{\mathfrak{l}^*}}\overset{\ast}{\curvearrowright}\mu_{W^*} + \mu_{W^*} \overset{\ast}{\downharpoonleft} \frac{\delta \mathcal{H}}{\delta \nu_{\mathfrak{k}^*}},
\\
\dot{\nu}_{\mathfrak{l}^*}&= \varphi^*_{l}\mu_{V^*}+\epsilon^*_{k}\mu_{V^*}-\iota^*_{k}\mu_{W^*}+ad^*_{\frac{\delta \mathcal{H}}{\delta \nu_{\mathfrak{l}^*}}} \nu_{\mathfrak{l}^*}+\nu_{\mathfrak{l}^*} \overset{\ast}{\blacktriangleleft} \frac{\delta \mathcal{H}}{\delta \nu_{\mathfrak{k}^*}} + \nu_{\mathfrak{k}^*} \overset{\ast}{\blacktriangleright} \frac{\delta \mathcal{H}}{\delta \nu_{\mathfrak{k}^*}}+\mu_{V^*}\overset{\ast}{\downharpoonright} l+ \mu_{V^*} \overset{\ast}{\curvearrowleft} k, \\
\dot{\nu}_{\mathfrak{k}^*}&= \phi^*_{k}\mu_{W^*}+\iota^*_{k}\mu_{W^*}-\epsilon^*_{k}\mu_{V^*}+ad^*_{\frac{\delta \mathcal{H}}{\delta \nu_{\mathfrak{k}^*}}} \nu_{\mathfrak{k}^*}-\nu_{\mathfrak{l}^*} \overset{\ast}{\blacktriangleleft} \frac{\delta \mathcal{H}}{\delta \nu_{\mathfrak{l}^*}} - \nu_{\mathfrak{k}^*} \overset{\ast}{\blacktriangleright} \frac{\delta \mathcal{H}}{\delta \nu_{\mathfrak{l}^*}}+\mu_{W^*} \overset{\ast}{\downharpoonleft}k-\mu_{W^*}\overset{\ast}{\curvearrowright} l.
\end{split}
\end{equation}

\textbf{Particular case:}
Now, we examine how Lie-Poisson equations look like for the particular case we gave in \eqref{centrallyextend}. It is possible to briefly remind: we took   
the left actions $\downharpoonleft $ and $\downharpoonright$ in \eqref{actionsoflk}, the mutual actions of $\G{k}$ and $\G{l}$ in \eqref{Lieact-k-l} and the cross representations in \eqref{actionsofkl} as trivial. Since the calculations of these choices are made in the previous section, it is possible to see the effects directly for Lie-Poisson equations \eqref{Liepoismc1}. See that, both $\dot{\mu}_{V^*}$ and $\dot{\mu}_{W^*}$ are zero and
\begin{equation}
\begin{split}
\dot{\nu}_{\mathfrak{l}^*}&=\varphi^*_{l}\mu_{V^*}+\epsilon^*_{k}\mu_{V^*}-\iota^*_{k}\mu_{W^*}+ad^*_{\frac{\delta \mathcal{H}}{\delta \nu_{\mathfrak{l}^*}}} \nu_{\mathfrak{l}^*} \\
\dot{\nu}_{\mathfrak{k}^*}&=\phi^*_{k}\mu_{W^*}+\iota^*_{k}\mu_{W^*}-\epsilon^*_{k}\mu_{V^*}+ad^*_{\frac{\delta \mathcal{H}}{\delta \nu_{\mathfrak{k}^*}}} \nu_{\mathfrak{k}^*}.
\end{split}
\end{equation}

\section{Couplings of Dissipative Systems}
 \label{Sec-C-DS}

In the present section, we consider two Lie algebras  $\G{g}$ and $\G{h}$ under mutual actions \eqref{matched-pair-mutual-actions} satisfying the compatibility conditions in \eqref{matched-pair-mutual-actions}. So that, we have a well-define matched pair Lie algebra structure $\mathfrak{g}\bowtie\mathfrak{h}$ equipped with the matched pair Lie bracket \eqref{mpla}. As explained in Subsection \ref{Sec-MP-LP}, on the dual space $\mathfrak{g}^{\ast}\oplus\mathfrak{h}^{\ast}$, there exists matched Lie-Poisson bracket $\{\bullet, \bullet\}_{\bowtie}$ displayed in  \eqref{LiePoissonongh}. This Poisson structure lets us to arrive at matched pair Lie-Poisson equations \eqref{LPEgh} governing the collective motion of the individual Lie-Poisson dynamics on $\G{g}^*$ and $\G{h}^*$. In all the subsection, we follow this construction. 
After presenting discussions on coupling of Rayleigh dissipation in the following subsection, we shall follow the order given in Subsection \ref{Sec-Sym-Bra}, and examine the coupling problem of symmetric brackets.

\subsection{Rayleigh type Dissipation}

In \eqref{RD-Eqn}, Rayleigh type dissipation is introduced to the Lie-Poisson system by means the coadjoint action and a linear operator, from the Lie algebra the dual space to the algebra. 
In this subsection, we provide a way to couple two 
Lie-Poisson dynamics admitting Rayleigh type dissipative terms. 
For this, we first determine the dynamics of constitutive systems. Assume that, on the dual space $\G{g}^*$, for the Lie-Poisson dynamics, Rayleigh type dissipation is provided by a linear operator $\Upsilon^\G{g}:\G{g}^*\mapsto \G{g}$ that is 
\begin{equation}\label{RD-Eqn-1}
\dot{\mu}\mp ad^*_{\frac{\delta \mathcal{H}}{ \delta \mu}} \mu = \mp ad^*_{\Upsilon^\G{g}(\mu)} \mu.
\end{equation}
Assume also that on $\G{h}^*$, for the Lie-Poisson dynamics, Rayleigh type dissipation is provided by a linear operator $\Upsilon^\G{h}:\G{h}^*\mapsto \G{h}$. so that, 
\begin{equation}\label{RD-Eqn-2}
\dot{\nu}\mp ad^*_{\frac{\delta \mathcal{H}}{ \delta \nu}} \nu = \mp ad^*_{\Upsilon^\G{h}(\nu)} \nu.
\end{equation}

To couple the dynamics in \eqref{RD-Eqn-1} and \eqref{RD-Eqn-2}, we introduce a linear operator from the dual space $\mathfrak{g}^{\ast}\oplus\mathfrak{h}^{\ast}$  to the matched pair Lie algebra $\mathfrak{g}\bowtie\mathfrak{h}$ given by
\begin{equation}
\mathfrak{g}^*\oplus\mathfrak{h}^* \longrightarrow \mathfrak{g} \oplus\mathfrak{h},\qquad  (\mu\oplus \nu) \mapsto (\Upsilon^{\G{g}}(\mu)\oplus \Upsilon^{\G{h}}(\nu)),
\end{equation}
where $\Upsilon^{\G{g}}$ and $\Upsilon^{\G{h}}$ are the linear mappings, in \eqref{RD-Eqn-1} and \eqref{RD-Eqn-2}, generating the dissipation for the individual systems. The dissipative term generated by the mapping $\Lambda$ is computed to be  
\begin{equation} \label{Rayleigh}
\mp ad^{*}_{\Upsilon^{\G{g}}(\mu) \oplus  \Upsilon^{\G{h}}(\nu)}(\mu \oplus \nu)=\big(\mp ad^{*}_{\Upsilon^{\G{g}}(\mu)}\mu
\pm
\mu \overset{\ast }{\vartriangleleft }\Upsilon^{\G{h}}(\nu)\mp \mathfrak{a}^{*}_{\Upsilon^{\G{h}}(\nu)}\nu \big) 
\oplus 
\big(\mp ad^{*}_{\Upsilon^{\G{h}}(\nu)}\nu
\pm
\Upsilon^{\G{g}}(\mu) \overset{\ast }{\vartriangleright } \nu\mp \mathfrak{b}^{*}_{\Upsilon^{\G{g}}(\mu)}\mu\big),
\end{equation} 
where the dual actions $\overset{\ast }{\vartriangleleft }$ and $\overset{\ast }{\vartriangleright }$ are those given in \eqref{eta-star} and \eqref{xi-star}, respectively. Notice that, the cross actions $\mathfrak{a}^{*}$ and $\mathfrak{b}^{*}$ are the ones in \eqref{a*} and \eqref{b*}, respectively. 
dissipation term will be obtained as above. Observe that, while coupling the dissipative terms in \eqref{Rayleigh}, we respect the mutual actions.  So that, the collective dissipative term is manifesting the mutual actions. It reduces to the direct sum of the dissipative terms of the individual motions if the actions are trivial. 

Obeying the general construction in \eqref{RD-Eqn}, we merge the dissipative terms in \eqref{Rayleigh} with the matched pair Lie-Poisson equations \eqref{LPEgh}. This reads the coupled system
\begin{equation} \label{eqofmoofRay}
\begin{split}
&\dot{\mu}
\mp
ad^{*}_{\delta \mathcal{H}/ \delta \mu}\mu\pm  \frac{\delta \mathcal{H}}{\delta \mu}  \overset{\ast }{\vartriangleleft }\nu
\mp
\mathfrak{a}^{*}_{\delta \mathcal{H} /  \delta \nu} \nu
=
\mp
ad^{*}_{\Upsilon^{\G{g}}(\mu)}\mu
\pm
\mu \overset{\ast }{\vartriangleleft }\Upsilon^{\G{h}}(\nu)
\mp
\mathfrak{a}^{*}_{\Upsilon^{\G{h}}(\nu)}\nu
\\
&\dot{\nu}
\mp
ad^{*}_{\delta \mathcal{H}/ \delta \nu}\nu 
\pm
\frac{\delta \mathcal{H}}{\delta \nu} \overset{\ast }{\vartriangleright } 
\mp
\mathfrak{b}^{*}_{\delta \mathcal{H} /  \delta \mu} \mu=
\mp
ad^{*}_{\Upsilon^{\G{h}}(\nu)}\nu
\pm
\Upsilon^{\G{g}}(\mu) \overset{\ast }{\vartriangleright }\nu
\mp 
\mathfrak{b}^{*}_{\Upsilon^{\G{g}}(\mu)}\mu.
\end{split}
\end{equation}
It is evident that, this formulation respects the mutual actions. By taking one these actions, one arrives at the semidirect product theory for the Lie-Poisson system with Rayleigh type dissipation. If both of the actions are trivial, it is immediate to see that the system \eqref{eqofmoofRay} turns out to be simple collection of the individual motions in \eqref{RD-Eqn-1} and \eqref{RD-Eqn-2}.

\subsection{Matched Double Bracket}

Recall that, in \eqref{doubledissi}, we have presented double bracket in terms of the structure constants of Lie algebra. So that, to have a symmetric bracket on the matched Lie-Poisson geometry, we first recall the structure constants of the matched pair Lie algebra given in \eqref{sc-nonbracket-mp}. Then referring to the coordinate realization of matched pair Lie-Poisson bracket, in \eqref{Lie-pois-double-non-bracket}, we compute associated  Poisson bivector $\Lambda$ as
\begin{equation}\label{cosym2}
\Lambda_{\alpha \beta}=\pm C_{\beta \alpha}^\gamma \mu_\gamma, \qquad
\Lambda_{\alpha b}=\mp R_{b \alpha}^d \nu_d
\mp L_{b \alpha}^\gamma \mu_\gamma, \qquad 
\Lambda_{a \beta }= \pm  R_{ \beta a}^d \nu_d \pm  L_{ \beta a}^\gamma \mu_\gamma, \qquad
\Lambda_{a b }=\pm  D_{ a b}^d \nu_d,
\end{equation}
where $C_{\beta \alpha}^\gamma$'s are structure constants on $\G{g}$, $D_{ a b}^d$'s are structure constants on $\G{h}$. Here, $R_{b \alpha}^d$'s and $L_{b \alpha}^\gamma $'s are constants defining the right and the left actions according to the exhibitions in \eqref{local-act}, respectively. In accordance with this coordinate realizations and in the light of the definition \eqref{doubledissi}, matched Double bracket dissipation $(\mathcal{F},\mathcal{S})^{(mD)}$ for two functions $ \mathcal{F} $ and $ \mathcal{S} $ defined on $ \mathfrak{g}^* \oplus \mathfrak{h}^* $ is
\begin{equation}\label{matchdouble}
\begin{split}
(\mathcal{F},\mathcal{S})^{(mD)}(\mu , \nu)=&(\sum_{b}\Lambda_{\alpha b} \Lambda_{ \beta b }+\sum_{\gamma}\Lambda_{\alpha \gamma} \Lambda_{ \beta \gamma })\frac{\partial \mathcal{F}}{\partial \mu_\alpha}\frac{\partial \mathcal{S}}{\partial \mu_\beta} 
+
(\sum_{b}\Lambda_{a b} \Lambda_{ \alpha b }+\sum_{\gamma}\Lambda_{a \gamma} \Lambda_{ \alpha  \gamma })\frac{\partial \mathcal{F}}{\partial \nu_a}\frac{\partial \mathcal{S}}{\partial \mu_\alpha }
\\
&  
+(\sum_{a}\Lambda_{b a} \Lambda_{ c a }+\sum_{\beta}\Lambda_{b \beta} \Lambda_{c \beta  })\frac{\partial \mathcal{F}}{\partial \nu_b}\frac{\partial \mathcal{S}}{\partial \nu_c}
+
(\sum_{\beta}\Lambda_{\alpha \beta} \Lambda_{ b \beta  } +
\sum_{a}\Lambda_{\alpha a} \Lambda_{ b a })\frac{\partial \mathcal{F}}{\partial \mu_\alpha}\frac{\partial \mathcal{S}}{\partial \nu_b}. \\
\end{split}
\end{equation}
Referring to this bracket, the dissipative dynamics \eqref{diss-dyn-eq} for $a=1$, generated by a functional $\mathcal{S}$, is computed to be
\begin{equation}\label{eqofmoofmatchdouble}
\begin{split}
&\dot{\mu}_\beta=(\sum_{b}\Lambda_{\beta b}\Lambda_{\alpha b}+\sum_{\gamma}\Lambda_{\beta \gamma}\Lambda_{ \alpha \gamma })\frac{\partial \mathcal{S}}{\partial \mu_\alpha} +(\sum_{\alpha}\Lambda_{\beta \alpha} \Lambda_{ a \alpha }+\sum_{b}\Lambda_{\beta b} \Lambda_{ a b })\frac{\partial \mathcal{S}}{\partial \nu_a} \\
&\dot{\nu}_d=(\sum_{b}\Lambda_{d b} \Lambda_{ \alpha b }+
\sum_{\gamma}\Lambda_{d \gamma} \Lambda_{ \alpha \gamma })\frac{\partial \mathcal{S}}{\partial \mu_\alpha}
+
(\sum_{a}\Lambda_{d a} \Lambda_{ n a }
+\sum_{\alpha}\Lambda_{d \alpha} \Lambda_{n \alpha  })\frac{\partial \mathcal{S}}{\partial \nu_n}.
\end{split}
\end{equation}
In order to arrive at the explicit expression of the symmetric bracket \eqref{matchdouble} and the disspative dynamics in \eqref{eqofmoofmatchdouble} in terms of the local characterizations of left and the right actions and the structure constants of the constitutive  subalgebras, one needs to substitute the calculations \eqref{cosym2} into \eqref{matchdouble} and \eqref{eqofmoofmatchdouble}.

Now, we add the matched Lie-Poisson bracket $\{\bullet,\bullet\}_{\bowtie}$, given in 
\eqref{Lie-pois-double-non-bracket-mp}, and the matched double bracket $(\bullet,\bullet)^{(mD)}$ in \eqref{matchdouble}. This reads the matched metriplectic bracket. The matched metriplectic dynamics generated by 
a Hamiltonian function $\C{H}$ and an entropy type function $\C{S}$ is computed to be 
\begin{equation}\label{mMD-DB}
\dot{\mathbf{z}}=[ \left\vert \mathbf{z},\mathcal{H}\right\vert ]_{\bowtie,D}= \{\mathbf{z},\mathcal{H}\}_{\bowtie}+a
(z,\mathcal{S})^{(mD)}
\end{equation}
In order to arrive at the explicit expression of the equations of motion in \eqref{mMD-DB}, it is enough to add the reversible matched pair dynamics in \eqref{Lie-pois-nonbracket-dyn-mp} and the irreversible matched pair dynamics in \eqref{eqofmoofmatchdouble}. By taking one of the actions trivial, one arrives semidirect product metriplectic bracket and semidirect product dynamical equation. If, both of the actions are trivial, then the coupling turns out to be a simple addition.

\subsection{Matched Cartan-Killing Bracket} \label{cartan}
Once more, we recall the structure constants \eqref{sc-nonbracket-mp} of the matched pair Lie algebra. Referring to \eqref{CK}, we first compute the matched pair Cartan metric as  
$\bar{\mathcal{G}}_{a \beta}$ and $\bar{\mathcal{G}}_{\alpha b}$ 
\begin{equation}\label{CM-MP}
\begin{split}
\bar{\mathcal{G}}_{\alpha b}&=-R_{a \alpha }^{d}D_{bd}^{a}
+L_{a \alpha}^{\beta}R_{b \beta}^{a}
+C_{\alpha \beta}^{\gamma}L_{b \gamma}^{\beta}, \qquad 
\bar{\mathcal{G}}_{ab}=L_{a \alpha  }^{\beta}L_{b \beta }^{\alpha}
+D_{ad}^{k}D_{b k}^{d}
\\
\bar{\mathcal{G}}_{a \beta}&=L_{a \epsilon  }^{\gamma}C_{ \beta \gamma}^{\epsilon}
-D_{ ab}^{d}R_{ d \beta}^{b}
+R_{a \gamma }^{d}L_{d \beta}^{\gamma}
, \qquad
\bar{\mathcal{G}}_{\alpha \beta}
=R_{a \alpha }^{b}R_{ b \beta }^{a}+C_{\alpha \gamma}^{\epsilon}C_{\beta \epsilon}^{\gamma}
\end{split}
\end{equation}
on the matched pair Lie algebra $\mathfrak{g} \oplus \mathfrak{h }$. We write an element of $ \mathfrak{g}^* \oplus \mathfrak{h}^* $ as $
(\mu,\nu)=\mu_{\alpha}\bar{\mathbf{e}}^{\alpha}+\nu_{a}\bar{\mathbf{e}}^a$.
We compute the matched pair Cartan-Killing bracket, defined in \eqref{loc-car}, as 
\begin{equation} \label{Gen-Killing}
\begin{split}
(\C{F},\C{H})^{(mCK)}
=
&
\frac{\partial \C{F}}{\partial \mu_{\alpha}}\bar{\mathcal{G}}_{\alpha \beta} \frac{\partial \C{H}}{\partial \mu_{\beta}}
+ 
\frac{\partial \C{F}}{\partial \mu_{\alpha}}\bar{\mathcal{G}}_{\alpha b} \frac{\partial \C{H}}{\partial \nu_{b}}
+
\frac{\partial \C{F}}{\partial \nu_{a}}\bar{\mathcal{G}}_{a \beta} \frac{\partial \C{H}}{\partial \mu_{\beta}}
+
\frac{\partial \C{F}}{\partial \nu_{a}}\bar{\mathcal{G}}_{a b} \frac{\partial \C{H}}{\partial \nu_{b}}
\\
=&
(R_{\alpha a}^{b}R_{ \beta b}^{a}+C_{\alpha \gamma}^{\epsilon}C_{\beta \epsilon}^{\gamma})
\frac{\partial \C{F}}{\partial \mu_{\alpha}}
\frac{\partial \C{H}}{\partial \mu_{\beta}}
+ 
(-R_{a \alpha }^{d}D_{bd}^{a}
+L_{a \alpha}^{\beta}R_{b \beta}^{a}
+C_{\alpha \beta}^{\gamma}L_{\gamma b}^{\beta})
\frac{\partial \C{F}}{\partial \mu_{\alpha}}
\frac{\partial \C{H}}{\partial \nu_{b}}
\\
&
+
(R_{\gamma a}^{d}L_{d \beta}^{\gamma}+L_{\alpha a }^{\gamma}C_{ \beta \gamma}^{\alpha}
-D_{ ab}^{d}R_{ d \beta}^{b})
\frac{\partial \C{F}}{\partial \nu_{a}}
\frac{\partial \C{H}}{\partial \mu_{\beta}}
+
(L_{\alpha a }^{\beta}L_{\beta b}^{\alpha}
+D_{ad}^{k}D_{b k}^{d})
\frac{\partial \C{F}}{\partial \nu_{a}}
\frac{\partial \C{H}}{\partial \nu_{b}}.
\end{split}
\end{equation}

According to this formulation (\ref{Gen-Killing}), for a functional $\mathcal{S}$ the equation of motion will be
\begin{equation} \label{eqofmoofcart}
\dot{\mu}_\beta = \bar{\mathcal{G}}_{\beta a}\frac{\partial S}{\partial \nu_a}+\bar{\mathcal{G}}_{\beta \alpha}\frac{\partial S}{\partial \mu_\alpha}, \qquad 
\dot{\nu}_d = \bar{\mathcal{G}}_{d \beta}\frac{\partial S}{\partial \mu_\beta}+\bar{\mathcal{G}}_{d a}\frac{\partial S}{\partial \nu_a}.
\end{equation}
We substitute the explicit represetations of the metric \eqref{CM-MP} into the system \eqref{eqofmoofcart} and arrive at that
\begin{equation} \label{eqofmoofcart-gen}
\begin{split}
\dot{\mu}_\beta &= (-R_{a \beta }^{d}D_{bd}^{a}
+L_{a \beta}^{\alpha}R_{b \alpha}^{a}
+C_{\beta \alpha}^{\gamma}L_{\gamma b}^{\alpha}) \frac{\partial S}{\partial \nu_b}
+
(R_{\alpha a}^{b}R_{ \beta b}^{a}+C_{\alpha \gamma}^{\epsilon}C_{\beta \epsilon}^{\gamma})
\frac{\partial S}{\partial \mu_\beta},
\\
\dot{\nu}_d &= 
(R_{\gamma d}^{a}L_{a \beta}^{\gamma}+L_{\alpha d }^{\gamma}C_{ \beta \gamma}^{\alpha}
-D_{ db}^{a}R_{ a \beta}^{b}
)
\frac{\partial S}{\partial \mu_\beta}
+(
L_{\alpha a }^{\beta}L_{\beta d}^{\alpha}
+D_{ab}^{k}D_{d k}^{b})
\frac{\partial S}{\partial \nu_a}.
\end{split}
\end{equation}

\subsection{Matched Casimir Dissipation Bracket}

Recall the Casimir dissipation bracket given in \eqref{ccs}. In order to carry this discussion to coupled systems on $\mathfrak{g}^*\times \mathfrak{h}^*$, as it can deduced from that equation, we first need to determine a real valued bilinear operator on the matched pair Lie algebra $\mathfrak{g}\bowtie \mathfrak{h}$ and then employ the pairing equipped with a Casimir function(al). Let us first determine dissipations individually on $\mathfrak{g}^*$  and $\mathfrak{h}^*$ then couple them. 

Consider symmetric bilinear operators on $\mathfrak{g}$ ve $\mathfrak{h}$, denoted by 
$\psi $ and $\vartheta $, respectively. Assume that  $\C{C}$ is a Casimir function $%
\mathfrak{g}^{\ast }$ and $\mathcal{D}$ is a Casimir function on 
$\mathfrak{h}^{\ast }$. Then, Casimir dissipation brackets are
\begin{equation}\label{casimir}
\begin{split}
(\mathcal{F},\mathcal{H})^{(CD)}_{\mathfrak{g}^*}\left( \mu  \right) & =-\psi \left( \left[
\frac{\delta \mathcal{F}}{\delta \mu },\frac{\delta \mathcal{H}}{\delta \mu }\right] _{\mathfrak{g}},\left[ \frac{\delta \mathcal{C}}{\delta \mu },\frac{\delta \mathcal{H}}{\delta \mu }\right]
_{\mathfrak{g}}\right) \\
(\mathcal{F},\mathcal{H})^{(CD)}_{\mathfrak{h}^*}\left( \nu \right) &=-\vartheta \left( %
\left[ \frac{\delta \mathcal{F}}{\delta \nu },\frac{\delta \mathcal{H}}{\delta \nu }\right] _{ 
	\mathfrak{h}},\left[ \frac{\delta \mathcal{D}}{\delta \nu },\frac{\delta \mathcal{H}}{\delta \nu 
}\right] _{\mathfrak{h}}\right).
\end{split}
\end{equation}
Define a real valued symmetric bilinear operator on $\mathfrak{g}\oplus \mathfrak{h}$, using $\psi $ and $\vartheta $, as follows
\begin{equation}\label{sym-opp}
\left( \psi ,\vartheta \right) :(\mathfrak{g}\bowtie \mathfrak{h})\times 
(\mathfrak{g}\bowtie \mathfrak{h})\longrightarrow \mathbb{R}, \qquad (  \xi \oplus \eta ,\xi'\oplus \eta' ) \mapsto
\psi \left( \xi ,\xi' \right) +\vartheta \left( \eta ,\eta' \right) .
\end{equation}
In terms of the Casimir functions $\C{C}$ and $\C{D} $ on $\mathfrak{g}^*$ and $\mathfrak{h}^*$, respectively, we define a Casimir function $( \C{C},\C{D} )$ on $\mathfrak{g}^*\times \mathfrak{h}^*$, for example, as follows
\begin{equation}
( \C{C},\C{D} )(\mu,\nu)=\C{C}(\mu)+ \C{D}(\nu). 
\end{equation} 
So that, matched Casimir dissipation bracket is defined to be 
\begin{equation}\label{mCD-Bra}
\begin{split}
(\mathcal{F},\mathcal{H})^{(mCD)}(\mu\oplus\nu)&=-\left( \psi
,\vartheta \right) \left( \left[ \left( \frac{\delta \mathcal{F}}{\delta \mu }\oplus\frac{
	\delta \mathcal{F}}{\delta \nu }\right) ,\left( \frac{\delta \mathcal{H}}{\delta \mu }\oplus\frac{
	\delta \mathcal{H}}{\delta \nu }\right) \right]_{\bowtie} 
	,
	\left[ \left( \frac{\delta \mathcal{C}}{\delta
	\mu }\oplus \frac{\delta \mathcal{D}}{\delta \nu }\right) ,\left( \frac{\delta \mathcal{H}}{\delta \mu 
}\oplus \frac{\delta \mathcal{H}}{\delta \nu }\right) \right]_{\bowtie} \right)
\end{split}
\end{equation}
where $[\bullet,\bullet]_{\bowtie}$ is the matched Lie algebra bracket in \eqref{mpla}. Referring to this bracket, the dissipative dynamics \eqref{diss-dyn-eq} for $a=1$, generated by a functional $\mathcal{H}$, is a system of equations. The dynamics on $\G{g}^*$ is 
\begin{equation}\label{mCD-eq-1}
\begin{split}
	\dot{\mu} &=-ad_{\frac{\delta \mathcal{H}}{\delta \mu }}^{\ast }\big[ \frac{\delta \mathcal{C}}{\delta
		\mu },\frac{\delta \mathcal{H}}{\delta \mu }\big] ^{\flat}-ad_{\frac{\delta \mathcal{H}}{\delta
			\mu }}^{\ast }\big( \frac{\delta \mathcal{D}}{\delta \nu }\vartriangleright \frac{
		\delta \mathcal{H}}{\delta \mu }\big) ^{\flat}+ad_{\frac{\delta \mathcal{H}}{\delta \mu }}^{\ast
	}\big( \frac{\delta \mathcal{H}}{\delta \nu }\vartriangleright \frac{\delta \mathcal{C}}{\delta
		\mu }\big) ^{\flat} -\big[\frac{\delta \mathcal{C}}{\delta \mu },\frac{\delta \mathcal{H}}{\delta \mu }\big]^{\flat}\overset{
		\ast }{\vartriangleleft }\frac{\delta \mathcal{H}}{\delta \nu }\\
	&-\big[\frac{\delta \mathcal{D}}{
		\delta \nu }\vartriangleright \frac{\delta \mathcal{H}}{\delta \mu }\big]^{\flat}\overset{\ast 
	}{\vartriangleleft }\frac{\delta \mathcal{H}}{\delta \nu }+\big[\frac{\delta \mathcal{H}}{\delta \nu 
	}\vartriangleright \frac{\delta \mathcal{C}}{\delta \mu }\big]^{\flat}\overset{\ast }{
		\vartriangleleft }\frac{\delta \mathcal{H}}{\delta \nu } -\mathfrak{a}_{\frac{\delta \mathcal{H}}{\delta \nu }}^{\ast }\big[\frac{\delta \mathcal{D}}{
		\delta \nu },\frac{\delta \mathcal{H}}{\delta \nu }\big]^{\flat}-\mathfrak{a}_{\frac{\delta \mathcal{H}}{
			\delta \nu }}^{\ast } \big( \frac{\delta \mathcal{D}}{\delta \nu }\vartriangleleft 
	\frac{\delta \mathcal{H}}{\delta \mu }\big) ^{\flat}-\mathfrak{a}_{\frac{\delta \mathcal{H}}{
			\delta \nu }}^{\ast }\big( \frac{\delta \mathcal{H}}{\delta \nu }\vartriangleleft 
	\frac{\delta \mathcal{C}}{\delta \mu }\big) ^{\flat}
\end{split}
\end{equation}
whereas the dynamics on $\G{h}^*$ is 
\begin{equation}\label{mCD-eq-2}
\begin{split}
	\dot{\nu} &=-ad_{\frac{\delta \mathcal{H}}{\delta \nu }}^{\ast }\big[\frac{\delta \mathcal{D}}{\delta \nu }, 
	\frac{\delta \mathcal{H}}{\delta \nu }\big]^{\flat}-ad_{\frac{\delta \mathcal{H}}{\delta \nu }}^{\ast
	}\big( \frac{\delta \mathcal{D}}{\delta \nu }\vartriangleleft \frac{\delta \mathcal{H}}{\delta
		\mu }\big) ^{\flat}+ad_{\frac{\delta \mathcal{H}}{\delta \nu }}^{\ast }\big( \frac{
		\delta \mathcal{H}}{\delta \nu }\vartriangleleft \frac{\delta \mathcal{C}}{\delta \mu }\big)
	^{\flat} -\frac{\delta \mathcal{H}}{\delta \mu }\overset{\ast }{\vartriangleright }[\frac{
		\delta \mathcal{D}}{\delta \nu },\frac{\delta \mathcal{H}}{\delta \nu }]^{\flat}
		\\
	&-\frac{\delta \mathcal{H}}{
		\delta \mu }\overset{\ast }{\vartriangleright }\big[\frac{\delta \mathcal{D}}{\delta \nu }
	\vartriangleleft \frac{\delta \mathcal{H}}{\delta \mu }\big]^{\flat}+\frac{\delta \mathcal{H}}{\delta
		\mu }\overset{\ast }{\vartriangleright }\big[\frac{\delta \mathcal{H}}{\delta \nu }
	\vartriangleleft \frac{\delta \mathcal{C}}{\delta \mu }\big]^{\flat} -\mathfrak{b}_{\frac{\delta \mathcal{H}}{\delta \mu }}^{\ast }\big[\frac{\delta \mathcal{C}}{
		\delta \mu },\frac{\delta \mathcal{H}}{\delta \mu }\big]^{\flat}-\mathfrak{b}_{\frac{\delta \mathcal{H}}{
			\delta \mu }}^{\ast }\big( \frac{\delta \mathcal{D}}{\delta \nu }\vartriangleright 
	\frac{\delta \mathcal{H}}{\delta \mu }\big) ^{\flat}+\mathfrak{b}_{\frac{\delta \mathcal{H}}{
			\delta \mu }}^{\ast }\big( \frac{\delta \mathcal{H}}{\delta \nu }\vartriangleright 
	\frac{\delta \mathcal{C}}{\delta \mu }\big) ^{\flat}.
\end{split}
\end{equation}
Here, the right action $\vartriangleright$ and the left action $\vartriangleleft$ are those in \eqref{matched-pair-mutual-actions} whereas $\overset{\ast }{	\vartriangleright}$ and $\overset{\ast }{\vartriangleleft }$ are the dual actions in \eqref{eta-star} and \eqref{xi-star}, respectively. Notice that dual operators $\mathfrak{a}^*$ is in \eqref{a*}
, and $\mathfrak{b}^*$ is in \eqref{b*}. Here, superscript $\flat$ denotes the dualization obtained through for the symmetric operators $\psi$ and $\vartheta$ given by 
\begin{equation}
\langle \xi ^{\flat },\xi'
\rangle =\psi ( \xi ,\xi'), \qquad 
\langle \eta ^{\flat },\eta'
\rangle =\vartheta ( \eta ,\eta').
\end{equation}
In order to prevent to notation inflation, we denote these two mappings by the same notation as we did while denoting the Lie algebra brackets on $\G{g}$ and $\G{h}$.  

We can couple the matched irreversible motion, that is matched Casimir dissipation motion, in \eqref{mCD-eq-1} and \eqref{mCD-eq-2} with matched reversible motion, that is matched Lie-Poisson dynamics, in \eqref{LPEgh}. This results with matched pair of metriplectic system involving Casimir dissipation terms, and simple achieved by adding the right hand sides of the systems obeying the order. this collective motion, can be determined by a single matched metriplectic bracket 
\begin{equation*}
\lbrack \left\vert \mathcal{F},\mathcal{H}\right\vert ]_{\bowtie,CD}= \{\mathcal{F},\mathcal{H}\}_{\bowtie}+a(\mathcal{F},\mathcal{H})^{(mCD)},
\end{equation*}
where $\{\bullet,\bullet\}_{\bowtie}$ is the matched Lie-Poisson bracket in \eqref{LiePoissonongh} and $(\bullet,\bullet)^{(mCD)}$ is the matched Casimir dissipation bracket in \eqref{mCD-Bra}. 
In this case, the dynamics governed by a Hamiltonian function(al) $\C{H}$, is implicitly written by 
\begin{equation*}
(\dot{\mu}\oplus \dot{\nu})=\lbrack \left\vert (\mu\oplus\nu),\mathcal{H}\right\vert ]_{\bowtie,CD}.
\end{equation*}
\subsection{Matched Hamilton Dissipation Bracket} 

At first recall the Hamilton dissipation bracket given in \eqref{HD-sym} and the pure irreversible motion in \eqref{HDB-Dyn}. In this subsection, we couple (match) two Hamilton dissipation bracket in form \eqref{HD-sym} and two pure irreversible motion in \eqref{HDB-Dyn}. Accordingly, obeying the notation presenting in the presvious subsection we introduce the following Hamilton dissipations brackets on the constitutive spaces $\mathfrak{g}^*$ and $\mathfrak{h}^*$, for two bilinear operators $\psi$ and $\vartheta$, as follows
\begin{equation}
\begin{split}
(\mathcal{F},\mathcal{H})^{HD}_{\mathfrak{g}}\left( \mu  \right) & =-\psi \left( \left[
\frac{\delta \mathcal{F}}{\delta \mu },\frac{\delta \mathcal{C}}{\delta \mu }\right]_{\mathfrak{g}},\left[ \frac{\delta \mathcal{H}}{\delta \mu },\frac{\delta \mathcal{C}}{\delta \mu }\right]
_{\mathfrak{g}}\right) \\ 
(\mathcal{F},\mathcal{H})^{HD}_{\mathfrak{h}}\left( \nu \right) & =-\vartheta \left( %
\left[ \frac{\delta \mathcal{F}}{\delta \nu },\frac{\delta \mathcal{D}}{\delta \nu }\right] _{ 
	\mathfrak{h}},\left[ \frac{\delta \mathcal{H}}{\delta \nu },\frac{\delta \mathcal{D}}{\delta \nu 
}\right] _{\mathfrak{h}}\right).
\end{split}
\end{equation}
where $\mathcal{C}$ and $\mathcal{D}$ are Casimir functions on $\mathfrak{g}^*$ and $\mathfrak{h}^*$, respectively. In order to match these symmetric brackets, recall the real valued bilinear map \eqref{sym-opp} defined on matched pair Lie algebra $\mathfrak{g} \bowtie \mathfrak{h}$. Then, introduce matched Hamilton dissipation bracket
\begin{equation} \label{macthedhamilton}
(\mathcal{F},\mathcal{H})^{(mHD)}(\mu\oplus \nu)=-\left( \psi
,\vartheta \right) \left( \left[ \left( \frac{\delta \mathcal{F}}{\delta \mu }\oplus\frac{
	\delta \mathcal{F}}{\delta \nu }\right) ,\left( \frac{\delta \mathcal{C}}{\delta \mu }\oplus\frac{
	\delta \mathcal{D}}{\delta \nu }\right) \right] ,\left[ \left( \frac{\delta \mathcal{H}}{\delta
	\mu }\oplus\frac{\delta \mathcal{H}}{\delta \nu }\right) ,\left( \frac{\delta \mathcal{C}}{\delta \mu 
}\oplus\frac{\delta \mathcal{D}}{\delta \nu }\right) \right] \right),
\end{equation}
where the brackets inside the paring are the matched Lie bracket in \eqref{mpla}. 
Irreversible the dynamics on $\mathfrak{g}^*\times \mathfrak{h}^*$ can be obtained as 
\begin{equation}
\begin{split}
	\dot{\mu} &=-ad_{\frac{\delta \mathcal{C}}{\delta \mu }}^{\ast }\left[ \frac{\delta \mathcal{H}}{\delta
		\mu },\frac{\delta \mathcal{C}}{\delta \mu }\right] ^{\flat}-ad_{\frac{\delta \mathcal{C}}{\delta
			\mu }}^{\ast }\left( \frac{\delta \mathcal{H}}{\delta \nu }\vartriangleright \frac{
		\delta \mathcal{C}}{\delta \mu }\right) ^{\flat}+ad_{\frac{\delta \mathcal{C}}{\delta \mu }}^{\ast
	}\left( \frac{\delta \mathcal{D}}{\delta \nu }\vartriangleright \frac{\delta \mathcal{H}}{\delta
		\mu }\right) ^{\flat} -[\frac{\delta \mathcal{H}}{\delta \mu },\frac{\delta \mathcal{C}}{\delta \mu }]^{\flat}\overset{
		\ast }{\vartriangleleft }\frac{\delta \mathcal{D}}{\delta \nu }
		\\
	&-[\frac{\delta \mathcal{H}}{
		\delta \nu }\vartriangleright \frac{\delta \mathcal{C}}{\delta \mu }]^{\flat}\overset{\ast 
	}{\vartriangleleft }\frac{\delta \mathcal{D}}{\delta \nu }+[\frac{\delta \mathcal{D}}{\delta \nu 
	}\vartriangleright \frac{\delta \mathcal{H}}{\delta \mu }]^{\flat}\overset{\ast }{
		\vartriangleleft }\frac{\delta \mathcal{D}}{\delta \nu } -\mathfrak{a}_{\frac{\delta \mathcal{D}}{\delta \nu }}^{\ast }[\frac{\delta \mathcal{H}}{
		\delta \nu },\frac{\delta \mathcal{D}}{\delta \nu }]^{\flat}-\mathfrak{a}_{\frac{\delta \mathcal{D}}{
			\delta \nu }}^{\ast }\left( \frac{\delta \mathcal{H}}{\delta \nu }\vartriangleleft 
	\frac{\delta \mathcal{C}}{\delta \mu }\right) ^{\flat}-\mathfrak{a}_{\frac{\delta \mathcal{D}}{
			\delta \nu }}^{\ast }\left( \frac{\delta \mathcal{D}}{\delta \nu }\vartriangleleft 
	\frac{\delta \mathcal{H}}{\delta \mu }\right) ^{\flat}
\\
	\dot{\nu} &=-ad_{\frac{\delta \mathcal{D}}{\delta \nu }}^{\ast }[\frac{\delta \mathcal{H}}{\delta \nu }, 
	\frac{\delta \mathcal{D}}{\delta \nu }]^{\flat}-ad_{\frac{\delta \mathcal{D}}{\delta \nu }}^{\ast
	}\left( \frac{\delta \mathcal{H}}{\delta \nu }\vartriangleleft \frac{\delta \mathcal{C}}{\delta
		\mu }\right) ^{\flat}+ad_{\frac{\delta \mathcal{D}}{\delta \nu }}^{\ast }\left( \frac{
		\delta \mathcal{D}}{\delta \nu }\vartriangleleft \frac{\delta \mathcal{H}}{\delta \mu }\right)
	^{\flat} -\frac{\delta \mathcal{C}}{\delta \mu }\overset{\ast }{\vartriangleright }[\frac{
		\delta \mathcal{H}}{\delta \nu },\frac{\delta \mathcal{D}}{\delta \nu }]^{\flat}\\
	&-\frac{\delta \mathcal{C}}{
		\delta \mu }\overset{\ast }{\vartriangleright }[\frac{\delta \mathcal{H}}{\delta \nu }
	\vartriangleleft \frac{\delta \mathcal{C}}{\delta \mu }]^{\flat}+\frac{\delta \mathcal{C}}{\delta
		\mu }\overset{\ast }{\vartriangleright }[\frac{\delta \mathcal{D}}{\delta \nu }
	\vartriangleleft \frac{\delta \mathcal{H}}{\delta \mu }]^{\flat} -\mathfrak{b}_{\frac{\delta \mathcal{C}}{\delta \mu }}^{\ast }[\frac{\delta \mathcal{H}}{
		\delta \mu },\frac{\delta \mathcal{C}}{\delta \mu }]^{\flat}-\mathfrak{b}_{\frac{\delta \mathcal{C}}{
			\delta \mu }}^{\ast }\left( \frac{\delta \mathcal{H}}{\delta \nu }\vartriangleright 
	\frac{\delta \mathcal{C}}{\delta \mu }\right) ^{\flat}+\mathfrak{b}_{\frac{\delta \mathcal{C}}{
			\delta \mu }}^{\ast }\left( \frac{\delta \mathcal{D}}{\delta \nu }\vartriangleright 
	\frac{\delta \mathcal{H}}{\delta \mu }\right) ^{\flat}.
\end{split}
\end{equation}
Here, the right action $\vartriangleright$ and the left action $\vartriangleleft$ are those in \eqref{matched-pair-mutual-actions} whereas $\overset{\ast }{	\vartriangleright}$ and $\overset{\ast }{\vartriangleleft }$ are the dual actions in \eqref{eta-star} and \eqref{xi-star}, respectively and $\mathfrak{a}$, $\mathfrak{a}^*$ defined as in \eqref{a}, \eqref{a*} and $\mathfrak{b}$, $\mathfrak{b}^*$ defined as in \eqref{b}, \eqref{b*}.

\section{Illustration: Heisenberg Algebras in Mutual Actions}\label{Sec-3D}
\subsection{Heisenberg Algebra and Lie-Poisson Dynamics}
We start with a $3$ dimensional Heisenberg algebra which we denote by  $\mathfrak{g}$, see \cite{Ma20}. We assume a basis $\{\mathbf{e}_1,\mathbf{e}_2,\mathbf{e}_3\}$ for $\G{g}$, and define the Lie algebra operation as 
\begin{equation}\label{Hei-brack}
[\mathbf{e}_1,\mathbf{e}_3]=0, \qquad [\mathbf{e}_1,\mathbf{e}_2]=\mathbf{e}_3, \qquad [\mathbf{e}_2,\mathbf{e}_3]=0 .
\end{equation} 
These read that the structure constants as $C_{12}^3=-C_{21}^3=1$ while the rest is zero. 
Heisenberg algebra can be written as a 2-cocycle extension Lie algebra. So that we can examine it in the realms of the discussions done in Subsection \ref{2coc-Sec}. To see this, referring to the basis of the algebra  $\mathfrak{g}$, we define two linear spaces $V=\langle\mathbf{e}_3\rangle$ and $\mathfrak{l}=\langle\mathbf{e}_1,\mathbf{e}_2\rangle$. Accordingly, we introduce a $V$-valued skew-symmetric bilinear mapping on $\mathfrak{l}$ as  
\begin{equation}
\varphi: \mathfrak{l} \times \mathfrak{l} \longrightarrow V, \qquad \varphi(\mathbf{e}_1,\mathbf{e}_1)=0, \quad \varphi(\mathbf{e}_2,\mathbf{e}_2)=0, \quad \varphi(\mathbf{e}_1,\mathbf{e}_2)=\mathbf{e}_3.
\end{equation}
It is straightforward to verify that $\varphi$ is a 2-cocycle. Taking the left action of $\G{l}$ on $V$ (see the first action in the list \eqref{actionsoflk}) is zero, we arrive easily that the bracket \eqref{Hei-brack} is indeed in a 2-cocycle extension form \eqref{AAAAA}. 

\textbf{Coadjoint flow.}
The dual space is denoted by $\mathfrak{g}^*$ with the dual basis $\{ \textbf{e}^1,\textbf{e}^2,\textbf{e}^3\}$. Assume the following coordinates $\xi=(\xi_1,\xi_2,\xi_3)$ in $\G{g}$, and  $\mu=(\mu_1,\mu_2,\mu_3)$ in $\G{g}^*$. Then, the coadjoint action of a Lie algebra element $\xi$ in $\mathfrak{g}$ to a dual element $\mu$ in $\mathfrak{g}^*$ is computed to be  
\begin{equation}
ad^*:\mathfrak{g}\times \mathfrak{g}^*\longrightarrow \mathfrak{g}^*, \qquad  ad^*_\xi \mu=(\mu_3\xi^2,-\mu_3\xi^1,0).
\end{equation}
Referring to this calculation, we write the Lie-Poisson dynamics \eqref{loc-LP-Eqn} generated by a Hamiltonian function $\mathcal{H}$ as follows
\begin{equation} \label{Ray-mot}
\dot{\mu}_1=\mu_3 \frac{\partial \mathcal{H}}{\partial \mu_2},  \qquad \dot{\mu}_2=-\mu_3 \frac{\partial \mathcal{H}}{\partial \mu_1} \qquad \dot{\mu}_3=0.
\end{equation} 
Here, the latter gives that $\mu_3$ is a constant. In equation (\ref{Ray-mot}), if we choose $\mu_1=q$, $\mu_2=p$, and $\mu_3=1$, then we arrive at the Hamilton's equations in its very classical form
\begin{equation}
	\dot{q}= \frac{\partial \mathcal{H}}{\partial p}, \qquad \dot{p}=-\frac{\partial\mathcal{H}}{\partial q}.
\end{equation}
Therefore we claim that, in the present geometry, the Hamiltonian dynamics can be realized as a coadjoint flow.

\textbf{Double Bracket dissipation.} For the present case, the symmetric double bracket in \eqref{doubledissi} is computed to be
\begin{equation}
(\C{F},\C{S})^{(D)}(\mu)=\mu_3^2\big( \frac{\partial \mathcal{S}}{\partial \mu_1}\frac{\partial \mathcal{F}}{\partial \mu_1}
+
 \frac{\partial \mathcal{S}}{\partial \mu_2}\frac{\partial \mathcal{F}}{\partial \mu_2}\big). 
\end{equation}
Therefore for a function $\mathcal{S}$, the irreversible dynamics due to the symmetric bracket is computed to be
 \begin{equation}
  \dot{\mu}_1=\mu_3^2 \frac{\partial \mathcal{S}}{\partial \mu_1},  \qquad  \dot{\mu}_2=\mu_3^2\frac{\partial \mathcal{S}}{\partial\mu_2},  \qquad \dot{\mu}_3=0.	
 \end{equation}
 Then the metriplectic equations of motion (\ref{aaaa}) are computed to be
 \begin{equation} \label{MD-Ex-1}
 \dot{\mu}_1=\mu_3 \frac{\partial \mathcal{H}}{\partial \mu_2}+\mu_3^2\frac{\partial\mathcal{S}}{\partial \mu_1}, 
 \qquad \dot{\mu}_2=-\mu_3 \frac{\partial \mathcal{H}}{\partial \mu_1}+\mu_3^2\frac{\partial \mathcal{S}}{\partial\mu_2},\qquad 
 \dot{\mu}_3=0.	
 \end{equation}
If we choose $\mu_1=q$, $\mu_2=p$, and $\mu_3=1$, then  the metriplectic dynamics \eqref{MD-Ex-1} turns out to be
  \begin{equation} \label{qdot-1}
\dot{q}=\frac{\partial \mathcal{H}}{\partial p}+ \frac{\partial \mathcal{S}}{\delta q} , \qquad  \dot{p}=-\frac{\partial \mathcal{H}}{\partial q}+ \frac{\partial \mathcal{S}}{\partial p}.
  \end{equation}
Two interesting oarticular instances of the present dynamics:

\textbf{(1)} Let us take the Hamiltonian function $\mathcal{H}=p^2+V(q)$ to be the total energy of the system and $S=S(q)$ then the system \eqref{qdot-1} reduces to 
  \begin{equation}\label{2ndODE-1}
  	\ddot{q}-S_{qq} \dot{q}-V_q=0.
  	 \end{equation}
We cite \cite{Mielke2011} for a more elegant geometrization of the second order ODE \eqref{2ndODE-1}  in terms of the GENERIC framework. \textbf{(2)}  As another naive application of the dissipative system \eqref{qdot-1}, we consider a general Hamiltonian function $H$ but choose $S=ap^2/2$ for a scalar $a$, then a fairly straight-forward calculation gives \eqref{qdot-1} that 
   \begin{equation}\label{dyn-exp}
\dot{q}=\frac{\partial \mathcal{H}}{\partial p}, \qquad \dot{p}=-ap+\frac{\partial \mathcal{H}}{\partial q}.
  \end{equation}
This is the conformal Hamiltonian dynamics as described in \cite{Perl}. To see the geometry behind this dynamics, at first consider the vector field $X=(\partial \mathcal{H}/\partial p) \partial_q+(-ap+{\partial \mathcal{H}}/\partial q )\partial_ p$ generating \eqref{dyn-exp} and then define the symplectic two-form $\Omega=dq \wedge dp$. A Hamiltonin vector field preserves the symplectic two-forms but the vector field $X$ satisfies 
  \begin{equation}
  \mathfrak{L}_X \Omega= d\big(d\mathcal{H}-apdq\big) = adq \wedge dp=a \Omega
  \end{equation}	 
  where $\mathfrak{L}$ denotes the Lie derivative.
This reads that $X$ preserved the symplectic two-form up to some conformal factor $a$. 

\subsection{Coupling of Two Heisenberg Algebras and Matched Lie-Poisson Dynamics}

Consider 
two $3D$ Heisenberg algebras $\mathfrak{g}$ and $\mathfrak{h}$. We choose a basis 
\begin{equation}\label{Hei-basis}
\{ \mathbf{e}_1,\mathbf{e}_2,\mathbf{e}_3\}\in \mathfrak{g}, \qquad \{ \mathbf{f}_1,\mathbf{f}_2,\mathbf{f}_3\}\in \mathfrak{h}.
\end{equation}
To fix the notation, we record here the coordinate realizations of two arbitrary elements $\xi$ and $\xi'$ in $\mathfrak{g}$, and two arbitrary elements $\eta$ and $\eta'$ in $\mathfrak{h}$ as follows
\begin{equation} \label{elementsV+W}
\begin{split}
&\xi=\xi^1 \mathbf{e}_1+ \xi^2 \mathbf{e}_2+ \xi^3 \mathbf{e}_3, \qquad \xi'=\xi^{1'} \mathbf{e}_1+ \xi^{2'} \mathbf{e}_2+ \xi^{3'} \mathbf{e}_3, \\
&\eta=\eta^1 \mathbf{f}_1+ \eta^2 \mathbf{f}_2+ \eta^3 \mathbf{f}_3, \qquad \eta'=\eta^{1 '} \mathbf{f}_1+\eta^{2'} \mathbf{f}_2+ \eta^{3'} \mathbf{f}_3.
\end{split}
\end{equation} 
In accordance with these choices, the Lie algebra brackets on $\mathfrak{g}$ and $\mathfrak{h}$ can be exhibited in the form
\begin{equation}\label{bracketsonVW}
	[\xi,\xi']=(\xi^1 \xi^{2'} -\xi^{1 '} \xi^2 )\mathbf{e}_3, \qquad {[\eta,\eta']=(\eta^1 \eta^{2'}-\eta^{1'} \eta^2)\mathbf{f}_3}
\end{equation} 
respectively. 
Obeying the notation in \eqref{Lieact}, and referring to the coordinate realizations \eqref{elementsV+W}, we introduce a right action of $\mathfrak{g}$ on $\mathfrak{h}$, and a left action $\mathfrak{h}$ on $\mathfrak{g}$ as \cite{Ma20}
\begin{equation} \label{actionofHeisenberg}
\begin{split}
\vartriangleright &:\mathfrak{h}\otimes \mathfrak{g}\rightarrow \mathfrak{g}, \qquad \eta \vartriangleright \xi=-\eta^1 \xi^2 \mathbf{e}_3 	,\\
\vartriangleleft &:\mathfrak{h}\otimes \mathfrak{g}\rightarrow \mathfrak{g}, \qquad \eta \vartriangleleft \xi=-\xi^{1}\eta^2\mathbf{f}_3	
	.
	\end{split}
\end{equation}
This reads that the action constants $R_{12}^3=-1$,  $L_{12}^3=-1$ while the rest are zero. It is straight forward to prove that these actions are indeed satisfying the matched pair Lie algebra conditions in \eqref{comp-mpa}. 
Accordingly, for the present coupling, the matched pair Lie algebra bracket \eqref{mpla} on $\G{g}\bowtie \G{h}$ is computed to be
\begin{equation} \label{heisliebrac}
	[(\xi,\eta),(\xi',\eta')]_{\bowtie}=(\xi^1\xi^{2'}-\xi^2 \xi^{1'}-\eta^1\xi^{2'}+\eta^{1'} \xi^2)\mathbf{e}_3 \oplus (\eta^1\eta^{2'}-\eta^2 \eta^{1'}-\eta^2\xi^{1'}+\eta^{2'} \xi^1)\mathbf{f}_3.
\end{equation}

\textbf{As coupling of two cocycle extensions.} Referring to the basis \eqref{Hei-basis},  Heisenberg algebras $\mathfrak{g}$ and $\mathfrak{h}$ can be realized as 2-cocycle extensions 
\begin{equation} \label{decomp-Hei}
V{_\varphi\rtimes}\mathfrak{l} :=\langle \mathbf{e_3} \rangle {_\varphi\rtimes} \langle \mathbf{e_1},\mathbf{e_2} \rangle, \qquad
W {_\phi\rtimes} \mathfrak{k}:=\langle \mathbf{f_3} \rangle{_\phi\rtimes} \langle \mathbf{f_1},\mathbf{f_2} \rangle,
\end{equation} 
in the light of the following 2-cocycles 
\begin{equation}  \label{eps-iota}
\begin{split}
\varphi&:\mathfrak{l} \times  \mathfrak{l} \longrightarrow V, \qquad 
\varphi(\mathbf{e}_1,\mathbf{e}_1)=0, \quad \varphi(\mathbf{e}_2,\mathbf{e}_2)=0, \quad \varphi(\mathbf{e}_1,\mathbf{e}_2)=\textbf{e}_3,
\\
\phi&:\mathfrak{k} \times  \mathfrak{k} \longrightarrow W, \qquad
\phi(\mathbf{f}_1,\mathbf{f}_1)=0, \quad \phi(\mathbf{f}_2,\mathbf{f}_2)=0, \quad \phi(\mathbf{f}_1,\mathbf{f}_2)=\mathbf{f}_3
\end{split}
\end{equation}
while the left actions in \eqref{actionsoflk} are zero. 
We now examine Proposition \ref{Prop-Gokhan} for the matched pair of two Heisenberg algebras. That is, we show that $\G{g}\bowtie \G{h}$ is a 2-cocycle extension by itself. For this, following the notation in Subsection \ref{centrallyextend}, we take that both  \textbf{(1)} mutual actions $\blacktriangleleft$ and $\blacktriangleright$ of $\mathfrak{l}$ and $\mathfrak{k}$ on each other exhibited in \eqref{Lieact-k-l}, \textbf{(2)} cross actions $\curvearrowleft$ and $\curvearrowright$ in  \eqref{actionsofkl} are all zero mappings whereas \textbf{(3)} mappings $\epsilon$ and $\iota$ given in \eqref{epsiot} are computed to be 
\begin{equation}
\epsilon(\mathbf{f_1},\mathbf{e_2})=-\mathbf{e_3} \qquad	\iota(\mathbf{f_2},\mathbf{e_1})=-\mathbf{f_3},
\end{equation}
while the rest are zero. Referring to these choices, it is now straight forward to observe that both the left action $\vartriangleright$ and the right action $\vartriangleleft$ of the Heisenberg algebras given in the display \eqref{actionofHeisenberg} can be recasted in the form of 
\eqref{left-comp}. This reads that the matched Lie algebra bracket \eqref{heisliebrac} on Heisenberg algebras is indeed fitting \eqref{bracketlk-1}. 

On the other, now we show that the matched pair $\G{g}\bowtie \G{h}$ is a 2-cocycle extension of $\G{l}\bowtie \G{k}=\langle \mathbf{e}_1, \mathbf{e}_2\rangle \bowtie \langle \mathbf{f}_1, \mathbf{f}_2\rangle $ over its representation space $V\oplus W:=\langle \mathbf{e}_3 \rangle \oplus \langle \mathbf{f}_3 \rangle $ according to the decomposition in \eqref{decomp-Hei}. We remark here that, the mutual actions in $\G{l}\bowtie \G{k}$ are trivial since we take that $\blacktriangleleft$ and $\blacktriangleright$ of $\mathfrak{l}$ and $\mathfrak{k}$ as zero mappings. Nevertheless, we determine the left action $\rrbiprod$ of the Lie algebra  $\G{l}\bowtie \G{k}$ onto $V\oplus W$ as trivial. We introduce a cocycle
\begin{equation}
\begin{split}
\Theta&: (\mathfrak{l} \bowtie \mathfrak{k}) \times (\mathfrak{l} \bowtie \mathfrak{k}) \longrightarrow V \oplus W
\\&:\big (\langle \mathbf{e}_1, \mathbf{e}_2\rangle \bowtie \langle \mathbf{f}_1, \mathbf{f}_2\rangle \big ) \times \big (\langle \mathbf{e}_1, \mathbf{e}_2\rangle \bowtie \langle \mathbf{f}_1, \mathbf{f}_2\rangle \big )  \longrightarrow \langle \mathbf{e}_3 \rangle \oplus \langle \mathbf{f}_3 \rangle,
\end{split}
\end{equation}
which, referring to the formula in \eqref{theta}, can be exploited as
\begin{equation} \label{Hei-coco}
\begin{split}
\Theta(\xi^1\mathbf{e}_1+\xi^2\mathbf{e}_2+\eta^1\mathbf{f}_1+\eta^2\mathbf{f}_2,\xi^{1'}\mathbf{e}_1+\xi^{2'}\mathbf{e}_2+\eta^{1'}\mathbf{f}_1+\eta^{2'}\mathbf{f}_2)&=(\xi^1\xi^{2'}-\xi^{1'}\xi^2-\eta^1\xi^{2'}+\eta^{1'}\xi^2)\mathbf{e}_3 \\&\qquad  \oplus (\eta^1\eta^{2'}-\eta^2 \eta^{1'}-\eta^2\xi^{1'}+\eta^{2'} \xi^1)\mathbf{f}_3.
\end{split}
\end{equation}
So, the trivial action $\rrbiprod$ of $\G{l}\bowtie \G{k}$ on $V\oplus W$ with the cocycle \eqref{Hei-coco}, it is immediate to see the matched Lie algebra bracket \eqref{heisliebrac} is also a 2-cocycle extension bracket by obeying \eqref{AAAAA}. To sum up, we can argue that coupling of two Heisenberg algebras is a nontrivial example of Proposition \ref{Prop-Gokhan}.

\textbf{Matched Lie-Poisson equations.} Let us first fix the notation for the dual elements as 
\begin{equation} \label{daul-elements-exp-1}
\mu=\mu_1 \mathbf{e}^1+\mu_2 \mathbf{e}^2+\mu_3 \mathbf{e}^3 \in \mathfrak{g}^*, \qquad \nu=\nu_1 \mathbf{f}^1+\nu_2 \mathbf{f}^2+\nu_3 \mathbf{f}^3 \in \mathfrak{h}^*.
\end{equation}
The dual actions $\overset{\ast}{\vartriangleleft}$ in    (\ref{xi-star}) and $\overset{\ast}{\vartriangleright}$ in ( \ref{eta-star}), the cross actions $\mathfrak{a}^*$ in (\ref{a*}) and $\mathfrak{b}^*$ in (\ref{b*}) are computed to be 
 \begin{equation}\label{dualactionsofheis}
 \mu \overset{\ast}{\vartriangleleft} \eta=-\mu_3\eta^1 \mathbf{e}^2, \qquad \xi \overset{\ast}{\vartriangleright} \nu=-\nu_3 \xi^1 \mathbf{f}^2
, \qquad 
 	\mathfrak{b}^*_\xi \mu=-\mu_3 \xi^2 \mathbf{f}^1, \qquad \mathfrak{a}^*_{\eta} \nu=-\nu_3 \eta^2 \mathbf{e}^1.
 \end{equation}
Then the matched Lie-Poisson equations (\ref{LPEgh}) generated by a Hamiltonian function 
$\mathcal{H}=\mathcal{H}(\mu,\nu)$ on $\mathfrak{g}^* \oplus \mathfrak{h}^*$ is computed to be 
\begin{equation} \label{Ray-mot-1}
\begin{split}
\dot{\mu}_1&=-\nu_3 \frac{\partial \mathcal{H}}{\partial \nu_2}+\mu_3 \frac{\partial \mathcal{H}}{\partial \mu_2},  \hspace{1.2cm} 
\dot{\mu}_2=\mu_3 \frac{\partial \mathcal{H}}{\partial \nu_1}-\mu_3 \frac{\partial \mathcal{H}}{\partial \mu_1}, 
\hspace{1cm}
\dot{\mu}_3=0
\\ 
 \dot{\nu}_1&=-\mu_3 \frac{\partial \mathcal{H}}{\partial \mu_2}+\nu_3 \frac{\partial \mathcal{H}}{\partial \nu_2}, 
 \hspace{1cm} 
 \dot{\nu}_2=-\nu_3 \frac{\partial \mathcal{H}}{\partial \nu_1}+\nu_3 \frac{\partial \mathcal{H}}{\partial \mu_1},
 \hspace{1.1cm} 
 \dot{\nu}_3=0.
\end{split}
\end{equation} 
Let us study some particular instances on this system of equations. In equation (\ref{Ray-mot-1}), if we choose $\mu_1=q$, $\mu_2=p$, $\nu_1=u$, $\nu_2=w$, and $\mu_3=\nu_3=1$ then  we arrive at the Hamilton's equations in a coupled form
\begin{equation}\label{Lie-poissonqp}
\begin{split}
\dot{q}&= \frac{\partial \mathcal{H}}{\partial p}-\frac{\partial \mathcal{H}}{\partial w}, 
\hspace{1.2cm}
\dot{p}=\frac{\partial \mathcal{H}}{\partial u}-\frac{\partial \mathcal{H}}{\partial q},
\\
 \dot{u}&= -\frac{\partial \mathcal{H}}{\partial p}+\frac{\partial \mathcal{H}}{\partial w}, 
\hspace{1cm}
\dot{w}=-\frac{\partial \mathcal{H}}{\partial u}+\frac{\partial \mathcal{H}}{\partial q}.
\end{split}
\end{equation}

\textbf{Rayleigh type dissipation.} For the present case, in order to add  Rayleigh type dissipation term to the Lie-Poisson dynamics on $\mathfrak{g}^* \oplus \mathfrak{h}^*$ as
\begin{equation}
\mathfrak{g}^* \oplus \mathfrak{h}^*  \longrightarrow \mathfrak{g} \bowtie  \mathfrak{h}, \qquad \mu\oplus\nu \quad \mapsto ( \Upsilon^{\G{g}^1}\mathbf{e}_1+\Upsilon^{\G{g}^2}\mathbf{e}_2+\Upsilon^{\G{g}^3}\mathbf{e}_3  ,  \Upsilon^{\G{h}^1}\mathbf{f}_1+\Upsilon^{\G{h}^2}\mathbf{f}_2+ \Upsilon^{\G{h}^3}\mathbf{f}_3 ).
\end{equation}
Referring to the matched Lie-Poisson equation of motions (\ref{eqofmoofRay}) for the dual spaces $\mathfrak{g}^*\oplus \mathfrak{h}^*$  can be obtained as follows
\begin{equation}
\begin{split}
&\dot{\mu}_1-\mu_3\frac{\partial \mathcal{H}}{\partial \mu_2} +\nu_3 \frac{\partial \mathcal{H}}{\partial \nu_2}=\nu_3\Upsilon^{\G{h}^2}-\mu_3\Upsilon^{\G{g}^2}, \qquad \dot{\mu}_2-\mu_3 \frac{\partial \mathcal{H}}{\partial \nu_1}+\mu_3 \frac{\partial \mathcal{H}}{\partial \mu_1}=\mu_3\Upsilon^{\G{g}^1}-\mu_3 \Upsilon^{\G{h}^1}, \qquad \dot{\mu}_3=0,
\\
&\dot{\nu}_1 + \mu_3 \frac{\partial \mathcal{H}}{\partial \mu_2}-\nu_3 \frac{\partial \mathcal{H}}{\partial \nu_2}=\mu_3\Upsilon^{\G{g}^2}-\nu_3\Upsilon^{\G{h}^2},\qquad 
\dot{\nu}_2 +\nu_3 \frac{\partial \mathcal{H}}{\partial \nu_1}-\nu_3 \frac{\partial \mathcal{H}}{\partial \mu_1}=\nu_3\Upsilon^{\G{h}^1}-\nu_3 \Upsilon^{\G{g}^1}, \qquad \dot{\nu}_3=0.
\end{split}
\end{equation}
\textbf{Double Bracket dissipation.} For the present discussion, 
matched double bracket 
\eqref{matchdouble} takes the particular form 
\begin{equation}\label{matchdoublebrofheis}
\begin{split}
(\mathcal{F},\mathcal{H})^{(mD)}(\mu\oplus \nu)=&\mu_3^2 \frac{\partial \mathcal{F}}{\partial \mu_1} \frac{\partial \mathcal{H}}{\partial \mu_1}+(2\mu^2_3+2\nu_3\mu_3+\nu^2_3)\frac{\partial \mathcal{F}}{\partial \mu_2} \frac{\partial \mathcal{H}}{\partial \mu_2}-(\mu_3+\nu_3)^2\frac{\partial \mathcal{F}}{\partial \nu_2} \frac{\partial \mathcal{H}}{\partial \mu_2}\\
&+(2\nu^2_3+2\mu_3\nu_3+\mu^2_3)\frac{\partial \mathcal{F}}{\partial \nu_2} \frac{\partial \mathcal{H}}{\partial \nu_2} 
-(\mu_3+\nu_3)^2 \frac{\partial \mathcal{F}}{\partial \mu_2} \frac{\partial \mathcal{H}}{\partial \nu_2}+\nu^2_3 \frac{\partial \mathcal{F}}{\partial \nu_1} \frac{\partial \mathcal{H}}{\partial \nu_1}
.
\end{split}
\end{equation}
Then the irreversible dynamics  \eqref{eqofmoofmatchdouble} is computed to be
\begin{equation}
\begin{split}
\dot{\mu}_1&=\frac{\partial \mathcal{S}}{\partial \mu_1}\mu_3^2, \hspace{1cm} 
\dot{\mu_2}=\frac{\partial \mathcal{S}}{\partial \mu_2}\mu_3^2, \hspace{1cm}  \dot{\mu}_3=0   	
\\
 \dot{\nu}_1&=\frac{\partial \mathcal{S}}{\partial \nu_1} \nu_3^2,  \hspace{1cm}  \dot{\nu_2}=\frac{\partial \mathcal{S}}{\partial \nu_2} \nu_3^2, \hspace{1.2cm} \dot{\nu}_3=0.
\end{split}
\end{equation}
Let us collect the reversible Lie-Poisson dynamics in \eqref{Ray-mot-1} and the irreversible dynamics generated by $\mathcal{S}$ under the realm of the symmetric (double) bracket in \eqref{matchdoublebrofheis}.
Then the metriplectic equations of motion  are computed to be 
\begin{equation} \label{MD-Ex}
\begin{split}
 \dot{\mu}_1&=\mu_3 \frac{\partial \mathcal{H}}{\partial \mu_2}-\nu_3 \frac{\partial \mathcal{H}}{\partial \nu_2}+\frac{\partial \mathcal{S}}{\partial \mu_1}\mu_3^2 \hspace{1.3cm}
  \dot{\mu_2}=\mu_3 \frac{\partial \mathcal{H}}{\partial \nu_1}-\mu_3 \frac{\partial \mathcal{H}}{\partial \mu_1}+\frac{\partial \mathcal{S}}{\partial \mu_2}\mu_3^2,	
\\
 \dot{\nu}_1&=-\mu_3 \frac{\partial \mathcal{H}}{\partial \mu_2}-\nu_3 \frac{\partial \mathcal{H}}{\partial \nu_2}+\frac{\partial \mathcal{S}}{\partial \nu_1} \nu_3^2,
\hspace{1cm}
\dot{\nu}_2=\nu_3 \frac{\partial \mathcal{H}}{\partial \nu_1}+\nu_3 \frac{\partial \mathcal{H}}{\partial \mu_1}+\frac{\partial \mathcal{S}}{\partial \nu_2} \nu_3^2.
\end{split}
\end{equation}
After the choice $\mu_1=q$, $\mu_2=p$, $\nu_1=u$, $\nu_2=w$, and $\mu_3=\nu_3=1$ read the following system 
\begin{equation} \label{qdot}
\begin{split}
\dot{q}&=\frac{\partial \mathcal{H}}{\partial p}-\frac{\partial \mathcal{H}}{\partial w}+ \frac{\partial \mathcal{S}}{\partial q} , 
\hspace{1.3cm}  
\dot{p}=\frac{\partial \mathcal{H}}{\partial u}-\frac{\partial \mathcal{H}}{\partial q}+\frac{\partial \mathcal{S}}{\partial p}
\\
\dot{u}&=-\frac{\partial \mathcal{H}}{\partial p}-\frac{\partial \mathcal{H}}{\partial w}+ \frac{\partial \mathcal{S}}{\partial u} , 
\hspace{1cm}
  \dot{w}=\frac{\partial \mathcal{H}}{\partial u}+\frac{\partial \mathcal{H}}{\partial q}+ \frac{\partial \mathcal{S}}{\partial w}.
\end{split}
\end{equation}
Let us more particularly take the Hamiltonian function $\mathcal{H}=1/2(p^2+w^2)+V(q,u)$ to be the total energy of the system then from (\ref{qdot}) we obtain
\begin{equation}\label{2ndODE}
\begin{split}
\ddot{q}-\mathcal{S}_{qq}\dot{q}+2V_q+\mathcal{S}_w-\mathcal{S}_p=0
\\
\ddot{u}-\mathcal{S}_{uu}\dot{u}+2V_u+\mathcal{S}_w+\mathcal{S}_p=0
\end{split}
\end{equation}

  \section{Illustration: Rigid Bodies} \label{Diss-Gen-Exp}

We consider two identical three dimensional Euclidean spaces denoted by $\mathfrak{g}=\mathbb{R}^{3}$ and $\mathfrak{h}=\mathbb{R}^{3}_\mathbf{k}$. These are  Lie algebras equipped with the following Lie algebra structures
\begin{equation}\label{bracketsonsu2}
\left [ \xi,\xi' \right ]_\mathfrak{\mathbb{R}^{3}}= \xi \times \xi', \qquad \left [ \eta, \eta' \right ]_{\mathbb{R}^{3}_\mathbf{k}}=\mathbf{k} \times (\eta \times \eta'),
\end{equation}
respectively, \cite{Ma90}. 
Here, $\times$ denotes cross product on $ \mathbb{R}^{3}$, and $\textbf{k}$ is the unit vector $(0,0,1)$ available in the standard basis on $ \mathbb{R}^{3}$. Notice that, the subscript $\mathbb{R}^{3}_\mathbf{k}$ is just to remind that the bracket is not classical cross product on $\mathbb{R}^3$ instead its the second bracket in \eqref{bracketsonsu2}. We obey the notations presented in Subsection \ref{doublecross}. The coadjoint action of $\mathfrak{g}$ on it dual $\mathfrak{g}^*\simeq \mathbb{R}^{3}$, and the coadjoint action of $\mathfrak{h}$ on it dual $\mathfrak{h}^*\simeq \mathbb{R}^{3}$ are computed to be
\begin{equation}
ad^*_\xi \mu =\mu \times \xi, \qquad ad^*_{\eta} \nu =(\nu \cdot \eta)\mathbf{k}-\nu(\mathbf{k}\cdot \eta).
\end{equation}

In order to employ the notation in \eqref{nonbracket2-mp}, we introduce the following basis  $(\mathbf{e}_{1},\mathbf{e}_{2},\mathbf{e}_{3})$ on $\mathbb{R}^{3}$ and $(\mathbf{f}_{1},\mathbf{f}_{2},\mathbf{f}_{3})$ on $\mathbb{R}^{3}_\mathbf{k}$ 
\begin{equation}
\mathbf{e}_{1}=\mathbf{f}_1=\left ( 1,0,0 \right ), \quad 
\mathbf{e}_{2}=\mathbf{f}_2=\left ( 0,1,0 \right ),\quad  
\mathbf{e}_{3}=\mathbf{f}_3=\textbf{k}=\left ( 0,0,1 \right ).
\end{equation}
So that, the strcuture constants in \eqref{nonbracket2-mp} turn out to be 
\begin{equation}\label{C}
C_{12}^3=C_{31}^2=C_{23}^1=1, \qquad D_{13}^1=D_{23}^2=1,
\end{equation}
where the rest is zero. 

\textbf{Mutual actions.} 
Left action of $\mathbb{R}^{3}_\textbf{k}$ on $\mathbb{R}^{3}$, and right action $\mathbb{R}^{3}$ on $\mathbb{R}^{3}_\textbf{k}$ are defined through 
\begin{equation} \label{actionsofsu2}
\begin{split}
\vartriangleleft&: \mathbb{R}^{3}_\textbf{k} \times  \mathbb{R}^{3}  \rightarrow \mathbb{R}^{3}_\textbf{k}, \qquad  \eta \vartriangleleft \xi  :=\eta \times \xi \\
\vartriangleright&: \mathbb{R}^{3}_\textbf{k} \times  \mathbb{R}^{3}   \rightarrow \mathbb{R}^{3}, \qquad  \eta \vartriangleright \xi  :=
 \eta \times (\xi \times \textbf{k}).  
\end{split}
\end{equation}
It is straight forward to prove that these actions are indeed satisfying the matched pair Lie algebra conditions in \eqref{comp-mpa}. Referring to the notations fixed in \eqref{local-act}, the constants determining the actions (\ref{actionsofsu2}), can be computed to be 
\begin{equation}\label{L}
L_{11}^3=L_{22}^3=-1 ,\qquad L_{32}^2=L_{31}^1=1, \qquad R_{12}^3=R_{31}^2=R_{23}^1=-1,\qquad R_{13}^2=R_{21}^3=R_{32}^1=1,
\end{equation}
with the rest is zero. According to Theorem \ref{mp-prop}, one can define the matched pair Lie algebra $\mathbb{R}^{3}\bowtie \mathbb{R}^{3}_\textbf{k} $ equipped with the matched Lie bracket \eqref{mpla} calculated as
\begin{equation} \label{cerc}
\left [  \xi\oplus \eta , \xi'\oplus \eta'  \right ]_{\bowtie}
=\big(\xi \times\xi'+\eta \times(\xi'\times \mathbf{k})-\eta'\times  (\xi \times \mathbf{k})\big) \oplus\big(
 \mathbf{k}\times(\eta\times\eta')+\eta\times \xi' - \eta'\times \xi  \big).
\end{equation}

\textbf{Matched Lie-Poisson dynamics.} 
The Lie-Poisson dynamics on the dual space $\G{g}^* \times  \G{h}^* \simeq \mathbb{R}^{3}\times \mathbb{R}^{3} $ can be obtained by employing Proposition \ref{ad-*-prop}. To have this, first we compute the duals of the actions \eqref{actionsofsu2}, in a respective order, as 
\begin{equation}\label{dualactionsofsu}
\mu \overset{\ast}{\vartriangleleft} \eta=\mu(\eta\cdot \mathbf{k})-(\mu\cdot \mathbf{k})\eta, 
\qquad 
 \xi \overset{\ast}{\vartriangleright} \nu =\xi \times \nu
\end{equation}
whereas the mappings \eqref{a*} and \eqref{b*}
are
\begin{equation}\label{dualactionsofsu-2}
\mathfrak{a}^*_{\eta} \nu=\nu \times \eta, \qquad \mathfrak{b}^*_\xi \mu=(\mu \cdot \xi)\mathbf{k}-(\mu \cdot \mathbf{k})\xi.
\end{equation}
Notice that, the Lie-Poisson formulation on the dual of $\G{g}^*$ corresponds to the the rigid body dynamics in $3D$, \cite{Ho08,MaRa13}. So we can consider the matched pair dynamics in this setting as the coupling of two bodies in $3D$. Since the dynamics for the rigid bodies are given by minus Lie-Poisson equation, we refer minus Lie-Poisson bracket and minus Lie-Poisson equations. 
Assume the following coordinates $\mu=(\mu_1,\mu_2,\mu_3)$ and  $\nu=(\nu_1,\nu_2,\nu_3)$. 
Recall that, in \eqref{cosym2}, the explicit realization  of the matched Lie-Poisson bivector, defined for the matched pair Lie-Poisson bracket \eqref{Lie-pois-double-non-bracket-tc}, has already been given. In the following table, we exhibit the coefficients of the matched Lie-Poisson bivector for the present case. 
 \begin{center}
 \begin{tabular}{SSSSS} 
 	\toprule[1.8pt]
	{$\Lambda$} & {$\Lambda_{\alpha \beta}$} & {$\Lambda_{ \alpha b }$} & {$\Lambda_{ a \beta  }$} & {$\Lambda_{ a b }$}    \\ \midrule[1.5pt] 
	{$\Lambda_{11}$}  & 0 & {$\mu_3$} &{$-\mu_3$} &0  \\
	{$\Lambda_{12}$}  &{-$\mu_3$}   & {-$\nu_3$} &{$\nu_3$}   & 0    \\
	 {$\Lambda_{13}$} & {$\mu_2$}  & {$\nu_2-\mu_1$} & {$-\nu_2+\mu_1$}  & {$\nu_1$}   \\ \midrule
	 {$\Lambda_{21}$} &{$\mu_3$}   & {$\nu_3$} & {$-\nu_3$}  & 0    \\ 
	 {$\Lambda_{22}$} & 0  & {$\mu_3$} & {$-\mu_3$}   & 0   \\
	 $\Lambda_{23}${} & {$\mu_1$}  & {-$\nu_1-\mu_2$} & {$\nu_1+\mu_2$}  & {$\nu_2$}  \\ \midrule
	 {$\Lambda_{31}$} & {-$\mu_2$}  & {-$\nu_2$} & {$\nu_2$}  & {$-\nu_1$}    \\
	 {$\Lambda_{32}$} &  {-$\mu_1$} & ${\nu_1}$ & ${-\nu_1}$  & {$-\nu_2$}      \\ 
	 {$\Lambda_{33}$} & 0  & 0 & 0  & 0  \\
	 \bottomrule
\end{tabular}
\end{center}	
We remark that the first column determined the Lie-Poisson bivector on $\G{g}^*$ whereas the last one is the Lie-Poisson bivector on $\G{h}^*$ whereas the second and third columns are manifesting the mutual actions \eqref{actionsofsu2}.   Collecting all these results, we compute the matched Lie-Poisson equations (\ref{LPEgh}) generated by a Hamiltonian function $\mathcal{H}$ on the dual space as 
\begin{equation}
\begin{split} 
\dot{\mu}&=\frac{\partial \mathcal{H}}{\partial\mu}\times \mu+(\frac{\partial \mathcal{H}}{\partial \nu}\cdot \mathbf{k})\mu-(\mu\cdot \mathbf{k})\frac{\partial \mathcal{H}}{\partial \nu}-\nu \times \frac{\partial \mathcal{H}}{\partial \nu}, \\
\dot{\nu}&=(\mathbf{k} \cdot \frac{\partial \mathcal{H}}{\partial \nu})\nu-(\nu \cdot \frac{\partial \mathcal{H}}{\partial \nu})\mathbf{k}+\frac{\partial \mathcal{H}}{\partial \mu}\times \nu+(\mu \cdot \mathbf{k})\frac{\partial \mathcal{H}}{\partial \mu}-(\mu \cdot \frac{\partial \mathcal{H}}{\partial \mu})\mathbf{k}.
\end{split}
\end{equation}

In Section \ref{Sec-C-DS}, couplings of various ways of  dissipations are listed. 
We examine now these couplings for the concrete example given in previous subsection.  

\textbf{Rayleigh dissipation.}
In this $3D$ framework, and for a linear operator
\begin{equation}
(\mathbb{R}^{3})^*\oplus (\mathbb{R}^{3}_{\mathbf{k}})^* \longrightarrow \mathbb{R}^{3} \bowtie \mathbb{R}^{3}_{\mathbf{k}}, \qquad \mu\oplus\nu \mapsto (\Upsilon^{\G{g}}(\mu)\oplus\Upsilon^{\G{h}}(\nu)),
\end{equation}
the matched Lie-Poisson systems with Rayleigh type dissipations, that is the system \eqref{eqofmoofRay}, takes the particular form
\begin{equation}
\begin{split}
&\dot{\mu}-\frac{\partial \mathcal{H}}{\partial \mu}\times \mu-\mu(\frac{\partial \mathcal{H}}{\partial \nu}\cdot \mathbf{k})+(\mu\cdot \mathbf{k})\frac{\partial \mathcal{H}}{\partial \nu}+\nu \times \frac{\partial \mathcal{H}}{\partial \nu}=\mu \times \Upsilon^{\G{g}}(\mu)-\mu(\Upsilon^{\G{h}}(\nu)\cdot \mathbf{k})+(\mu\cdot \mathbf{k})\Upsilon^{\G{h}}(\nu)+\nu \times \Upsilon^{\G{h}}(\nu)
\\
&\dot{\nu}-\nu(\mathbf{k}\cdot \frac{\partial \mathcal{H}}{\partial \nu})+(\nu \cdot \frac{\partial \mathcal{H}}{\partial \nu})\mathbf{k}-\frac{\partial \mathcal{H}}{\partial \mu}\times \nu -(\mu \cdot \mathbf{k})\frac{\partial \mathcal{H}}{\partial \mu}+(\mu \cdot \frac{\partial \mathcal{H}}{\partial \mu})\mathbf{k}  = (\nu \cdot \Upsilon^{\G{h}}(\nu))\mathbf{k}-\nu(\mathbf{k}\cdot \Upsilon^{\G{h}}(\nu))\\
& \hspace{9cm}- \Upsilon^{\G{h}}(\nu) \times \nu +(\mu \cdot \Upsilon^{\G{g}}(\mu))\mathbf{k}-(\mu \cdot \mathbf{k})\Upsilon^{\G{g}}(\mu).
\end{split}
\end{equation}		

\textbf{Cartan-Killing dissipation.} Determine 
the matched Cartan-Killing metric \eqref{CM-MP} as 
\begin{equation}
[\C{G}_{ij}]=\begin{bmatrix}
-4 & 0 & 0 &  0 & -5 & 0\\
0 & -4 & 0 & 5 & 0 & 0\\
0 & 0 & -4 & 0 & 0 & 0\\
0 & 5 & 0 & -2 & 0 & 0\\
-5 & 0 & 0 & 0 & -2 & 0\\
0 & 0 & 0 &  0 & 0 & 1 \\
\end{bmatrix}.
\end{equation}
Then, irreversible dynamics generated by a function $\C{S}$, presented in (\ref{eqofmoofcart}), is computed to be 
\begin{equation}
\begin{split}
&\dot{\mu}_1=-4\frac{\partial \mathcal{S}}{\partial \mu_1}+5\frac{\partial \mathcal{S}}{\partial \mu_2},
 \hspace{1cm} 
 \dot{\mu}_2=-4\frac{\partial \mathcal{S}}{\partial \mu_2}-5\frac{\partial \mathcal{S}}{\partial \mu_1},
  \hspace{1cm}
 \dot{\mu}_3=-4\frac{\partial \mathcal{S}}{\partial \mu_3},\\
&\dot{\nu}_1=-5\frac{\partial \mathcal{S}}{\partial \nu_2}-2\frac{\partial \mathcal{S}}{\partial \nu_1}, 
\hspace{1.1cm}
\dot{\nu}_2=5\frac{\partial \mathcal{S}}{\partial \nu_1}-2\frac{\partial \mathcal{S}}{\partial \nu_2},
 \hspace{1.38cm}
\dot{\nu}_3=\frac{\partial \mathcal{S}}{\partial \nu_3}.
\end{split}
\end{equation} 

\section{Acknowledgments}

This paper is a part of the project "Matched pairs of Lagrangian and Hamiltonian Systems" supported by T\"UB\.ITAK (the Scientific and Technological Research Council of Turkey) with the project number 117F426. 
All of the authors are grateful to T\"UB\.ITAK for the support.  

The first named author (OE) is grateful to Prof. Miroslav Grmela, Prof. Parta Guha, Prof. Michal Pavelka, and  Prof. Petr Vágner for enlightening discussions on GENERIC. We are grateful to Begüm Ateşli for discussions done on extension theories. 

\bibliographystyle{plain}

\end{document}